\newtheorem{theorem}{Theorem}
\newtheorem{lemma}{Lemma}
\newtheorem{corollary}{Corollary}
\def\k{{\mathbf{k}}}
\def\mathbi#1{\textbf{\em #1}}
\def\id{{\mathds{1}}}
\newcolumntype{x}[1]{>{\centering\hspace{0pt}}p{#1}}
\newcolumntype{L}[1]{>{\raggedright\arraybackslash}p{#1}}
\newcolumntype{C}[1]{>{\centering\arraybackslash}p{#1}}
\newcolumntype{R}[1]{>{\raggedleft\arraybackslash}p{#1}}
\begin{document}

\title{Squaring the fermion: The threefold way and the fate of zero modes}

\begin{abstract}
We investigate topological properties and classification of mean-field theories of stable bosonic systems. Of the three standard classifying symmetries, only time-reversal represents a real symmetry of the many-boson system, while the other two, particle-hole and chiral, are simply constraints that manifest as symmetries of the effective single-particle problem. For gapped systems in arbitrary space dimension we establish {\it three fundamental no-go theorems that prove the absence of: parity switches, symmetry-protected-topological quantum phases, and localized bosonic zero modes under open boundary conditions}. We then introduce a squaring, kernel-preserving map connecting non-interacting Hermitian theories of fermions and stable boson systems, which serves as a playground to reveal the role of topology in bosonic phases and their localized midgap boundary modes. Finally, we determine the symmetry classes inherited from the fermionic tenfold-way classification, unveiling an elegant threefold-way topological classification of non-interacting bosons. We illustrate our main findings in one- and two-dimensional bosonic lattice and field-theory models. 
\end{abstract}

\author{Qiao-Ru Xu}
\affiliation{\mbox{Department of Physics, Indiana University, Bloomington, Indiana 47405, USA}}

\author{Vincent P. Flynn}
\affiliation{\mbox{Department of Physics and Astronomy, Dartmouth College, 6127 Wilder Laboratory, Hanover, New Hampshire 03755, USA}} 

\author{Abhijeet Alase}
\affiliation{Institute for Quantum Science and Technology, and Department of Physics and Astronomy, \\
University of Calgary, Calgary, AB, T2N 1N4, Canada}

\author{Emilio Cobanera} 
\affiliation{\mbox{Department of Mathematics and Physics, SUNY Polytechnic Institute, 100 Seymour Rd, Utica, NY 13502, USA }}
\affiliation{\mbox{Department of Physics and Astronomy, Dartmouth College, 6127 Wilder Laboratory, Hanover, New Hampshire 03755, USA}} 

\author{Lorenza Viola}
\affiliation{\mbox{Department of Physics and Astronomy, Dartmouth College, 6127 Wilder Laboratory, Hanover, New Hampshire 03755, USA}} 

\author{Gerardo Ortiz} 
\affiliation{\mbox{Department of Physics, Indiana University, Bloomington, Indiana 47405, USA}}
\affiliation{Indiana University Quantum Science and Engineering Center, Bloomington, IN 47408, USA}

\date{\today}
\maketitle


\section{Introduction}

Weakly interacting many-body systems of fermions or bosons can be described approximately by an effectively non-interacting (mean-field) theory as long as their 
equilibrium states 
are adiabatically connected \cite{Gell-Mann51}, 
and no phase transition separates them. This is the essence of Landau's quasiparticle framework, where the symmetries of the system define the principles behind matter organization and its 
 elementary excitations. Recently, another organizing principle, linked to topology, was recognized as fundamental to characterize hidden non-local order and the resilience of  localized excitations against local perturbations. The ``tenfold way" or ``topological classification"\cite{Schnyder,Kitaev2009,Ryu} of mean-field 
(free-)fermion Hamiltonians asserts that one cannot adiabatically connect, while preserving certain classifying symmetries, topologically inequivalent gapped systems.  For \emph{stable} free-boson Hamiltonians, which (like their fermionic counterparts) are bounded from below, 
what is the equivalent result? 
We address this and related questions in full generality. 

One of the most remarkable consequences of the topological classification of free-fermion systems is the bulk-boundary correspondence\cite{Chiu,Alase20}. When a gapped free-fermion system is not adiabatically connected to a topologically trivial free-fermion system, the obstruction to the deformation is diagnosed by a bulk topological invariant assuming different values for the two phases. The bulk-boundary correspondence relates the value of this invariant to the number and properties of midgap states (in one dimension) or surface bands of the systems subject to open boundary conditions (BCs). In every space dimension $d$, there 
are precisely five classes of systems for which the bulk-boundary correspondence specifically predicts zero modes (ZMs). Those ZMs often show remarkable localization properties. The topologically mandated Majorana ZMs of 
superconductors\cite{Read00,Kitaev2001,Deng2012,Deng2014}, in particular,  are a source of endless fascination\cite{Alicea, Beenakker}.   
Are there topologically mandated bosonic ZMs? What are the algebraic, localization, and stability properties of bosonic ZMs? In this paper we investigate these issues in detail. Since much of what is known about fermionic ZMs was learned from the topological classification by way of the bulk-boundary correspondence, we 
systematically follow 
an analogous line of reasoning for bosons. The outcome of this analysis will be a series of no-go theorems. 

Similar to the fermionic case, our starting point is the identification of 
the classifying internal symmetries of free-boson systems. Of the three classifying conditions of the tenfold way (time reversal, particle-hole, and chiral), we show that \emph{only time reversal can be related to a (many-body) symmetry} of the free-boson system. In contrast to fermions\cite{Chiu}, for bosons the particle-hole and chiral classifying conditions cannot be associated to many-body symmetries.  Particle-conserving systems are effectively well-described by Hermitian single-particle matrices or operators that may belong to any of the ten symmetry classes of the tenfold way. However,
these free-boson systems Bose-condense and are generically gapless in the 
thermodynamic limit. Particle non-conserving free-boson systems, on the other hand, 
may display gapped phases. 
These systems are analyzed in terms of non-Hermitian 
effective  Bogoliubov-de Gennes (BdG) matrices or operators that satisfy a particle-hole
constraint\cite{Blaizot}. 
A symmetry of an ensemble of effective BdG matrices is a ``pseudo-unitary" matrix  (in a sense that will be made precise later) which commutes with every member of the ensemble. There is a special class of many-body linear symmetries that is in direct correspondence with pseudo-unitary symmetries. Many-body time-reversal symmetry descends 
into the product of a pseudo-unitary matrix and complex conjugation, and this product commutes with the effective BdG Hamiltonian. 

Knowing the gapped stable free-boson ensembles and the symmetries at play, we proceed to investigate symmetry-preserving adiabatic deformations. Our first result is a no-go theorem for boson parity switches. Fermionic parity can be odd or even depending on the topological nature of the state and BCs. In contrast, it is typically even for topologically trivial superfluid phases, regardless of BCs. Indeed, fermion parity switches can be used as indicators for topological transitions 
also in interacting particle-conserving fermionic systems\cite{Ortiz14, Ortiz16}. Similar to fermions, bosonic pairing terms break the symmetry of particle-conservation down to the symmetry of boson parity. However, unlike fermions, we will show that the \emph{boson parity can only be even} in the ground manifold of a gapped free-boson system. Our second no-go theorem shows explicitly that any two gapped free-boson systems are adiabatically connected, regardless of symmetry constraints. It is a ``no-go" result in the sense that it forbids non-trivial symmetry-protected topological (SPT) phases of free-boson systems. And finally, our third no-go theorem states roughly that, for open BCs, a gapped free-boson system cannot possibly host surface bands inside the gap around zero energy or midgap ZMs. For fermions, the most localized ZMs are Majorana (self-adjoint) operators each localized on opposite boundaries. 

Does our third no-go theorem mean that localized bosonic analogues of Majorana ZMs are forbidden altogether? Certainly, this complicates matters considerably because one naturally looks for examples in systems subject to open BCs and that approach is doomed to failure. Fortunately, the precept that the ``square of a fermion is a boson'' comes to our rescue. We present a kernel-preserving map between fermions and bosons that provides a systematic way to generate bosonic Majorana (self-adjoint) ZMs. The square of a fermionic BdG Hamiltonian can be naturally reinterpreted as a bosonic (non-Hermitian) effective BdG Hamiltonian. This mapping does not preserve any spectral 
properties other than its kernel. Moreover, it allows a topological classification of ``squared ensembles'', leading to the threefold way of stable free boson systems away from zero energy.

Following our squaring-the-fermion map, we can construct a wealth of examples of bosonic Majorana ZMs by taking the square of a fermionic topological superconductor hosting Majorana ZMs. For example, we find that the square of the Kitaev chain hosts two exponentially localized (self-adjoint) ZMs and these modes can be normalized so that their commutator is equal to \(i\hbar\). In this sense, one can indeed ``split a single boson'' into two widely-separated halves. The consequences for the ground manifold of the system are, however, more dramatic for bosons than for fermions, because
an exact fermionic ZM implies two-fold degeneracy only, whereas an exact bosonic ZM implies macroscopic degeneracy if the number of particles is unconstrained.
How does the squaring map bypass our no-go result on ZMs? The answer is that to obtain ZMs in gapped free-boson systems, one must enforce BCs that are not open. The squaring procedure is a way to find both the required bulk and BCs. How robust are those ZMs? 
\textit{Krein stability theory}\cite{Schulz,Peano,Vincent20} helps us to rigorously 
address this question and conclude that bosonic Majorana ZMs are as exotic as they 
are fragile, something in complete agreement with our previous no-go theorems. 
Perhaps one could have intuited this since particle-hole and chiral symmetries are not many-body features of free-boson systems. 

The paper is organized as follows. Section\,\ref{mapping} covers the background, 
including a little-known necessary and sufficient condition 
for a free-boson system with pairing to be stable and a theorem\cite{ColpaZMs} that completely characterizes general bosonic ZMs, both {\it canonical} and {\it free-particle-like}. 
Since bosonic ZMs are central to this paper, we provide a self-contained proof of this theorem in Appendix \ref{proofs}, by using modern tools from indefinite linear algebra\cite{Gohberg}. 
In Sec. \ref{internalsymmetries} we discuss 
the many-boson underpinnings of the Altland-Zirnbauer (AZ) classifying conditions
(time-reversal, particle-hole, and chiral)\cite{AZ} and conclude that only time reversal
is associated to a many-body symmetry. In addition, from a many-body perspective, 
non-Hermitian ensembles of effective BdG Hamiltonians should be symmetry-reduced 
with respect to groups of pseudo-unitary matrices (details of the symmetry-reduction analysis 
can be found in Appendix \ref{afterandbefore}). 
Section \ref{zeromodes} is devoted to our three no-go theorems for gapped
free-boson systems: no parity switches, no SPT phases, and no localized ZMs. 
In Sec.\,\ref{classification} we introduce the squaring map from fermionic BdG Hamiltonians to bosonic effective BdG Hamiltonians and investigate it from the point of view of ensembles, symmetry and topological 
classifications, and bosonic topological invariants. Appendix \ref{simpleproof} presents a simple proof of the validity of 
our threefold way classification to general stable bosonic ensembles. Finally, 
in Sec.\,\ref{examples} we address the fate of bosonic Majorana ZMs in terms of examples obtained by the squaring map and discuss their stability. 
We close 
in Sec.\,\ref{summary} with a summary and
comments on the problem of characterizing SPT phases of interacting bosonic systems.

\section{Background}
\label{mapping}

\subsection{Free particles in second quantization}
\label{mapping2}

\resizebox{0.96\hsize}{!}{Consider first a general quadratic fermionic Hamiltonian} 
\begin{eqnarray}
\widehat{H}_f=\sum_{i=1}^N\sum_{j=1}^N
\big(K_{ij}c_i^\dagger c^{\;}_j+
\frac{1}{2}\Delta_{ij}c_i^\dagger c_j^\dagger+
\frac{1}{2}\Delta_{ij}^*c_jc_i\big),
\end{eqnarray}
for a system with $N$ single-particle states. 
The creation and annihilation operators 
$c_i^\dagger$ and $c^{\;}_i$ satisfy canonical anticommutation relations. 
Since, in addition, 
$\widehat{H}_f=\widehat{H}_f^\dagger$, it follows that $K=K^\dagger$ and 
$\Delta=-\Delta^\text{T}$. In terms of the Nambu array $\hat{\Psi}=\begin{bmatrix} c \\ c^\dagger\end{bmatrix}$ 
($\hat{\Psi}^\dagger=\begin{bmatrix} c^\dagger & c \end{bmatrix}$), with $\hat{\Psi}_i=c^{\;}_i$ and $\hat{\Psi}_{N+i}=c^\dagger_i$, $i=1,\ldots,N$, 
one can rewrite $\widehat{H}_f$ as
\begin{eqnarray}
\label{fermionicBdG}
\widehat{H}_f=\frac{1}{2}\hat{\Psi}^\dagger H_f\hat{\Psi}+\frac{1}{2}\mbox{tr}(K),
\end{eqnarray}
where $H_f=
\begin{bmatrix}
K	&\Delta\\
-\Delta^*	&-K^*
\end{bmatrix}$ is the (Hermitian) BdG Hamiltonian. Let $\tau_1\equiv\sigma_1\otimes\mathds{1}_N$, with \(\sigma_1=\begin{bmatrix} 0 & 1 \\ 1 & 0 \end{bmatrix}\). Because the Nambu array satifies the particle-hole constraint \(\hat\Psi=\tau_1\hat\Psi^{\dagger\, \text{T}}\), one finds that the BdG Hamiltonian satisfies the particle-hole constraint \(\tau_1H_f^*\tau_1=-H_f\).

The diagonalization of $H_f$ implies\cite{Blaizot} that of $\widehat{H}_f$.
If the pairing contributions vanish, \(\Delta=0\), one can rewrite 
\begin{eqnarray}
\widehat{H}_f=\widehat{K}_f=\hat{\psi}^\dagger K \hat{\psi}
\end{eqnarray}
in terms of \(\hat\psi^\dagger =\begin{bmatrix} c_1^\dagger & \cdots & c_N^\dagger\end{bmatrix}\)
and the associated column array \(\hat \psi\) of annihilation operators. Because these arrays are independent, the single-particle 
Hamiltonian \(K\), as opposed to the BdG Hamiltonian $H_f$, does not satisfy any constraints other than Hermiticity. Again, the diagonalization of \(K\) implies that of \(\widehat{K}_f\).

\resizebox{0.96\hsize}{!}{Next, consider a general quadratic bosonic Hamiltonian} 
\begin{eqnarray}
\label{freeboson}
\widehat{H}_b=\sum_{i=1}^N\sum_{j=1}^N
\big(K_{ij}a_i^\dagger a^{\;}_j+
\frac{1}{2}\Delta_{ij}a_i^\dagger a_j^\dagger+
\frac{1}{2}\Delta_{ij}^*a_ja_i\big),
\end{eqnarray}
where the bosonic operators $a_i^\dagger$ and $a^{\;}_i$ satisfy canonical 
commutation relations $[a^{\;}_i,a^{\dagger}_j]=\delta_{ij}$, $[a_i,a_j]=0$. 
In addition, since $\widehat{H}_b$ is Hermitian, it follows that 
$K=K^\dagger$ and $\Delta=\Delta^\text{T}$. Rewriting $\widehat{H}_b$ in terms of the Nambu array $\hat{\Phi}=\begin{bmatrix} a \\ a^\dagger\end{bmatrix}$ 
($\hat{\Phi}^\dagger=\begin{bmatrix} a^\dagger & a \end{bmatrix}$),  with $\hat{\Phi}_i=a^{\;}_i$ 
and $\hat{\Phi}_{N+i}=a^\dagger_i$, $i=1,\ldots,N$, we have 
\begin{eqnarray}
\label{bosonicBdG}
\widehat{H}_b=\frac{1}{2}\hat{\Phi}^\dagger H_b\hat{\Phi}-\frac{1}{2}\mbox{tr}(K),
\end{eqnarray}
with $H_b=
\begin{bmatrix}
K	&\Delta\\
\Delta^*	&K^*
\end{bmatrix}$ a Hermitian matrix, and 
 \([\hat{\Phi}_i,\hat{\Phi}^\dagger_j]=(\tau_3)_{ij}\), where $\tau_3\equiv\sigma_3\otimes\mathds{1}_N$, with 
$\sigma_3=\begin{bmatrix} 1 & 0 \\ 0 & -1 \end{bmatrix}$. 
Unlike the fermionic case, the diagonalization 
of \(H_b\) does not imply that of \(\widehat{H}_b\) in general. To diagonalize the bosonic many-body Hamiltonian, one must instead diagonalize, or at least put in Jordan normal form, the following non-Hermitian \textit{effective BdG Hamiltonian}\cite{colpa1, ColpaZMs, Blaizot}:
\begin{eqnarray}
\label{eBdG}
H_\tau \equiv \tau_3 H_b ,
\end{eqnarray}
which controls the dynamics of $\hat{\Phi}$. 
The effective BdG Hamiltonian satisfies the particle-hole constraint 
$\tau_1 H_\tau^*\tau_1=-H_\tau$ because of the constraint \(\hat \Phi=\tau_1\hat \Phi^{\dagger\,\text{T}}\).

Suppose 
that \(H_b\) is positive-definite, which we indicate from now on as $H_b>0$, 
so that \(H_\tau\) is both invertible and diagonalizable\cite{Blaizot} (see also Sec.\, \ref{theiff}).  Let $\ket{\psi_n^+}$ be an eigenvector of $H_\tau$ 
corresponding to a positive eigenvalue $\epsilon_n$, with $0<\epsilon_1<\epsilon_2<\cdots<\epsilon_N$. 
Then, $\tau_1 \mathcal{K} \ket{\psi_n^+}\equiv \ket{\psi_n^-}$,  
with $\mathcal{K}$ denoting complex conjugation, is an eigenvector 
corresponding to the negative eigenvalue $-\epsilon_n$ because of the 
particle-hole constraint. As shown in Ref.\,[\onlinecite{Blaizot}], 
these eigenvectors can be normalized to satisfy the following orthonormality relations:
\begin{eqnarray}
\label{orthonormality}
\braket{\psi_m^\pm|\tau_3|\psi_n^\pm} = \pm \delta_{mn}, \quad \braket{\psi_m^\pm|\tau_3|\psi_n^\mp} = 0,
\end{eqnarray}
and to construct the completeness 
relation
\begin{eqnarray}
\label{closure}
\mathds{1}_{2N}=
\sum_{n=1}^N\big(\ket{\psi_n^+}\bra{\psi_n^+}-\ket{\psi_n^-}\bra{\psi_n^-}\big) \tau_3 ,
\end{eqnarray}
with quasiparticle (Bogoliubov) operators corresponding to $\epsilon_n$ and $-\epsilon_n$ defined as 
\begin{eqnarray}\label{quasi-particle}
\left\{
\begin{aligned}
&b^{\;}_n=\bra{\psi_n^+}\tau_3\hat{\Phi}\\
&b_n^\dagger=\hat{\Phi}^\dagger\tau_3 \ket{\psi_n^+}
\end{aligned}
\right.,\quad
\left\{
\begin{aligned}
&b^{\;}_{-n}=\bra{\psi_n^-}\tau_3\hat{\Phi}\\
&b_{-n}^\dagger=\hat{\Phi}^\dagger\tau_3 \ket{\psi_n^-}
\end{aligned}
\right..
\end{eqnarray}

Immediately, from Eq.\,(\ref{orthonormality}), one can check that $(b^{\;}_n, b_n^\dagger)$ satisfy canonical commutation relations $[b^{\;}_m,b^{\dagger}_n]=\delta_{mn}$ and $[b_m,b_n]=0$, whereas
$b^{\;}_{-n}$ and $b_{-n}^\dagger$ do not (e.g., $[b^{\;}_{-m},b^{\dagger}_{-n}]=-\delta_{mn}$).
Taking advantage of Eq.\,(\ref{closure}), we can rewrite Eq.\,(\ref{bosonicBdG}) in 
terms of quasiparticle operators,
\begin{eqnarray}
\widehat{H}_b=\frac{1}{2}\sum_{n=1}^N\epsilon_n\big(b_n^\dagger b^{\;}_n+b_{-n}^\dagger b^{\;}_{-n}\big)-\frac{1}{2}\mbox{tr}(K),
\end{eqnarray}
where only $b_n^\dagger b^{\;}_n$ corresponds to the bosonic particle number operator, and we need to rewrite $b_{-n}^\dagger b^{\;}_{-n}$ in terms of the bosonic one using the relation $b^{\;}_{-n}=-b_n^\dagger$. Finally, we arrive at the bosonic quasiparticle Hamiltonian
\begin{eqnarray}
\label{bosonic quasi-particle}
\widehat{H}_b=\sum_{n=1}^N\epsilon_n b_n^\dagger b^{\;}_n-\sum_{n=1}^N\epsilon_n\braket{\psi_n^\circ|\psi_n^\circ},
\end{eqnarray}
where $\ket{\psi_n^\circ}\equiv \frac{1}{2}(\mathds{1}_{2N}-\tau_3)\ket{\psi_n^+}$. 
Accordingly, we see that excitation energies are always positive and the vacuum (ground state) of this bosonic Hamiltonian is a state with no quasiparticles.

The problem simplifies considerably if \(\Delta=0\). Then one can rewrite 
\begin{eqnarray}
\label{bk}
\widehat{H}_b=\widehat{K}_b=\hat{\phi}^\dagger K \hat{\phi} ,
\end{eqnarray}
in terms of \(\hat \phi^\dagger =\begin{bmatrix} a_1^\dagger & \cdots & a_N^\dagger\end{bmatrix}\)
and the associated column array \(\hat \phi\) of 
annihilation operators. As before, since these arrays are independent, 
the single-particle Hamiltonian \(K\), as opposed to the effective BdG Hamiltonian $H_\tau$, does not satisfy any constraints other 
than Hermiticity. Again, the diagonalization of \(K\) 
implies that of \(\widehat{K}_b\).

Back to the general case, including pairing, under a permutation (unitary) transformation $\pi^\dagger H_b\pi$ with permutation matrix  
$\pi_{ij}=\delta_{i,(j+1)/2}+\delta_{i,j/2+N}$, the 
Hermitian matrix $H_b$ can be rewritten as a $N\times N$ block-matrix with the matrix elements $(\pi^\dagger H_b\pi)_{ij}=
\begin{bmatrix}
K_{ij}	&\Delta_{ij}\\
\Delta_{ij}^*	&K_{ij}^*
\end{bmatrix}$. Then, the diagonalization of $\pi^\dagger H_b\pi$ with respect 
to the metric $\pi^\dagger \tau_3\pi=\mathds{1}_N\otimes\sigma_3$ implies that 
of $\widehat{H}_b$. For systems in which translation symmetry may be broken only by BCs, this reordering makes it possible to leverage a block-Toeplitz 
formalism \cite{AlasePRL,PRB1,JPA}, which we will take advantage of in Sec.\,\ref{nogo} and Sec.\,\ref{1DKitaev} in order to analytically determine  
closed-form solutions in limiting cases. 
Hereinafter, we will use $\tau_3$ to denote either the metric $\sigma_3\otimes\mathds{1}_N$ or $\mathds{1}_N\otimes\sigma_3$ depending on the formalism in use (see Table\,\ref{tab:metrics}).

\begin{table}[]
    \centering
    \caption{Different metrics in the Nambu and the block-Toeplitz formalisms\cite{AlasePRL,PRB1,JPA}.}
    \begin{tabular}{cccccccc}
		\hline		\hline
		& \multicolumn{3}{c}{$\quad\,$Nambu formalism$\quad\,$} & \multicolumn{4}{c}{$\quad$Block-Toeplitz formalism$\quad$} \\
		\hline
		$\quad\tau_1$ & \multicolumn{3}{c}{$\sigma_1\otimes\mathds{1}_N\quad$} & \multicolumn{4}{c}{$\mathds{1}_N\otimes\sigma_1$} \\
		$\quad\tau_3$ & \multicolumn{3}{c}{$\sigma_3\otimes\mathds{1}_N\quad$} & \multicolumn{4}{c}{$\mathds{1}_N\otimes\sigma_3$} \\
		\hline		\hline
	\end{tabular}
    \label{tab:metrics}
\end{table}

A fully translation invariant system on a \(d\)-dimensional Bravais lattice can be described in terms of an effective Bloch-BdG Hamiltonian satisfying the particle-hole constraint \(\tau_1H_\tau^*(\mathbf{-k})\tau_1=-H_\tau(\mathbf{k})\), 
where \(\mathbf{k}=(k_1,\cdots,k_d)\) denotes a $d$-dimensional crystal momentum vector in the Brillouin zone. 
We will also consider systems on $d$-dimensional lattices that suddenly stop at a flat $(d-1)$-dimensional hypersurface. Such terminations are called ideal surfaces \cite{Bechstedt,Lee}. In this setup, the system is half-infinite in one of the \(d\) directions, and retains translation symmetry in the remaining \(d-1\) directions.  The associated effective BdG Hamiltonians will be denoted as  $H_\tau^{o}$, where the superscript ``$o$'' stands for {\it open} BCs for the termination (see Sec.\,IIA of Ref.\,[\onlinecite{Cobanera18}] for a detailed discussion).  It is advantageous to introduce the quantum number $\mathbf{k}_\parallel$, the crystal momentum in the surface Brillouin zone (SBZ)\cite{Bechstedt}. Then, the system is  described by an effective Bloch-BdG Hamiltonian of the form $H^o_{\tau,\mathbf{k}_\parallel}= \tau_3H^o_{b,\mathbf{k}_\parallel}$. For a fixed \(\mathbf{k}_\parallel\), the matrix $H^o_{\tau,\mathbf{k}_\parallel}$ can be visualized as describing a half-infinite ``virtual chain'' system; 
notice, however, that such a system satisfies 
$\tau_1(H^o_{\tau,\mathbf{k}_\parallel})^*\tau_1 = -H^o_{\tau,-\mathbf{k}_\parallel} 
\ne  -H^o_{\tau,\mathbf{k}_\parallel}$ in general, which is different from the usual particle-hole constraint for a one-dimensional system. If we change the BCs of these virtual chains back to periodic BCs, then we can describe the chains in terms of the one-dimensional crystal momentum $k \in [-\pi,\pi)$ in units of the reciprocal stacking period and matrices \(H_{\tau,\mathbf{k}_\parallel}(k)\). 
Naturally, we have \(H_{\tau,\mathbf{k}_\parallel}(k) =H_{\tau}(\mathbf{k})\). 

\subsection{Single-particle characterization of stability}
\label{theiff}

The condition of Hamiltonian stability plays essentially no role
for free-fermion {lattice} systems because of the Pauli exclusion principle. 
By contrast, quadratic bosonic Hamiltonians 
may fail to be ``stable'' in different ways, even for a finite number
of modes \(N\). We recall here a single-particle characterization
of stability 
for free-boson systems\cite{Derezinski17} upon which we 
will rely heavily in Sec.\,\ref{zeromodes}.  

For free-boson systems there are in fact 
different notions of stability of practical importance\cite{Peano,Vincent20}. One notion, which is 
the usual notion of stability in quantum mechanics, is the condition 
that \(\widehat{H}_b\) should be bounded below, and applies to particle-conserving and non-conserving systems alike. The following theorem identifies the necessary and sufficient condition for this form of stability:

\smallskip

\noindent
\textbf{Theorem ([\onlinecite{Derezinski17}]).}
\textit{The quadratic bosonic Hamiltonian 
\(\widehat{H}_b\) is stable if and only if the Hermitian
matrix \(H_b\) is positive semi-definite. }

\smallskip

A weaker notion of stability is meaningful only 
for particle-conserving systems: in such a case, it may happen that \(\widehat{H}_b=\widehat{K}_b\) is bounded from below in any subspace with a fixed number of particles but not over the full Fock space. Stability in the usual sense is then achieved if and only if $K$ is positive semi-definite. This condition will be indicated as \(K\geq 0\) from now on. 

It is interesting to note the following related result:
\smallskip

\noindent
\textit{
If \(K\) is not positive semi-definite, then neither is \(H_b\), regardless of the properties of the pairing matrix \(\Delta\).}
\smallskip

\noindent 
This has implications for interacting particle-conserving systems. For suppose that one is interested in a weakly-interacting, particle-conserving boson system. Then, one may try a mean-field approximation that breaks particle conservation. But, if \(K\) is not positive semi-definite, this mean-field approximation will necessarily yield an unstable quadratic bosonic Hamiltonian. (See also Ref.\,[\onlinecite{Vincent20}] for a more general discussion, partly motivated by topology, of unstable free-boson systems.) 

One can also approach the notion of stability from the point of view of the response of the system to a classical forcing term.
For free boson systems, but not for free fermions, 
the simplest model one may consider is described by the linear-quadratic
Hamiltonian 
\begin{eqnarray}
\label{linear_quadratic}
\widehat{H}_{b,F}=\widehat{H}_b+\hat{\Phi}^\dagger F ,
\end{eqnarray}
where \(F=\begin{bmatrix}f_1&\cdots & f_N&f_1^*&\cdots & f_N^*\end{bmatrix}^\text{T}\) 
is a vector of complex parameters. 
If there is a solution \(Z=\begin{bmatrix}z_1&\dots & z_N&z_1^*&\dots & z_N^*\end{bmatrix}^\text{T}\) of the equation \(F=H_b Z\), then the 
Hamiltonian of Eq.\,\eqref{linear_quadratic} satisfies the relationship 
\begin{align}
\widehat{H}_b+\hat{\Phi}^\dagger F=U^{\;}_Z\widehat{H}_bU_Z^\dagger-
\frac{1}{2}Z^\dagger H_b Z , 
\end{align}
in terms of the unitary map \(U_Z=e^{\sum_{i=1}^N(z_i^*a_i-z_ia_i^\dagger)}\). 
For stable systems without ZMs (\(H_b>0\)), \(Z=H_b^{-1}F\). For stable systems with ZMs (\(H_b\geq0\)), \(Z\) may fail to exist.

\subsection{(Not so well) Known results on general bosonic zero modes}
\label{statistic}

To the best of our knowledge, the first complete characterization of ZMs of 
quadratic bosonic Hamiltonians with \(H_b\geq 0\) appeared in Ref.\,[\onlinecite{ColpaZMs}]. We recall this somewhat hidden result here for later use in the form of a Theorem and, as mentioned, include a new 
proof 
in Appendix \,\ref{proofs}.  
The problem is difficult for particle non-conserving systems because 
the normal modes and frequencies of the system are calculated from the effective 
BdG Hamiltonian $H_\tau=\tau_3 H_b$ defined in Eq. \eqref{eBdG} 
and this matrix is \textit{not} Hermitian in general.

The starting point of the analysis are several spectral properties
of the non-Hermitian matrix \(H_\tau\). First, because
$\tau_3 H_\tau^\dag \tau_3 = H_\tau$ (pseudo-Hermiticity) and 
$\tau_1 H_\tau^* \tau_1 = -H_\tau$ (particle-hole constraint), it 
follows that the eigenvalues of $H_\tau$ come in quartets 
$\{\epsilon, \epsilon^*, -\epsilon, -\epsilon^*\}$. 
Since in this paper we always assume that $H_b\geq 0$, it  
also follows that the eigenvalues of $H_\tau$ are purely real\cite{Blaizot}.
Finally, since $H_\tau$ is not a Hermitian matrix in general, it could 
fail to be diagonalizable. Again, this possibility is highly constrained
by the condition $H_b\geq 0$. Theorem 5.7.2 in Ref.\,[\onlinecite{Gohberg}] 
tells us that \(H_\tau\) is diagonalizable except possibly on the subspace of 
ZMs. The Jordan normal form of \(H_\tau\) restricted to this subspace
can contain Jordan blocks that are \emph{at most of size two} (\(2\times 2\) blocks).
Since there is an even number of eigenvectors associated to non-zero eigenvalues,
due to the particle-hole constraint, it follows that the algebraic multiplicity 
of the zero eigenvalue must be even. And, since the Jordan blocks
are at most of size two, the Jordan chains of length one come in pairs. 
\smallskip

\noindent\textbf{Theorem (ZMs, [\onlinecite{ColpaZMs}]).}
\textit{For the effective BdG Hamiltonian $H_\tau=\tau_3H_b$, let $2n$ and $m$ be the number of linearly independent zero eigenvectors associated to Jordan chains of length one and two, respectively. Then there are $n$ pairs of canonical boson $b^{\;}_{0,j},b_{0,j}^\dag$ 
that commute with the many-body Hamiltonian \(\widehat{H}_b\) 
and all other normal modes of the system. 
In addition, there exist $m$ pairs of Hermitian
operators $P_{0,j},Q_{0,j}$ that also commute with all other normal modes of the system and obey $[Q_{0,j},P_{0,\ell}]=i\delta_{j\ell}$, $[\widehat{H}_b,P_{0,j}]=0$, 
and $[\widehat{H}_b,Q_{0,j}]= (i/\mu_j)P_{0,j}$, with \(\mu_j>0\).}

\medskip

\noindent
It is crucial to point out that Goldstone modes of free-boson systems\cite{Blaizot} are a particular instance of the above theorem. Generically, Goldstone modes are bulk, {\it delocalized} modes. In contrast, our focus in this paper
is on {\it localized} bosonic ZMs. 

How robust are bosonic ZMs? There are some results in the literature 
regarding small perturbations\cite{Schulz,Peano,Yaku}. In practice,
it can be hard to decide whether a small perturbation preserves the 
stability condition \(H_b\geq 0\), thus it is possible for a ZM to split 
away from zero into the complex plane. In this case one says that the 
mode has become \textit{dynamically unstable}\cite{Peano,Vincent20}. From a 
dynamical perspective, the loss of diagonalizability is also regarded as a 
dynamical instability even if the system satisfies \(H_b\geq 0\). 
Fortunately, these additional complications (as compared to the Hermitian case)
come paired with an additional theoretical tool: the Krein stability 
theory of dynamical systems in an indefinite inner-product 
space\cite{Schulz}. 

We recall some essential definitions. Given a vector $\ket{v}\in\mathbb{C}^{2N}$, 
the sign of the $\tau_3$-norm $\braket{v|\tau_3|v}$ is its \emph{Krein signature}. If $\braket{v|\tau_3|v}=0$, we say the Krein signature is $0$. Let $\lambda$ be an eigenvalue of a matrix that is Hermitian with respect to the indefinite $\tau_3$ inner product (e.g., $H_\tau$). If all eigenvectors associated with 
$\lambda$ have either a $+1$ or $-1$ Krein signature, then we say that $\lambda$ is $\pm$-definite. Otherwise we say $\lambda$ is indefinite. Note that $\lambda$ being $\pm$-definite requires that $\lambda\in\mathbb{R}$. A key result in the theory of Krein stability is the Krein-Gel'fand-Lidskii theorem\cite{Schulz}: 
\smallskip

\noindent
If $\lambda$ is a $\pm$-definite eigenvalue of a $\tau_3$-pseudo-Hermitian matrix $M$, then there exists an open neighborhood of $M$, such that all matrices in this neighborhood have eigenvalues close to $\lambda$ that remain on the real axis and have diagonal Jordan blocks.
\smallskip

\noindent
Physically, this means that the corresponding modes remain \emph{dynamically stable} (that is, they result in bounded evolution in time) 
under all sufficiently small perturbations. Furthermore, in any open neighborhood 
of a matrix $M$ with an eigenvalue that is either indefinite or has a non-diagonal 
Jordan block, there are matrices with eigenvalues off the real axis\cite{Yaku}.

Returning to the bosonic problem, recall that the kernel vectors of $H_\tau$ are 
either associated with canonical bosonic ZMs or free-particle-like ZMs (Hermitian quadratures). The canonical bosonic ZMs arise from pairs of eigenvectors 
with different Krein signatures, while the free-particle-like ZMs arise from 
eigenvectors with vanishing Krein signatures\cite{Blaizot}. 
Hence, according to Krein stability theory, there exist arbitrarily small 
perturbations that will cause these ZMs to become dynamically unstable. 
The stability properties of 
bosonic ZMs we have summarized here will illuminate some features of our prototype squared Kitaev chain model in Sec.\,\ref{1DKitaev}. 

\subsection{The tenfold way}
\label{fermary}

In this section, we summarize the basics of the classification of Hermitian ensembles 
(or equivalently, the classification of free-fermion SPT phases) and the bulk-boundary correspondence because these subjects will guide and inspire our investigation of free-boson systems. For this summary we especially benefited from the review 
articles in Ref.\,[\onlinecite{Zirnbauer10}] and Ref.\,[\onlinecite{Chiu}].
The classification of non-Hermitian ensembles\cite{BLC} will also play a role in Sec.\,\ref{classification}. Since this subject is still evolving fast, we refer the interested reader to Ref.\,[\onlinecite{Kawabata}] and references therein.

A Bloch Hamiltonian is a Hermitian-matrix-valued function \(H({\bf k})\). For fixed but
arbitrary \({\bf k}\), \(H({\bf k})\) is an operator acting on the Hilbert space 
\({\cal H}_{\sf int}\) of internal degrees of freedom (the same for all \({\bf k}\)). 
The symmetry classification  of Bloch Hamiltonians is a classification scheme 
for sets (``ensembles") of  \(H({\bf k})\)'s acting on a common Hilbert space 
\({\cal H}_{\sf int}\) and with a common argument \({\bf k}\). The dimensionality 
\(d\) of the ensemble is that of \(\k\). 
The group of symmetries of an ensemble is the group of isometries of
\({\cal H}_{\sf int}\) that commute with every \(H({\bf k})\) in the ensemble. 
An ensemble is called \textit{irreducible} if its group of symmetries consists only of the identity up to a phase. 

The simultaneous block-diagonalization of an ensemble and its unitary symmetries 
decomposes the ensemble into a sum of irreducible ensembles. An irreducible element \(H({\bf k})\) satisfies some subset of the following conditions,
\begin{align}
\nonumber
{\cal T}H({\bf k}){\cal T}^{-1}&=H(-{\bf k})\,\,\,\,\,\,\,({\cal T}=U^\dagger_T{\cal K},\,U^{\;}_TU_T^*=\pm \mathds{1}),\\
\label{DAZconditions}
{\cal C}H({\bf k}){\cal C}^{-1}&=-H(-{\bf k})\,\,\,({\cal C}=U^\dagger_C{\cal K},\,U^{\;}_CU_C^*=\pm \mathds{1}),\\
\nonumber 
{\cal S}H({\bf k}){\cal S}^{-1}&=-H({\bf k})\,\,\,\,\,\,\,\,({\cal S}=U^\dagger_S,\,U_S^2=\mathds{1}).
\end{align}
Here, \({\cal K}\) denotes complex conjugation in some preferred basis of 
\({\cal H}_{\sf int}\) and the linear isometries \(U^{\;}_T, U_C, U^{\;}_S\) of \({\cal H}_{\sf int}\)
do not depend on \(H({\bf k})\) but only on the specific irreducible ensemble. 
For \(d=0\) ensembles, that is, dropping the \({\bf k}\)-dependence, these conditions 
become the usual conditions of the AZ classification\cite{Dyson,AZ,Verbaarschot00}. There are precisely ten distinct combinations of these conditions, 
which motivates the name ``tenfold way"\cite{Ryu}. The standard names of the ten classes of irreducible Hermitian ensembles are shown in Table \ref{table:names}, with \{A, AIII\} being the complex classes while the other eight the real classes. 

The condition associated to \({\cal T}\) can be understood physically in terms of the time-reversal symmetry. Ignoring the other two conditions that do not have in general an obvious physical interpretation, one arrives at the ``threefold way" of Dyson\cite{Dyson}.  The other two conditions, dubbed particle-hole (or charge conjugation) and chiral ``symmetries" arise naturally for free fermions as ``descendants'' of special many-body symmetries\cite{AZ,Verbaarschot00}. We will come back to this point in more detail for bosons. The reminder of this subsection is focused on (single-particle or BdG) Bloch Hamiltonians associated to free-fermion systems.

Consider the following question: Given a choice of Fermi energy (zero energy for
superconductors) and two members, $H_1({\bf k})$ and $H_2({\bf k})$, of
an irreducible ensemble of Bloch Hamiltonians fully gapped at that Fermi energy, is it 
possible to find a continuous deformation \(H_s({\bf k})\) of \(H_1({\bf k})\) into \(H_2({\bf k})\) such that (1) \(H_s({\bf k})\) is fully gapped at the Fermi energy for all \(s\), and (2) \(H_s({\bf k})\) is a member of the same irreducible ensemble of Bloch Hamiltonians for all \(s\)? The answer can be ``Yes" or ``No" depending on the classifying parameters: the dimension \(d\) and the symmetry class.
If the answer is ``No", the obstruction to the deformation is characterized by a topological invariant that can be calculated directly for individual Hamiltonians. 
Bloch Hamiltonians that cannot be deformed into one another are distinguished by the value of some topological invariant. The pattern of ``Yes/No" and the topological invariants at play, both as a function of the symmetry class, are periodic in \(d\), with period two for complex classes and period eight for real classes. This remarkable result goes by a catchy name, the ``periodic table" (of the tenfold way), 
 which was established through Clifford algebras and $K$-theory in Ref.\,[\onlinecite{Kitaev2009}]. For example, when $d=0$, the two complex classes \{A, AIII\} are associated with two types of complex Clifford algebras and classifying spaces $\mathcal{C}_q, q=0,1$, as shown in Table
 \ref{table:names}, while the other eight real classes are associated with eight types of real Clifford algebras and classifying spaces $\mathcal{R}_q, \, q=0,1,\cdots,7$. Then, the number of disconnected components of $\mathcal{C}_q$ and $\mathcal{R}_q$ can be used as topological invariants to classify topological phases. Hereinafter, we will always refer to the classifying space of a symmetry class as the one associated with $d=0$. 

\begin{table}[]
\centering
\caption{Standard names for the ten symmetry classes of irreducible ensembles of Bloch Hamiltonians for all dimensions. 
The three symmetries $\mathcal{T}$, $\mathcal{C}$ and $\mathcal{S}$ are denoted by $1\,(-1)$ if they square to $\mathds{1}\,(-\mathds{1})$ and denoted by $0$ if they are absent. Classifying spaces (CS) associated with complex classes and real classes in $d=0$ are denoted by $\mathcal{C}_q, \,q=0,1$ and $\mathcal{R}_q,\, q=0,1,\cdots,7$, respectively.}
\smallskip
    \begin{tabular}{x{7mm}  x{4mm} x{7mm} x{3mm} x{6mm} x{7mm} x{5mm} x{7mm} x{6mm} x{6mm} x{5mm} x{6mm}}
    \hline\hline \\ [-2ex]
             \      &A &AIII & &AI &BDI &D &DIII &AII &CII &C &CI
    \tabularnewline [0.5ex] 
    \hline \\[-2ex]
    \({\cal T}\) & 0 & 0    & & 1  &  1  & 0 &  -1   & -1   & -1   & 0 & 1
    \tabularnewline [1ex]
    \({\cal C}\) & 0 & 0    & & 0  &  1  & 1 &  1   & 0   & -1   & -1 & -1
    \tabularnewline [1ex]
    \({\cal S}\) & 0 & 1    & & 0  &  1  & 0 &  1   & 0   & 1   & 0 & 1
    \tabularnewline [1ex]
      CS    & $\mathcal{C}_0$ & $\mathcal{C}_1$    & & $\mathcal{R}_0$  &  $\mathcal{R}_1$  & $\mathcal{R}_2$ &  $\mathcal{R}_3$   & $\mathcal{R}_4$   & $\mathcal{R}_5$   & $\mathcal{R}_6$ & $\mathcal{R}_7$
    \tabularnewline [0.5ex] \hline\hline
    \end{tabular}
\label{table:names}
\end{table}

As it turns out, it is possible in general to continuously deform two Bloch Hamiltonians 
into each other without closing the gap at the Fermi energy, provided that one is allowed to break the classifying symmetries at intermediate steps. In this sense, the topological distinction between Bloch Hamiltonians is ``symmetry-protected". As hinted by the role of the Fermi energy in this discussion, fermionic statistics lead to a strong connection between low-energy many-body physics and the predictions of the topological classification. 

First, for bulk systems, the topological classification of gapped Bloch Hamiltonians translates into a classification of SPT phases of free-fermion systems. Recall that two many-body ground states are regarded as describing distinct SPT phases if it is not possible to deform one into the other adiabatically 
(as in the Gell-Mann-Low theorem\cite{Gell-Mann51}) without closing the many-body 
gap -- while maintaining a preferred, ``protecting" set of symmetries at all steps of the 
deformation. For fermions, topologically distinct ensembles of gapped Bloch Hamiltonians
are in one-to-one correspondence with distinct SPT phases. This result is consistent with 
the fact that the classifying conditions of Eq.\,\eqref{DAZconditions} are in 
correspondence with many-body symmetries for fermions. For superconductors, 
SPT order is often signaled by the fermion parity observable. 

Second, the celebrated bulk-boundary correspondence relates non-zero 
values of the bulk topological invariants to the presence of boundary states of individual Hamiltonians subject to open BCs. For \(d>1\), these boundary or edge states cross the 
Fermi energy (zero energy for superconductors) and establish another link between
low-energy many-body physics and the topological classification, i.e., the
topologically dictated boundary metals/gapless superconductors that emerge at the 
termination of a fully gapped topologically non-trivial bulk. The Bloch Hamiltonians
for the integer quantum Hall effect are important examples; they form
a \(d=2\), class A ensemble. For \(d=1\), \{AIII, BDI, D, DIII, CII\} are the five classes where the bulk-boundary correspondence dictates midgap ZMs for topologically
non-trivial bulks. From a many-body perspective, the most remarkable example
is the Kitaev chain because, loosely speaking, it features a single fermion split into halves in terms of a pair of Majorana ZMs localized on opposite ends of the chain.

\section{Internal symmetries of many-boson systems}
\label{internalsymmetries}

In this section we investigate a special class of symmetries 
of free-boson systems that we call \emph{Gaussian}\cite{Cobanera}.
These symmetry transformations 
on Fock space are special because they map (by similarity)
creation and annihilation operators to linear combinations of themselves.
For fermions, the classifying conditions of the symmetry classification 
known as the tenfold way 
are in correspondence with many-body Gaussian symmetries. As we will see,
this is only partially true for bosons. The symmetry analysis of this section will 
be important in Sec.\,\ref{classification} when we consider non-Hermitian classification schemes as they apply to certain effective BdG Hamiltonians.   

\subsection{Particle-conserving systems}
\label{pcsec}

An ensemble of particle-conserving quadratic bosonic Hamiltonians is in one-to-one correspondence with an auxiliary ensemble of single-particle Hamiltonians by Eq.\,\eqref{bk}, 
regarded as a mapping. We will focus on 
symmetry transformations on Fock space that map particle-conserving quadratic bosonic Hamiltonians to Hamiltonians of the same type. There are two possibilities in principle: canonical mappings of the form
\begin{eqnarray}
\label{gaussianb}
\mathbi{V}\, \hat{\phi} \, \mathbi{V}^{\, -1}
=\begin{bmatrix} \mathbi{V}\, a_1 \mathbi{V}^{\, -1} \\ \vdots  \\ \mathbi{V}\, a_N \mathbi{V}^{\, -1}\end{bmatrix}
=U_{\! \mathbi{V}} \, \hat{\phi} ,
\end{eqnarray}
and of the form 
\begin{eqnarray}
\label{wouldbeb}
\mathbi{C} \, \hat{\phi} \, \mathbi{C}^{\, -1}=U_{\! \mathbi{C}}^* \, \hat{\phi}^{\dagger \text{T}} ,
\end{eqnarray}
where, because the commutation relations are necessarily preserved, $U_{\! \mathbi{V}}$ and $U_{\! \mathbi{C}}$ are $N\times N$ unitary matrices. The same is true for fermions with \(\hat \phi\) replaced by \(\hat \psi\).
Notice that if \(\mathbi{O}\) denotes an operator in Fock space, the equation \((\mathbi{V}\, \mathbi{O} \, \mathbi{V}^{\, -1})^\dagger=\mathbi{V}\, \mathbi{O}^\dagger \, \mathbi{V}^{\, -1}\) also holds for \(\mathbi{V}\) antilinear provided it is antiunitary. 

The two distinct possibilities above arise because of the requirement of particle conservation. For particle non-conserving systems, there is no obstruction
to mix creation and annihilation operators and so there is only one
kind of Gaussian map (see Eq.\,\eqref{gaussianb} below). This fact is closely related to the particle-hole constraint satisfied by Nambu arrays. 
For particle-conserving free fermions, both possibilities are realized and transformations 
of the second kind, conventionally assumed linear without loss of generality,
so that \(\mathbi{C}^{\, -1}=\mathbi{C}^{\, \dagger}\),
are called particle-hole (or charge conjugation) symmetries because 
the symmetry condition \(\mathbi{C}\, \widehat{K}_f \, \mathbi{C}^{\, \dagger}=\widehat{K}_f\) translates into a particle-hole-like condition \(U_{\! \mathbi{C}}^\dagger \, K^* \, U^{\;}_{\! \mathbi{C}}=-K\) for the single-particle Hamiltonian \(K\). As we will see, particle-hole and so-called chiral symmetries are forbidden for bosons by the canonical commutation relations. 

Going back to ensembles of particle-conserving free-boson systems, 
the unitary symmetries of the auxiliary ensemble of single-particle
Hamiltonians are in one-to-one correspondence with the Gaussian symmetries of the 
ensemble of many-body Hamiltonians. Suppose
\begin{eqnarray}
\label{pcg}
\mathbi{V} \, \hat\phi \, \mathbi{V}^{\, \dagger}= U_{\! \mathbi{V}} \, \hat \phi
\end{eqnarray}
is a unitary Gaussian symmetry transformation.
Then, the condition \(\mathbi{V}\, \widehat{K}_b \, \mathbi{V}^{\, \dagger}=\widehat{K}_b\) for all \(\widehat{K}_b\) implies that \(U_{\! \mathbi{V}}^\dagger \, K \, U^{\;}_{\! \mathbi{V}}=K\) for all \(K\). That is,
a Gaussian symmetry of the many-body ensemble descends into a symmetry of
the auxiliary ensemble of single-particle Hamiltonians. For the reverse
process of lifting a single-particle symmetry to a Gaussian many-body
symmetry, one needs to invoke the Stone-von Neuman-Mackey theorem\cite{rosenberg_j:qc2004a,Derezinski2006}, which guarantees that the unitary transformation \(\mathbi{V}\) in Eq.\,\eqref{pcg}
exists given \(U_{\! \mathbi{V}}\) as the input, provided that the number of modes \(N<\infty\). 

As a result of this analysis, one concludes that the decomposition of the auxiliary ensemble of single-particle Hamiltonians into irreducible AZ ensembles is equivalent to the decomposition of the ensemble of bosonic many-body Hamiltonians as a sum of commuting bosonic many-body Hamiltonians
labeled by Gaussian-symmetry quantum numbers. It is important to appreciate how particle 
conservation fits in this discussion.  
Let \(\mathbi{N}\equiv\sum_{j=1}^Na_j^\dagger a^{\,}_j\)
denote the number operator. Then, \(e^{i\theta \mathbi{N}}\, \hat \phi \,  e^{-i\theta \mathbi{N}}=e^{-i\theta} \hat\phi\) and thus particle conservation induces a trivial symmetry of the auxiliary ensemble of single-particle Hamiltonians. 

This aspect of the problem is identical for bosons and fermions with the
complication, in the case of bosons, that one must rely on the highly-non-trivial Stone-von Neumann-Mackey theorem to establish it (see also Theorem 11 in Ref.\,[\onlinecite{Derezinski2006}] for fermions, with $N<\infty)$. The key difference between particle-conserving fermions and bosons is that for bosons not all of the classifying AZ conditions are associated to Gaussian many-body symmetries. Time reversal works fine. 
Suppose \(\mathbi{T}\) is an antilinear isometry of the Fock space such that
\begin{align}
\mathbi{T} \, \hat \phi \, \mathbi{T}^{-1}=U^{\;}_T \, \hat\phi, \quad\quad \mathbi{T} \, i=-i\mathbi{T}.
\end{align}
Because commutation relations are preserved and by that \(U^{\;}_T\)
is a unitary matrix,
\(\mathbi{T} \, \widehat{K}_b \, \mathbi{T}^{-1}=\widehat{K}_b\)
is equivalent to \({\cal T} K {\cal T}^{-1}=K\) in terms of \({\cal T}=U_T^\dagger{\cal K}\). 
Conversely, given a unitary transformation that intertwines \(K\) and \(K^*\), one 
can lift it into an antilinear Gaussian symmetry by a variation of the argument
of the previous paragraph. 

How about particle-hole symmetries, Eq.\,\eqref{wouldbeb}? 
If such a \(\mathbi{C}\) were to exist, it would have to preserve the
commutation relations, which would imply that 
\(U^{\;}_\mathbi{\!C} \, U_\mathbi{\!C}^\dagger=-\mathds{1}_N\). There is no solution of this equation. Finally, a chiral symmetry is a Gaussian symmetry of the form \(\mathbi{S}=\mathbi{T}\mathbi{C}\), which is of interest for fermions in situations where \(\mathbi{T}\) and \(\mathbi{C}\) are not separately symmetries of the many-body ensemble but \(\mathbi{S}\) is. Since particle-hole symmetries do not exist for bosons, neither do chiral symmetries. The interested reader can play around with the idea of defining a chiral symmetry for bosons, say, as a sub-lattice symmetry, to gain physical insight into this no-go result.

In summary, for particle-conserving free-boson systems, the time-reversal classifying condition \({\cal T}\) is in correspondence with an antilinear Gaussian symmetry  \(\mathbi{T}\). By contrast, and contrary to fermions, the particle-hole and chiral classifying conditions can well emerge at the single-particle level but have \emph{no many-body counterpart}. But then, what exactly is accomplished by feeding a topologically non-trivial single-particle Hamiltonian into Eq.\,\eqref{bk}? 

\subsection{Particle non-conserving systems}
\label{pncsec}

For particle non-conserving free-boson systems, Eq.\,\eqref{bosonicBdG}, 
regarded as a map, puts ensembles of quadratic bosonic Hamiltonians in correspondence with ensembles of Hermitian matrices \(H_b\) that satisfy the constraint \(\tau_1H_b^*\tau_1=H_b\). However, the many-body system is governed by the spectral properties of the effective BdG Hamiltonian \(H_\tau=\tau_3H_b\). Moreover, as we will see next, Gaussian symmetries are in correspondence with symmetries of \(H_\tau\), not \(H_b\). Hence, one is drawn to the conclusion that ensembles of particle non-conserving free-boson systems should be regarded as being in correspondence with those of pseudo-Hermitian effective BdG Hamiltonians.   

The focus is again on symmetry transformations on Fock space that \textit{generically} map particle non-conserving quadratic bosonic Hamiltonians to Hamiltonians of the same type. Since there is no need to preserve particle conservation, nothing prevents mixing creation and annihilation operators and, thus, the notion of particle-hole symmetry becomes redundant. Gaussian symmetry transformations are of the form 
\begin{align}
\label{gaussianb}
\mathbi{V} \, \hat{\Phi} \, \mathbi{V}^{-1}=U_{\! \mathbi{V}} \, \hat{\Phi}
=U_{\! \mathbi{V}}' \, \hat \Phi^{\dagger\, \text{T}} , 
\end{align}
with \(\mathbi{V}\) linear or antilinear. Because canonical commutation relations are necessarily preserved, the above implies
\begin{align}
U^{\;}_{\! \mathbi{V}} \, \tau_3 \, U_{\! \mathbi{V}} {\!\,}^\dagger=\tau_3,\quad\quad
U'_{\! \mathbi{V}} \, \tau_3 \, U'_{\! \mathbi{V}} {\!\,}^\dagger
=-\tau_3,
\end{align}
with $U_{\!\mathbi{V}}$ a {\it pseudo-unitary} matrix, and \(U'_{\! \mathbi{V}}\,(=U_{\! \mathbi{V}} \, \tau_1)\) a {\it skew-pseudo-unitary} matrix. Hence, one can focus on the pseudo-unitary matrices \(U_{\! \mathbi{V}}\) without loss of generality. In addition, the Nambu constraint \(\hat{\Phi}=\tau_1\hat \Phi^{\dagger\, \text{T}}\) is also preserved by a Gaussian isometry. As a result, the pseudo-unitary transformation \(U_{\! \mathbi{V}}\) is real symplectic, satisfying conditions $\tau_1  U_{\! \mathbi{V}}^*\tau_1=U_{\! \mathbi{V}}$ and $U_{\! \mathbi{V}}^\text{T} \tau_2 U_{\! \mathbi{V}}=\tau_2$.

Consider first linear Gaussian symmetries. One finds that \(\mathbi{V} \, \widehat{H}_b \, \mathbi{V}^{\, \dagger}=\widehat{H}_b\) if and only if \(U_{\! \mathbi{V}}^{-1} H_\tau U_{\! \mathbi{V}}=H_\tau\). Hence, linear Gaussian symmetries of the many-body ensemble descend into symmetries of the auxiliary pseudo-Hermitian ensemble of \(H_\tau\). Conversely, one can again invoke the Stone-von Neumann-Mackey theorem to lift a symmetry of the auxiliary ensemble into a linear Gaussian symmetry of the many-body ensemble. If those symmetries are continuous, then $\mathbi{V}=e^{i\widehat{Q}_b}$ for some Hermitian quadratic bosonic generator $\widehat{Q}_b$. An infinitesimal symmetry transformation leads to a pseudo-unitary matrix $U_\mathbi{V}= e^{u_\mathbi{V}}\approx \mathds{1}+ u_\mathbi{V}$, such that $u_\mathbi{V} \, \tau_3+\tau_3 \, u_\mathbi{V}^\dagger=0$ and \(\tau_1 \, u_\mathbi{V}^*\, \tau_1=u_\mathbi{V}\). Combining these observations with the identity \(i[\widehat{Q}_b,\hat \Phi]=-i\tau_3Q_b \hat\Phi=-iQ_\tau\hat \Phi\) and Eq.\,\eqref{gaussianb}, one concludes that \(u_\mathbi{V}=-iQ_\tau\) and \(U_\mathbi{V}^{-1}H_\tau U^{\;}_\mathbi{V}=H_\tau\) if and only if \([Q_\tau,H_\tau]=0\). Physically, the matrix \(Q_\tau=\tau_3Q_b\) may represent a conserved charge. One can reach the same conclusion directly by calculating 
\begin{eqnarray}
\label{bosym}
[\widehat{Q}_b,\widehat{H}_b]=\frac{i}{2}\hat{\Phi}^\dagger\tau_3[Q_\tau,H_\tau]\hat{\Phi}.
\end{eqnarray}

It is reasonable to have an ensemble of \(H_\tau\) reducible, under the same conditions as before, and assume that any such ensemble can be 
 decomposed into a sum of irreducible ensembles by, roughly speaking, block-diagonalizing the ensemble together with its Gaussian symmetries. 
This decomposition of the auxiliary ensemble corresponds to decomposing Hamiltonians of the many-body ensemble as sums of commuting many-body Hamiltonians labelled by the quantum numbers of the Gaussian symmetries.    

Next, consider antilinear Gaussian symmetries, that is, time-reversal-like symmetries of the form
\begin{align}
\mathbi{T}\, \hat{\Phi} \, \mathbi{T}^{-1}
=U^{\;}_T \, \hat \Phi, \quad \mathbi{T} \, i=-i\mathbi{T} , 
\end{align}
where $U^{\;}_T$ is a pseudo-unitary matrix. 
From the symmetry condition $\mathbi{T}\, \widehat{H}_b \, \mathbi{T}^{-1}=\widehat{H}_b$ we 
get a constraint on the effective BdG hamiltonians $H_\tau$ of the form
\begin{align}
\label{TRS}
U^{-1}_T H_\tau^* U^{\;}_T={\cal T}H_\tau{\cal T}^{-1}=H_\tau
\end{align}
in terms of \({\cal T}=U_T^{-1}{\cal K}\). 
Applying Eq.\,(\ref{TRS}) to $H_\tau$ twice and assuming that the ensemble is irreducible
one finds that $U^{\;}_TU_T^*=\pm\mathds{1}_{2N}$. If we consider the simplest case where there are no internal degrees of freedom, $U^{\;}_T$ could be $\mathds{1}_{2N}$, which is typical for phonons. Then one obtains the reality condition $H_\tau^*=H_\tau$.

\section{Topology of gapped free bosons: No-go theorems}
\label{zeromodes}

Particle-conserving free-boson systems in equilibrium are generically gapless (in the thermodynamic limit) regardless of the spectral gaps of the auxiliary single-particle Hamiltonian. By contrast, particle-non-conserving systems can be fully gapped at the many-body level. This observation opens up the possibility for SPT phases. In general, non-trivial SPT phases of particle non-conserving free-fermion systems display at least one of the following signatures:
\begin{itemize}
\item ZMs for open BCs; and
\item Odd or even fermion parity in the ground state, 
depending on BCs being periodic or antiperiodic  \cite{Ortiz14, Ortiz16}.
\end{itemize}
In this section we prove that fully gapped free-boson systems cannot possibly show these signatures. Moreover, we show that any pair of fully gapped free-boson Hamiltonians can always be connected adiabatically without closing the many-body gap or breaking any protecting symmetries. Hence, non-trivial SPT phases do not exist for free-boson systems. For the special case $d=2$,  periodic BCs, and no additional symmetries, a related result was derived based on the triviality of the Chern number in Ref. [\onlinecite{Shindou}].

We organize these results as three no-go theorems for fully gapped free-boson systems. Our proofs are independent of each other and, in all cases, \textit{the central obstruction is the stability constraint} discussed in Sec.\,\ref{theiff}, rather than profound differences of internal symmetries between fermions and bosons. The theorems are:
\begin{itemize}
\item Theorem 1 (no parity switches): The boson parity of the ground state is always even.
\item Theorem 2 (no non-trivial SPT phases): All Hamiltonians are adiabatically connected
regardless of the choice of protecting symmetries.
\item Theorem 3 (no localized ZMs): The system subject to open BCs cannot develop localized ZMs.
\end{itemize}
All these results are pleasingly consistent. Of course, our no-go Theorem 3 does not forbid localized ZMs for gapped systems subject to BCs other than open. The squaring-the-fermion map can precisely generate such ZMs, and we will see concrete examples in Sec. \ref{examples}. With minimal modifications, our Theorems 2 and 3 hold also for linear-quadratic Hamiltonians of Eq.\,\eqref{linear_quadratic} in Sec.\,\ref{theiff}.

\subsection{Theorem 1: No parity switches}
\label{nops}

\begin{theorem}
Let $\widehat{H}_{b}$ be a quadratic bosonic 
Hamiltonian with finitely many modes \(N\) and \(H_b>0\). 
Then, its ground state is non-degenerate 
with even boson parity. 
\end{theorem}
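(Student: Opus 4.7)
The plan is to leverage the Bogoliubov diagonalization developed in Sec.\,\ref{mapping2}. Since $H_b>0$, the criterion of Sec.\,\ref{theiff} guarantees stability, and the discussion following Eq.\,\eqref{eBdG} ensures that $H_\tau=\tau_3H_b$ is invertible and diagonalizable with purely real eigenvalues coming in pairs $\pm\epsilon_n$, $\epsilon_n>0$. This allows one to cast $\widehat{H}_b$ into the quasiparticle form of Eq.\,\eqref{bosonic quasi-particle}, $\widehat{H}_b=\sum_{n=1}^N\epsilon_n\, b_n^\dagger b^{\;}_n + E_0$, where the $b_n,b_n^\dagger$ obey canonical bosonic commutation relations thanks to the orthonormality relations~\eqref{orthonormality}.

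Non-degeneracy of the ground state then follows immediately. Any minimizer $\ket{\Omega}$ of $\widehat{H}_b$ must satisfy $\braket{\Omega|b_n^\dagger b^{\;}_n|\Omega}=0$ and therefore $b^{\;}_n\ket{\Omega}=0$ for every $n$. Because $\{b^{\;}_n,b_n^\dagger\}_{n=1}^N$ realizes canonical commutation relations on finitely many modes, the Stone--von Neumann--Mackey theorem invoked in Sec.\,\ref{pcsec} guarantees the existence of a \emph{unique} normalized vector in Fock space annihilated by every $b^{\;}_n$.

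For the parity statement my approach is a two-step invariance argument. Let $\mathbi{P}\equiv(-1)^{\mathbi{N}}$ with $\mathbi{N}=\sum_ia_i^\dagger a^{\;}_i$ denote the boson parity. Because $\mathbi{P}\,a_i\,\mathbi{P}^{-1}=-a_i$, every Bogoliubov combination $b^{\;}_n=\sum_i(u_{ni}a^{\;}_i+v_{ni}a_i^\dagger)$ built from the eigenvectors of $H_\tau$ satisfies $\mathbi{P}\,b^{\;}_n\,\mathbi{P}^{-1}=-b^{\;}_n$, so each bilinear $b_n^\dagger b^{\;}_n$ commutes with $\mathbi{P}$. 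In particular $[\widehat{H}_b,\mathbi{P}]=0$ and $\ket{\Omega}$ carries a definite $\mathbi{P}$-eigenvalue. To pin down its sign, lift the Bogoliubov transformation to a Gaussian unitary $\mathbi{V}=e^{i\widehat{Q}_b}$ on Fock space, whose existence for finite $N$ is again granted by Stone--von Neumann--Mackey (Sec.\,\ref{pncsec}); the generator $\widehat{Q}_b$ is quadratic, so all its monomials $a_i^\dagger a^{\;}_j$, $a_i^\dagger a_j^\dagger$, $a^{\;}_j a^{\;}_i$ commute with $\mathbi{P}$ and hence $[\mathbi{V},\mathbi{P}]=0$. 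Writing $\ket{\Omega}=\mathbi{V}\ket{0_a}$ with $\mathbi{P}\ket{0_a}=+\ket{0_a}$, one concludes $\mathbi{P}\ket{\Omega}=+\ket{\Omega}$.

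The main obstacle in this route is not algebraic but infrastructural: one must know that the Bogoliubov diagonalization constructed from the eigenvectors of the non-Hermitian $H_\tau$ actually lifts to a bona fide Gaussian unitary on the Fock space of finitely many modes. This is precisely the ingredient that would fail if $H_b$ were merely positive semi-definite (where ZMs can spoil diagonalizability and split the vacuum), and it is what the hypothesis $H_b>0$---together with the finite-$N$ Stone--von Neumann--Mackey theorem---is doing all the work to secure. Once $\mathbi{V}$ is in hand, non-degeneracy reduces to uniqueness of the Fock vacuum for $N$ canonical bosonic modes, and evenness of the parity reduces to the fact that quadratic bosonic generators are manifestly parity-even.
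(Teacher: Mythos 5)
Your proof is correct, but it reaches the parity statement by a genuinely different route than the paper. The paper writes the quasiparticle vacuum explicitly in Thouless form, $|\Omega\rangle \propto \exp\big[\tfrac12 \hat\phi^\dagger(X^{-1}Y)\hat\phi^{\dagger\,\text{T}}\big]|0\rangle$, so that evenness is read off from the fact that the exponent creates particles in pairs; the entire technical burden is then showing that $X$ is invertible, which follows from the Bogoliubov identity $XX^\dagger-YY^\dagger=\mathds{1}_N$, whence $\det(XX^\dagger)\geq 1$. You instead avoid the explicit vacuum formula altogether and argue structurally: the Bogoliubov transformation is implemented by a Gaussian unitary $\mathbi{V}$ whose quadratic generator commutes with $\mathbi{P}=(-1)^{\mathbi{N}}$, so $\ket{\Omega}=\mathbi{V}\ket{0_a}$ inherits the even parity of the Fock vacuum. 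What your route buys is economy and generality --- no matrix identity is needed, and the conclusion extends verbatim to any state in the metaplectic orbit of the vacuum. What the paper's route buys is a sharper view of the fermion/boson contrast: the fermionic analogue of $X$ can fail to be invertible, and that failure is precisely the odd-parity case, whereas in your language the same dichotomy is the fact that $Sp(2N,\mathbb{R})$ is connected while the fermionic Bogoliubov group $O(2N)$ is not. One small wrinkle: the exponential map of $Sp(2N,\mathbb{R})$ is not surjective, so a given Bogoliubov transformation need not equal $e^{i\widehat{Q}_b}$ for a \emph{single} quadratic $\widehat{Q}_b$; but since the group is connected, $\mathbi{V}$ is a finite product of such exponentials (e.g., a particle-conserving rotation times a squeeze), each commuting with $\mathbi{P}$, so your conclusion $[\mathbi{V},\mathbi{P}]=0$ stands.
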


\noindent

\begin{proof}
Since \(H_b>0\), it follows that there are no ZMs and $\widehat{H}_{b}$ can be written in the form of Eq.\,(\ref{bosonic quasi-particle}). The ground state
is the unique vacuum for the quasi-particles. Let the Bogoliubov transformation from the $a$ to the $b$ bosonic modes be furnished by a $2N \times 2N$ matrix of the following block structure: 
\begin{eqnarray}
\label{xy}
\begin{bmatrix} b \\ b^\dagger \end{bmatrix} = 
\begin{bmatrix} X & -Y \\ -Y^* & X^* \end{bmatrix}
\begin{bmatrix} a \\ a^\dagger \end{bmatrix} ,
\end{eqnarray}
then, the quasi-particle vacuum state is \cite{Blaizot}:
\begin{equation}
\label{bgs}
|\Omega\rangle = \det(X^\dagger X)^{-1/4}\exp
\Big[ \frac{1}{2}\hat \phi^\dagger  (X^{-1}Y) \hat \phi^{\dagger  \text{T}} \Big]|0\rangle.
\end{equation}
Since the exponent is quadratic, the even parity of the ground state $|\Omega\rangle$ follows. We still need to show that the formula always holds, that is, we need to show that \(X\) is invertible. Because the matrix in Eq.\,\eqref{xy} is a canonical transformation, it follows that $XX^\dag-YY^\dag=\mathds{1}_{N}$, 
another form of the orthonormality relations of 
Eq.\,(\ref{orthonormality}). Thus,   $\det(XX^\dag)=\det(\mathds{1}_{N}+YY^\dag)\geq 1$, whereby 
$X$ is necessarily invertible.  
\end{proof}

On the one hand, odd fermion parity in the ground state and fermion parity switches are usual (albeit not mandatory\cite{AlasePRL,PRB1}) signatures of non-trivial SPT phases in fermionic systems, even interacting ones \cite{Ortiz14, Ortiz16}. On the other hand, for free fermions a ground-state expression equivalent to Eq. \eqref{bgs} exists\cite{Blaizot}, seemingly implying that fermions always have even parity ground states. Crucially, the fermionic analogue of the matrix \(X\) in Eq.\,\eqref{xy} can fail to be invertible, leading in those cases to an odd-parity ground state. For bosons, this possibility is excluded because $H_b>0$ implies a many-body gap (as we discussed in Sec.\,\ref{theiff}), a condition that has no analogue for fermions. We conclude that parity switches are non-existent for  stable free bosons. 

\subsection{Theorem 2: No SPT phases}

\begin{theorem}
Let $\widehat{H}_{b,1}$ and $\widehat{H}_{b,2}$ denote gapped, 
particle-non-conserving quadratic bosonic Hamiltonians sharing
a group of symmetries. Then $\widehat{H}_{b,1}$ can be adiabatically deformed into $\widehat{H}_{b,2}$ without breaking any of the symmetries, or closing the many-body gap. 
\end{theorem}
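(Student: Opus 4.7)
The plan is to exhibit an explicit symmetry-preserving adiabatic path by straight-line interpolation at the level of the Hermitian BdG matrix $H_b$, the key observation being that positive-definiteness---which simultaneously encodes stability and the absence of zero modes---defines an open convex cone that every classifying structure respects. First I would translate the hypotheses into single-particle language: by the stability theorem of Sec.~\ref{theiff}, each $\widehat{H}_{b,i}$ being bounded below forces $H_{b,i}\ge 0$, and the additional assumption of a many-body gap forbids zero modes of $H_{\tau,i}=\tau_3 H_{b,i}$; since $\tau_3$ is invertible this is equivalent to the strict bound $H_{b,i}>0$ for $i=1,2$. I would then define the straight-line path
\[
H_b(s) := (1-s)\,H_{b,1} + s\,H_{b,2},\qquad s\in[0,1],
\]
together with its effective BdG family $H_\tau(s)=\tau_3 H_b(s)$ and corresponding many-body Hamiltonian $\widehat{H}_b(s)$.

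Second I would verify that the three structural constraints survive along the path. Hermiticity and the particle-hole constraint $\tau_1 H_b^*\tau_1=H_b$ are real-linear conditions on $H_b$, hence stable under convex combinations. Strict positivity is preserved because positive-definite Hermitian matrices form a convex cone, so $H_b(s)>0$ for every $s\in[0,1]$. Invertibility of $H_b(s)$ then forces $H_\tau(s)$ to be nonsingular, so Eq.~\eqref{bosonic quasi-particle} yields a strictly positive excitation gap $\min_n\epsilon_n(s)$ at each $s$; by continuity in $s$ and compactness of $[0,1]$ this minimum is uniformly bounded below, so the many-body gap never closes.

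Third I would check preservation of the shared symmetry group. A common linear Gaussian symmetry descends, per the analysis of Sec.~\ref{pncsec}, to a pseudo-unitary $U_{\!\mathbi{V}}$ obeying $[U_{\!\mathbi{V}},H_{\tau,i}]=0$ for $i=1,2$; linearity of the interpolation immediately yields $[U_{\!\mathbi{V}},H_\tau(s)]=0$ for all $s$. A common antilinear time-reversal symmetry $\mathcal{T}=U_T^{-1}\mathcal{K}$ satisfies $\mathcal{T}H_{\tau,i}\mathcal{T}^{-1}=H_{\tau,i}$ and carries through the convex combination because $s$ and $1-s$ are real, so antilinearity poses no obstruction. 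Hence the linear path sits entirely inside the intersection of the stable-gapped cone with the symmetry-constraint surface, proving the theorem. The noteworthy feature is that there is essentially no obstacle: in contrast to free fermions---whose BdG matrix is arbitrary Hermitian and for which a straight-line interpolation generically closes the gap---the stability requirement $H_b>0$ by itself renders the classifying configuration space a contractible convex cone, trivializing any would-be SPT invariant from the outset.
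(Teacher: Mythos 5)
Your proposal is correct and follows essentially the same route as the paper: a straight-line convex interpolation of $H_b$, preservation of the particle-hole constraint and of the shared symmetries by (real-)linearity, and preservation of the gap via convexity of the cone of positive-definite matrices. The only differences are cosmetic — you phrase the symmetry argument at the level of $H_\tau$ rather than directly as $\mathbi{V}\widehat{H}_b(s)\mathbi{V}^{-1}=\widehat{H}_b(s)$, and you spell out the continuity-plus-compactness step for a uniform lower bound on the gap, which the paper leaves implicit.
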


\begin{proof}
Consider the continuous path
\begin{align}
\widehat{H}_{b}(s) = (1-s)\widehat{H}_{b,1}+s\widehat{H}_{b,2},\quad s \in [0,1],
\label{mapL}
\end{align}
which implies the Hermitian matrix
$H_b(s)=(1-s)H_{b,1}+sH_{b,2}$
satisfies the constraint $\tau_1 H_b^*(s)\tau_1=H_b(s)$ for all \(s\). 
Moreover, if $\mathbi{V}$ is a 
linear or antilinear symmetry shared by the initial and final free-boson systems, then \( \mathbi{V}\widehat{H}_{b}(s)\mathbi{V}^{-1} = \widehat{H}_{b}(s)\) for all \(s\). Finally, since
$\widehat{H}_{b,1}$ and $\widehat{H}_{b,2}$ are particle non-conserving and gapped, it follows that \(H_{b,1}, H_{b,2} >0\), hence 
\(H_b(s)=(1-s)H_{b,1}+sH_{b,2}>0\). In other words, \(\widehat{H}_{b}(s)\)
is fully gapped for all \(s\).  
\end{proof}

Notice that the map in Eq. \eqref{mapL} preserves the locality properties of \(H_{b,1}\) and $H_{b,2}$. A corollary of this theorem is that there are no non-trivial SPT phases of particle non-conserving free-boson systems. Any such system can be adiabatically deformed into a topologically trivial system
without closing the many-body gap or breaking any symmetry. Hence, one expects that stable, gapped systems of free bosons will not display any signatures of non-trivial topology.

How do free-fermion systems escape this triviality result? Again, the answer is that there is no counterpart for fermions of the bosonic gap condition \(H_b>0\). Referring back to the proof of Theorem 2,  
there is nothing  that can prevent, in general, the closing of the gap along the path \(H_{f}(s)=(1-s)H_{f,1}+sH_{f,2}\) of free-fermion Hamiltonians. In fact, a key insight of the tenfold way is that the gap \emph{must close} when \(H_f(s)\) interpolates between two topologically distinct Hamiltonians in the same symmetry class. By contrast, it is easy to find examples of paths of gapped Hamiltonians that connect different symmetry classes, regardless of topological invariants.   

\subsection{Theorem 3: No localized ZMs}
\label{nogo}

\begin{theorem}
\label{nogothm}
Consider a $d$-dimensional, translation-invariant free-boson system 
\(H_b(\mathbf{k})>0\), and the same system subject
to open BCs and described by $H^o_{\tau,\mathbf{k}_\parallel}$. 
Then, zero is not an eigenvalue of $H^o_{\tau,\mathbf{k}_\parallel}$.
\end{theorem}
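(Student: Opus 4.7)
The plan is to reduce the nonexistence of zero eigenvalues of $H^o_{\tau,\mathbf{k}_\parallel}$ to the strict positivity of the Hermitian partner $H^o_{b,\mathbf{k}_\parallel}$. Because $H^o_{\tau,\mathbf{k}_\parallel}=\tau_3\,H^o_{b,\mathbf{k}_\parallel}$ and $\tau_3$ is invertible, the kernels of the two matrices coincide, so it suffices to prove $H^o_{b,\mathbf{k}_\parallel}>0$ for every $\mathbf{k}_\parallel$ in the SBZ. I would not try to analyze the non-Hermitian $H^o_{\tau,\mathbf{k}_\parallel}$ directly, since its possible $2\times 2$ Jordan structure on a would-be zero subspace (Sec.\,\ref{statistic}) is awkward to control, whereas Hermitian positivity survives the standard operations I intend to use.

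To establish $H^o_{b,\mathbf{k}_\parallel}>0$ I would proceed in two steps. First, partially Fourier-transform along the $d-1$ directions parallel to the surface to obtain, at fixed $\mathbf{k}_\parallel$, a bi-infinite virtual-chain Hermitian matrix $H^{\text{bi}}_{b,\mathbf{k}_\parallel}$ whose Fourier symbol along the truncated direction equals the Bloch matrix $H_b(\mathbf{k})=H_{b,\mathbf{k}_\parallel}(k)$. Standard block-Laurent theory then yields $\mathrm{spec}\,H^{\text{bi}}_{b,\mathbf{k}_\parallel}=\bigcup_{k\in[-\pi,\pi)}\mathrm{spec}\,H_b(\mathbf{k})$. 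Since by hypothesis $H_b(\mathbf{k})>0$ on the compact Brillouin zone and $\mathbf{k}\mapsto H_b(\mathbf{k})$ is continuous, there exists $\epsilon>0$ with $H_b(\mathbf{k})\geq\epsilon\,\mathds{1}$; consequently $H^{\text{bi}}_{b,\mathbf{k}_\parallel}\geq\epsilon\,\mathds{1}$. Second, identify $H^o_{b,\mathbf{k}_\parallel}$ with the principal submatrix $P\,H^{\text{bi}}_{b,\mathbf{k}_\parallel}\,P$, where $P$ is the orthogonal projector onto the retained half-space, since imposing an ideal surface amounts precisely to discarding all matrix entries coupling kept sites to the removed half-space. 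Extending any half-space vector $v$ by zero gives $\langle v|H^o_{b,\mathbf{k}_\parallel}|v\rangle=\langle v|H^{\text{bi}}_{b,\mathbf{k}_\parallel}|v\rangle\geq\epsilon\|v\|^2$, hence $H^o_{b,\mathbf{k}_\parallel}\geq\epsilon\,\mathds{1}>0$, and the theorem follows.

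The main hurdle is structural rather than analytic: one must verify that the ``ideal surface'' prescription adopted throughout the paper (see Sec.\,\ref{mapping2}, and Sec.\,II\,A of Ref.\,[\onlinecite{Cobanera18}]) really corresponds to the Dirichlet-style compression $H^o_b=P\,H^{\text{bi}}_b\,P$, i.e.\ that no extra Hermitian boundary terms (on-site potentials, surface self-energies) are grafted on when the bulk is cut; a sufficiently negative boundary modification could otherwise spoil positivity. Granting this, the entire theorem collapses to the one-line observation that a compression of a uniformly positive operator remains uniformly positive, which neatly parallels the convex-combination argument used for Theorem 2 and reinforces the message that all three no-go theorems trace back to the single stability constraint $H_b>0$.
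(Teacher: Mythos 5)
Your argument is correct, and it reaches the same conclusion (invertibility of $H^o_{b,\mathbf{k}_\parallel}$, hence of $H^o_{\tau,\mathbf{k}_\parallel}=\tau_3 H^o_{b,\mathbf{k}_\parallel}$) by a genuinely different and more elementary route. The paper works with the symbol $G(e^{ik})=H_{b,\mathbf{k}_\parallel}(k)>0$ and invokes matrix Wiener--Hopf factorization theory: a positive-definite continuous symbol admits a canonical factorization $G=G_+G_-$ (Theorem 1.13 of Ref.\,[\onlinecite{gohberg}]), and a block-Toeplitz operator is invertible if and only if its symbol admits such a factorization (Theorem 2.13). You instead observe that $H^o_{b,\mathbf{k}_\parallel}$ is the compression $P H^{\text{bi}}_{b,\mathbf{k}_\parallel} P$ of the block-Laurent operator, which by Parseval satisfies $H^{\text{bi}}_{b,\mathbf{k}_\parallel}\geq\epsilon\,\mathds{1}$ with $\epsilon=\min_{\mathbf{k}}\,\mathrm{spec}\,H_b(\mathbf{k})>0$ by compactness and continuity, and that compressions preserve uniform positivity; this yields $H^o_{b,\mathbf{k}_\parallel}\geq\epsilon\,\mathds{1}$, which for a self-adjoint operator gives invertibility, not merely triviality of the kernel. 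Your structural worry is the right one and is resolved by the paper's own definitions: the block-Toeplitz property $[G^o]_{i,j}=[G^o]_{i+1,j+1}$ assumed in Sec.\,\ref{nogo} \emph{is} the Dirichlet-style compression with no added boundary terms, so no surface self-energy can spoil positivity. What each approach buys: yours is self-contained, avoids the factorization machinery entirely, and extends verbatim to Corollary \ref{nogocorollary} (replace $G$ by $G-\epsilon\tau_3>0$) and to the perturbative extension at the end of Sec.\,\ref{nogo}; the paper's Wiener--Hopf route is heavier here but connects to the general bulk-boundary-correspondence formalism of Ref.\,[\onlinecite{Alase20}], where the same factorization controls invertibility for symbols that are \emph{not} sign-definite, a regime your positivity argument cannot reach.
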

\smallskip

\noindent
\textit{Remark.} This theorem is true even if the pairing contributions vanish, \(\Delta=0\), in which case the condition \(H_b({\bf k})>0\) is equivalent to \(K({\bf k})>0\) and does not imply a many-body gap in general.

\begin{proof}
The proof requires results from the theory of {\it matrix Wiener-Hopf factorization}\cite{gohberg,Alase20}. While the following argument is not self-contained, we provide references for all the necessary auxiliary  theorems. With reference to Sec.\,\ref{mapping2} for
our notation, let $\mathbf{k}_\parallel \in \text{SBZ}$ and  
\begin{align}
G(e^{ik}) \equiv H_{b,\mathbf{k}_\parallel}(k), \quad \forall 
k \in [-\pi,\pi),
\end{align}
that is, \(G\) is explicitly defined as a function on the unit circle in ${\mathbb C}$. Since $H_b(\mathbf{k})>0$, it follows that $G(e^{ik}) = H_{b,\mathbf{k}_\parallel}(k) > 0$ for all $k$. The matrix-valued function \(G\) is the \textit{symbol} of the block-Toeplitz operator $G^{o} = H^{o}_{b,\mathbf{k}_\parallel}.$
Regarding \(G^o\) as an infinite matrix, one can state the block-Toeplitz property  as \([G^o]_{i,j}=[G^o]_{i+1,j+1}\), with \(i,j=1,2,\cdots,\infty\) being the coordinate of a lattice point in the direction perpendicular to the termination of the lattice. The blocks \([G^o]_{i,j}\) act on internal, not lattice degrees of freedom. Back to the symbol, because \(G(e^{ik})>0\) for all $k \in [-\pi,\pi)$, it admits a {\it canonical Wiener-Hopf factorization}, that is, a factorization of the form 
\begin{eqnarray}
G(e^{ik}) = G_+(e^{ik})G_-(e^{ik}),
\end{eqnarray}
where the entries of $G_+(e^{ik})$ $\big(G_-(e^{ik})\big)$ and their inverses are analytic inside (outside) the unit circle, see Theorem 1.13 in Ref.\,[\onlinecite{gohberg}] (and Ref.\,[\onlinecite{youla}] for a system-theoretic perspective). Further, according to Theorem 2.13 in Ref.\,[\onlinecite{gohberg}], a block-Toeplitz operator is invertible if and only if its symbol admits a canonical Wiener-Hopf factorization. In our case, this implies that the block-Toeplitz operator $G^{o}$ is invertible. Consequently, $H^{o}_{\tau,\mathbf{k}_\parallel} = 
\tau_3 H^{o}_{b,\mathbf{k}_\parallel} = \tau_3G^o$ is also invertible. In conclusion, zero does not belong to the spectrum of $H^{o}_{\tau,\mathbf{k}_\parallel}$ for any value of $\mathbf{k}_\parallel \in \text{SBZ}$, and therefore it does not belong to the spectrum of $H^{o}_{\tau}$. 
\end{proof}

\begin{corollary}
\label{nogocorollary}
Localized midgap states cannot exist in the spectral gap separating positive from negative 
eigenvalues of $H^{o}_{\tau,\mathbf{k}_\parallel}$.
\end{corollary}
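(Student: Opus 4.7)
The plan is to reduce the corollary to Theorem~\ref{nogothm} by a spectral shift. For any candidate midgap energy $E$, I would write $H^o_{\tau,\mathbf{k}_\parallel} - E\mathds{1} = \tau_3 (H^o_{b,\mathbf{k}_\parallel} - E\tau_3)$ and regard the right-hand factor as a block-Toeplitz operator whose matrix symbol is $G(e^{ik}) - E\tau_3 = H_{b,\mathbf{k}_\parallel}(k) - E\tau_3$. If this shifted symbol is strictly positive-definite on the unit circle, then the Wiener--Hopf factorization argument used in Theorem~\ref{nogothm} transfers verbatim to give invertibility of $H^o_{b,\mathbf{k}_\parallel} - E\tau_3$, and hence of $H^o_{\tau,\mathbf{k}_\parallel} - E\mathds{1}$; this excludes $E$ from the spectrum of $H^o_{\tau,\mathbf{k}_\parallel}$.

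To secure positivity of the shifted symbol, I would first define the bulk gap $\epsilon_{\min}(\mathbf{k}_\parallel) \equiv \min_{n,k} \epsilon_n(\mathbf{k}_\parallel,k) > 0$, where $\pm \epsilon_n(\mathbf{k}_\parallel,k)$ are the eigenvalues of the periodic bulk $H_\tau(\mathbf{k}_\parallel, k)$; strict positivity follows from $H_b(\mathbf{k}) > 0$ combined with compactness of the Brillouin circle and continuity in $k$. Next I would invoke the paraunitary (Bogoliubov) normal form guaranteed by $H_b > 0$: at each $k$ there exists $T(\mathbf{k})$ with $T^\dagger \tau_3 T = \tau_3$ and $T^\dagger H_b T = \mathrm{diag}(\epsilon_n,\epsilon_n)$. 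Congruence by $T$ then sends $H_b - E\tau_3$ to $\mathrm{diag}(\epsilon_n - E, \epsilon_n + E)$, which is positive-definite iff $|E| < \min_n \epsilon_n(\mathbf{k}_\parallel,k)$; taking the infimum over $k$ gives uniform positivity of the symbol for every $E$ in the open interval $(-\epsilon_{\min}(\mathbf{k}_\parallel), \epsilon_{\min}(\mathbf{k}_\parallel))$.

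Combining these two steps, any such $E$ is forbidden from the spectrum of $H^o_{\tau,\mathbf{k}_\parallel}$, which rules out localized midgap states (indeed, eigenstates of any character) in the gap separating the positive and negative eigenvalues. The main obstacle I anticipate is passing from the pointwise paraunitary diagonalization of $H_b(\mathbf{k})$ to a uniform-in-$k$ positivity bound on the symbol $G(e^{ik}) - E\tau_3$, since the Wiener--Hopf invertibility criterion demands strict positivity on the whole unit circle; this is handled by compactness of the Brillouin circle and continuity of the Bogoliubov spectrum, just as in the unshifted proof of Theorem~\ref{nogothm}.
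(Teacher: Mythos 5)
Your proof follows the same route as the paper's: shift the symbol to $G(e^{ik})-E\tau_3$, verify positive definiteness on the unit circle, and invoke the canonical Wiener--Hopf factorization to conclude invertibility of $H^{o}_{b,\mathbf{k}_\parallel}-E\tau_3$ and hence of $H^{o}_{\tau,\mathbf{k}_\parallel}-E\mathds{1}$. The only difference is that you make explicit, via the Bogoliubov congruence $T^\dagger(H_b-E\tau_3)T=\mathrm{diag}(\epsilon_n-E,\epsilon_n+E)$ and compactness of the Brillouin circle, why the shifted symbol is positive definite for $|E|$ below the bulk gap --- a step the paper leaves implicit --- so your argument is correct and essentially identical.
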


\begin{proof}
Suppose that a state with eigenvalue $\epsilon$ lies in that spectral gap. 
Then, similarly to the proof in Theorem 3, $G(e^{ik})-\epsilon \tau_3$ is positive definite and admits a canonical Wiener-Hopf factorization. As a result, $H^o_{b,\mathbf{k}_\parallel} - \epsilon \tau_3$ 
and $H^o_{\tau,\mathbf{k}_\parallel} - \epsilon\mathds{1}_\infty$ are invertible, whereby it follows that $\epsilon$ does not belong to the spectrum of $H^o_{\tau,\mathbf{k}_\parallel}$.
\end{proof}

\noindent
Physically, Theorem \ref{nogothm} excludes the possibility of having  a ZM. Corollary \ref{nogocorollary} excludes surface bands altogether in the spectral gap around zero energy, regardless of whether they cross zero energy. 

Since our no-go Theorem \ref{nogothm} and its Corollary \ref{nogocorollary} address a spectral connection between periodic and open BCs, they may be taken to provide a no-go result for a bulk-boundary correspondence, at least of a standard form.  
Their reach can be further extended by considering generic single-particle perturbations bounded in the operator norm. Suppose that a perturbation $W_b$ that satisfies the constraint $\tau_1 W_b^*\tau_1=W_b$ is added to $H_\tau^o$. Then the effective bosonic BdG Hamiltonian is 
\begin{align}
H_\tau^o+W_\tau = H_\tau^o[\mathds{1} + (H_\tau^o)^{-1}W_\tau], 
\end{align}
with $W_\tau= \tau_3 W_b$. As long as $||W_\tau|| < 1/||(H_\tau^o)^{-1}||$, the above Hamiltonian 
is invertible and therefore does not have a zero eigenvalue. Notice that  $1/||(H_\tau^o)^{-1}||$ equals the smallest energy eigenvalue of 
$H^o_\tau$, which is the first excitation energy above the bosonic vacuum.
These perturbations do not include bulk disorder but can model a variety of BCs and boundary disorder that decays sufficiently fast into the bulk.

\section{Squaring the fermion}
\label{classification}

\subsection{The square of a fermion is a boson}
\label{diquare}

An even-dimensional Hermitian matrix $H$ can arise as the BdG Hamiltonian 
of a free-fermion system if and only if it satisfies the particle-hole constraint $\tau_1H^*\tau_1=-H$ of Sec. \ref{mapping2}. One can recast this constraint in terms of a projector superoperator,  
\begin{eqnarray}
{\cal F}(H)\equiv \frac{1}{2}\left( H-\tau_1H^*\tau_1 \right), \quad H=H^\dagger.
\end{eqnarray}
That is, an even-dimensional Hermitian matrix $H_f$ can be associated to a fermionic BdG Hamiltonian if and only if ${\cal F}(H_f)=H_f$. Similarly, an even-dimensional Hermitian matrix $H$ can be associated to a free-boson system if and only if it satisfies the constraint $\tau_1H^*\tau_1=H$, which again can be recast in terms of a projector superoperator, 
\begin{eqnarray}
{\cal B}(H)\equiv \frac{1}{2}\left(H+\tau_1H^*\tau_1\right).
\end{eqnarray}
We call a Hermitian matrix $H_b$ with ${\cal B}(H_b)=H_b$ \textit{bosonic}. 

Now we are in a position to state two interesting relationships between fermionic BdG Hamiltonians and bosonic matrices:
\begin{enumerate}
    \item Every even-dimensional Hermitian matrix is the sum of a {\em unique} fermionic BdG Hamiltonian and a {\em unique} bosonic matrix; and
    \item The square of a fermionic BdG Hamiltonian is a bosonic matrix.
\end{enumerate}
The first result follows because the projectors \({\cal F}\) and \({\cal B}\) are complementary, $H={\cal F}(H)+{\cal B}(H)$, and have disjoint ranges, \({\cal F}\circ{\cal B}={\cal B}\circ{\cal F}=\emptyset\). The second result follows because ${\cal B}(H_f^2)=H_f^2$. More generally, bosonic matrices can be obtained from fermionic ones via application of a broader class of even functions, e.g., if \(\mathds{P}\) is an even polynomial, then \(\mathds{P}(H_f)\) is a bosonic matrix. However, not only does the square function provides the simplest mathematical option but, as we discuss next, the resulting fermion-to-boson mapping has a number of remarkable properties from a physical standpoint.   

By construction, the bosonic matrix 
\(H_f^2\) is positive semi-definite, $H_f^2\geq 0$. Hence, 
the free-boson system described by the quadratic bosonic Hamiltonian 
\begin{eqnarray}
\label{BdG}
\epsilon_0 \widehat{H}_b=\frac{1}{2}\hat{\Phi}^\dagger H_f^2\hat{\Phi}-\frac{1}{2}\mbox{tr}(K'), \quad \epsilon_0>0,
\end{eqnarray}
is stable. Here \(\epsilon_0\) is some suitable constant with units of energy and \(K' \equiv K^2-\Delta\Delta^*\) in terms of the single-particle Hamiltonian and pairing for the free-fermion system. Accordingly, we have identified a \textit{squaring map}, 
\begin{align}
{\mathscr S}(H_f)\equiv \tau_3H_f^2 \equiv H_{\tau,{\mathscr S}},
\end{align}
from BdG Hamiltonians of free-fermion systems to effective BdG Hamiltonians of stable free-boson systems. 

The above squaring map is interesting because the kernel, that is, the ZMs, of \(H_f\) and \(\tau_3H_f^2\) coincide, even though \(\tau_3H_f^2\) may also display additional zero-energy \emph{generalized} eigenvectors (see Sec.\,\ref{1DKitaev} for an example). In this sense, the square-of-a-fermion procedure offers a systematic way of constructing bosonic models with ``Majorana bosons'' in a gap or as part of a surface band. Notice that locality is not a concern. If $H_f$ features a non-zero hopping amplitude between sites that are \(r\) units apart, then \(H_f^2\) features a hopping amplitude between sites that are at most \(2r\) units apart (see Sec.\,\ref{examples} for explicit examples). Importantly, not all effective BdG Hamiltonians are in the range of the squaring map. For example, for periodic BCs, the effective BdG Hamiltonian of the gapless harmonic chain, \(\widehat{H}_{\sf hc}=\sum_{k} \sqrt{2(1-\cos k)} (a^\dagger_k a^{\;}_k + 1/2) + P_0^2/2\), displays a \emph{single} zero-energy eigenvector associated to the conserved total momentum operator $P_0$. Since zero eigenvectors of \(H_f\) come in pairs, the same is true of \(\tau_3H_f^2\) and so the harmonic chain \emph{cannot} possibly be the square of a fermion. 

Besides lattice models as in Sec. \ref{mapping}, the squaring map we have introduced extends naturally to continuum models. We consider the simplest case of the Dirac Hamiltonian in three spatial dimensions for illustration\cite{NoteHerbut}. In the absence of gauge fields, the latter reads
\begin{eqnarray}
H_D=c(\gamma_1 p_1+\gamma_2 p_2+\gamma_3 p_3)+mc^2\gamma_4,
\end{eqnarray}
where \(p_\nu=-i\hbar{\partial}/{\partial x^\nu}\), \(\gamma_\nu\, \nu=1,2,3,4,\) are Hermitian matrices satisfying the Clifford algebra $\{\gamma_\nu,\gamma_{\nu'}\}=2\delta_{\nu\nu'}$,
which force them to be (at least) \(4\times 4\) matrices. The choice $\gamma_1=
\mathds{1}_2\otimes\sigma_1,\,\gamma_2=\sigma_1\otimes\sigma_3,\,\gamma_3=
\sigma_2\otimes\sigma_3$ and $\gamma_4=\sigma_3\otimes\sigma_3$ highlights the fact 
that the Dirac Hamiltonian can be regarded as a (continuous-coordinate) fermionic BdG Hamiltonian of the form 
$H_D=\begin{bmatrix}
K& \Delta\\
-\Delta^*& -K^*
\end{bmatrix}$, with $K\equiv\sigma_1 cp_1+\sigma_3 mc^2$ and $\Delta\equiv\sigma_3 c(p_2-ip_3)$. Thus, one can second-quantize $H_D$ as  
\begin{eqnarray}
\widehat{H}_D=\frac{1}{2}\int \hat{\Psi}^\dagger(\vec{x})H_D\hat{\Psi}(\vec{x})\, d^3x,
\end{eqnarray}
a field theory in terms of the Nambu array 
\begin{align}
\hat{\Psi}^\dagger(\vec{x})=
\begin{bmatrix}c_1^\dagger(\vec{x})&c_2^\dagger(\vec{x})&c_1(\vec{x})&c_2(\vec{x})\end{bmatrix} ,
\end{align}
with
$\big\{c_i(\vec{x}),c_j(\vec{y})\big\}=0$, 
$\big\{c_i(\vec{x}),c^\dagger_j(\vec{y})\big\}=\delta_{ij}\delta(\vec{x}-\vec{y})$.
Applying the map ${\mathscr S}$ yields the associated free-boson system with the bosonic matrix 
\begin{eqnarray}
H_D^2=(p^2c^2+m^2c^4)\otimes \mathds{1}_4.
\end{eqnarray}
Choosing \(\epsilon_0=2mc^2\), the gap of \(H_D\), we obtain the 
free-boson second-quantized form
\begin{align}
\widehat{H}_b= \frac{1}{2}\int \!\hat{\Phi}^\dagger
(\vec{x})\Big(\frac{p^2}{2m}\otimes \mathds{1}_4+\frac{mc^2}{2}\otimes \mathds{1}_4\Big)
\hat{\Phi}(\vec{x})\, d^3x ,
\end{align}
in terms of the Nambu array of canonical bosons 
\(\hat\Phi^\dagger(\vec{x})=\begin{bmatrix}a_1^\dagger(\vec{x})&a_2^\dagger(\vec{x})&a_1(\vec{x})&a_2(\vec{x})\end{bmatrix}\). From now on we will drop any explicit reference to \(\epsilon_0\).

\subsection{Symmetry analysis of the squaring map}
\label{SS}

While the squaring map can break some fermionic symmetries, it certainly
preserves many as well. Here, we investigate those fermionic continuous symmetries that are inherited by the squaring map. In preparation for the 
classification of squared ensembles we will consider in Sec.\,\ref{symmetryclasses}, we will also need a detailed understanding of how the resulting symmetry reduction 
intertwines with the squaring map and the indefinite metric $\tau_3$. 

Consider the 
generators of symmetries of $H_f$
\begin{align}
\mathfrak{g}_f\equiv \{Q_f=Q_f^\dagger \, | \, \tau_1Q_f^*\tau_1=-Q_f, \ [Q_f,H_f]=0\} .
\end{align}
\(Q_f\in \mathfrak{g}_f\) if and only if the conserved charge $\widehat{Q}_f$ obeys \([\widehat{Q}_f,\widehat{H}_f]=0\). The fermion number and the total spin are good examples. The Lie group $e^{i\mathfrak{g_f}}$ of unitary matrices $U_f$ that commute with \(H_f\) and satisfy 
\(\tau_1U_f^*\tau_1=U^{\;}_f\) is precisely the group 
associated to fermionic Gaussian symmetries of particle-non-conserving systems. The block structure of these matrices is
\[
U_f=
\begin{bmatrix}
{\sf A} & {\sf B}\\ {\sf B}^* & {\sf A}^*
\end{bmatrix}, 
\quad {\sf A}{\sf A}^\dagger+{\sf B}{\sf B}^\dagger=\mathds{1}_N,\quad {\sf A}{\sf B}^\text{T}+{\sf B}{\sf A}^\text{T}=0,
\]
with \({\sf A}\) and \({\sf B}\) \(N\times N\) matrices. 
Similarly, let 
\begin{align}
\mathfrak{g}_b\equiv \{Q_b=Q_b^\dagger \, |\, \tau_1Q_b^*\tau_1=Q_b,\ [Q_\tau,H_\tau]=0\}
\end{align}
be the generators of symmetries of \(H_\tau=\tau_3H_f^2\), with $Q_\tau=\tau_3Q_b$
(see Sec.\,\ref{pncsec}). Again, 
\(Q_b\in \mathfrak{g}_b\) if and only if \([\widehat{Q}_b,\widehat{H}_b]=0\).
Since the squaring map involves the $\tau_3$ matrix (Nambu formalism in Table I), one needs to analyze fermionic symmetries that either commute or anticommute with $\tau_3$. 

First, let us focus on $U_f$ that commutes with \(H_f\) and \(\tau_3\). Then, 
\(U_f\) is also a pseudo-unitary matrix (with 
\(U^{\;}_f\tau_3U_f^\dagger=U_fU_f^\dagger \tau_3
=\tau_3\)), that commutes with \(\tau_3H_f^2\).
Consider the class of symmetry generators of the form
\begin{eqnarray}
\label{diagq}
Q_f=Q_\tau=Q\equiv \begin{bmatrix}q& 0\\ 0 & -q^*\end{bmatrix},\quad q=q^\dagger ,
\end{eqnarray}
resulting in
\begin{align}
[Q,H_f]=\begin{bmatrix} [q,K] & q\Delta+\Delta q^*\\
(q\Delta+\Delta q^*)^* & [q,K]^* \end{bmatrix} .
\end{align}
Then, symmetries of the particle-conserving part $K$, which are preserved by pairing $\Delta$, will always be symmetries of the free-boson system $\tau_3H_f^2$. 

The symmetry reduction of the squared ensemble can be achieved by first reducing the fermionic ensemble and then applying the squaring map ${\mathscr S}$
to each block, \textit{but with respect
to a suitably defined indefinite metric}. A key aspect of the problem is precisely the determination of the appropriate reduced metric. Technical details of this symmetry-reduction analysis can be found in Appendix \ref{afterandbefore}, whereas a summary is shown
in Table\,\ref{squared_subensembles}. This table describes the key structural feature of the subensembles of \(\{\tau_3H_f^2\}\) as determined by the shared conserved quantum number labelling the blocks and the associated subensembles of \(\{H_f\}\). 

\begin{table}
    \centering 
    \caption{Subensembles of \(\{\tau_3 H_f^2\}\) parametrized by the subensembles of \(\{H_f\}\).
    The blocks are labelled by 
    eigenvalues (Eig) of \(q\) [see Eq.\,\eqref{diagq}] in the first column, with the corresponding degeneracy (Deg) listed in the second column. 
    The symbol (f)/(b) in the third/fourth column means fermions/bosons (squared fermions in particular). SPH stands for the single-particle Hamiltonian, 
    with the block describing a particle-conserving many-body block, while the block of BdG describes a particle non-conserving many-body block and displays the particle-hole constraint. The block of \(U(1)\) describes a 
    particle-non-conserving many-body block with a conserved  
    \(U(1)\) charge, for example, the total spin; thanks to this charge, 
    the need for the Nambu formalism is bypassed 
    and there is no particle-hole constraint. }
        \smallskip
    \begin{tabular}{x{8mm} x{9mm} x{19mm} x{14mm} x{29mm}}
    \hline\hline \\ [-2ex]
     Eig &  Deg   & Block (f) & Block (b) & Metric
    \tabularnewline [1ex] \hline \\[-1ex]
    \(\kappa\)         &  \(m\)    &    \(K_\kappa\)\ \, (SPH)      & \(K_\kappa^2\) & \(\mathds{1}_m\) 
    \tabularnewline [1ex] \\
    0               &   \(m\)   & \(H_{f,0}\)\  (BdG)   &  \(\tau_3H_{f,0}^2\)  & \(\tau_3=\sigma_3\otimes \mathds{1}_m\)
    \tabularnewline [1ex] \\
    \(\kappa, -\kappa\) &  \( m, n\)    & \(H_{\pm\kappa}\)\ (\(U(1)\)) & \(\tau_{m,n}H_{\pm\kappa}^2\)
     & \(\tau_{m,n}=\begin{bmatrix}\mathds{1}_m& 0\\0&-\mathds{1}_n\end{bmatrix}\)      
    \tabularnewline\\ [-1ex] \hline\hline
    \end{tabular}
    \label{squared_subensembles}
\end{table}

\smallskip

{\em Example.} For illustration, we square the BdG Hamiltonian of a conventional BCS superconductor. We start with the BCS mean-field Hamiltonian 
\begin{align*}
\widehat{H}_f=\!\sum_{\mathbf{k}, \sigma\in \{ \uparrow,\downarrow \} }(\epsilon_\mathbf{k}-\mu)c_{\mathbf{k}\sigma}^\dagger c^{\;}_{\mathbf{k}\sigma}
+\sum_{\mathbf{k}}(\Delta c_{\mathbf{k}\uparrow}^\dagger c_{\mathbf{-k}\downarrow}^\dagger + \text{h.c.}).
\end{align*}
In the BdG formalism, the above equation becomes $\widehat{H}_f=\sum_{\mathbf{k}}\hat{\Psi}_\mathbf{k}^\dagger 
H_f(\mathbf{k})\hat{\Psi}_\mathbf{k}/2+\sum_{\mathbf{k}}(\epsilon_\mathbf{k}-\mu)$, where $\hat{\Psi}_\mathbf{k}^\dagger
=[c_{\mathbf{k}\uparrow}^\dagger\,c_{\mathbf{k}\downarrow}^\dagger\,c^{\;}_{\mathbf{-k}\uparrow}\,c^{\;}_{\mathbf{-k}\downarrow}]$ 
and $H_f(\mathbf{k})=
\begin{bmatrix}
K(\mathbf{k})	&\Delta(\mathbf{k})\\
-\Delta^*(\mathbf{-k})	&-K^*(\mathbf{-k})
\end{bmatrix}$, with $K(\mathbf{k})=
\begin{bmatrix}
\epsilon_\mathbf{k}-\mu	&0\\
0	                    &\epsilon_\mathbf{k}-\mu
\end{bmatrix}$ and $\Delta(\mathbf{k})=
\begin{bmatrix}
0	    &\Delta\\
-\Delta	&0
\end{bmatrix}$.
Then, besides the built-in particle-hole constraint 
$\tau_1\mathcal{K}=(\sigma_1\otimes\mathds{1}_2)\mathcal{K}$, $H_f(\mathbf{k})$ has another particle-hole symmetry $(\sigma_2\otimes\sigma_3)\mathcal{K}$ and, therefore, a unitary commuting symmetry $U=\sigma_3\otimes\sigma_3$, with $[\tau_3, U]=0$. Furthermore, $U$ is also a unitary commuting symmetry of $H_\tau(\mathbf{k})=\tau_3 H_f^2(\mathbf{k})$, that can be block-diagonalized. After defining a $4\times 4$ permutation matrix $P$ with non-vanishing elements $P_{11}=P_{23}=P_{34}=P_{42}=1$, we have 
 \[
 P^\dagger H_\tau(\mathbf{k})P=P^\dagger\tau_3 H_f^2(\mathbf{k})P=\big(P^\dagger\tau_3 P\big)\big(P^\dagger H_f(\mathbf{k})P\big)^2,
 \] 
 where 
 $P^\dagger\tau_3 P
 =\begin{bmatrix}
\sigma_3 &0\\
0        &\sigma_3
\end{bmatrix}$ 
and 
\begin{align}
P^\dagger H_f(\mathbf{k})P=
\begin{bmatrix}
\widetilde{H}_f(\mathbf{k},\mu,\Delta) &0\\
0                                      &\widetilde{H}_f(\mathbf{k},\mu,-\Delta)
\end{bmatrix},
\end{align}
with 
$\widetilde{H}_f(\mathbf{k},\mu,\Delta)=\begin{bmatrix}
\epsilon_\mathbf{k}-\mu &\Delta\\
\Delta^*                  &-(\epsilon_\mathbf{-k}-\mu)
\end{bmatrix}$, that is, the irreducible block of $H_f(\mathbf{k})$. Now we have
\begin{align}
P^\dagger H_\tau(\mathbf{k})P=
\begin{bmatrix}
\sigma_3\widetilde{H}_f^2(\mathbf{k},\mu,\Delta) &0\\
0                                      &\sigma_3\widetilde{H}_f^2(\mathbf{k},\mu,-\Delta)
\end{bmatrix},
\end{align}
and we only need to focus on the irreducible block 
$\sigma_3\widetilde{H}_f^2(\mathbf{k},\mu,\Delta)\equiv\sigma_3\widetilde{H}_f^2(\mathbf{k})\equiv 
\widetilde{H}_\tau(\mathbf{k})$ of $H_\tau(\mathbf{k})$ and the irreducible block 
$\widetilde{H}_f(\mathbf{k},\mu,\Delta)\equiv\widetilde{H}_f(\mathbf{k})$ of $H_f(\mathbf{k})$. 
As we can see, $\widetilde{H}_\tau(\mathbf{k})=\sigma_3\widetilde{H}_f^2(\mathbf{k})$ resembles $H_\tau(\mathbf{k})=\tau_3 H_f^2(\mathbf{k})$ a lot. However, $\widetilde{H}_f(\mathbf{k})$ no longer has the built-in particle-hole constraint. Instead, $\widetilde{H}_f(\mathbf{k})$ has another particle-hole symmetry $\sigma_2\mathcal{K}\equiv U^\dagger_C\mathcal{K}$, and usually a time-reversal symmetry $\mathcal{K}$ as well (with $U^{\;}_T=\mathds{1}_2$), if $\epsilon_\mathbf{k}=\epsilon_\mathbf{-k}$ and $\Delta=\Delta^*$ -- which means that the BdG Hamiltonian of conventional BCS superconductors 
usually belongs to class CI. Nonetheless, we will work with the case $\epsilon_\mathbf{k}\neq\epsilon_\mathbf{-k}$, so that the pairing potential will not vanish in $\widetilde{H}_f^2(\mathbf{k})$, with
\begin{align}
\widetilde{H}_f^2(\mathbf{k})=\begin{bmatrix}
(\epsilon_\mathbf{k}-\mu)^2+|\Delta|^2              &\Delta(\epsilon_\mathbf{k}-\epsilon_\mathbf{-k})\\
\Delta^*(\epsilon_\mathbf{k}-\epsilon_\mathbf{-k})  &(\epsilon_\mathbf{-k}-\mu)^2+|\Delta|^2
\end{bmatrix},
\end{align}
and with the quadratic bosonic Hamiltonian associated to $H_\tau(\mathbf{k})$ being given by
\begin{align}
\label{BCSsquared}
\widehat{H}_b=&\sum_{\mathbf{k}\sigma}\big[(\epsilon_\mathbf{k}-\mu)^2+|\Delta|^2\big]a_{\mathbf{k}\sigma}^\dagger a^{\;}_{\mathbf{k}\sigma}\nonumber\\
+&\sum_{\mathbf{k}}\big[\Delta(\epsilon_\mathbf{k}-\epsilon_\mathbf{-k}) a_{\mathbf{k}\uparrow}^\dagger a_{\mathbf{-k}\downarrow}^\dagger + \text{h.c.}\big].
\end{align}

Back to the general discussion, consider next a unitary symmetry \(U_f\) that anti-commutes with \(\tau_3\). It follows that \(U_f^\dagger\tau_1\) is a pseudo-unitary matrix and 
\(U_f^\dagger\tau_1{\cal K}\equiv \mathcal{T}\) commutes with \(\tau_3H_f^2\). 
An interesting physical observation emerges in this case. 
If \(\{U_f,\tau_3\}=0\), one can write \begin{align}
U_f=\begin{bmatrix} 0 & U_C^\dagger\\ U_C^\text{T} & 0\end{bmatrix}, \quad U^{\;}_CU_C^\dagger=\mathds{1}_N .
\end{align} 
Then, the symmetry condition $[U_f,H_f]=0$ reads 
\begin{align}
\begin{bmatrix} -U_C^\dagger K^*U_C & -U_C^\dagger\Delta^*U_C^*\\
U_C^\text{T}\Delta U_C & U_C^\text{T} K U_C^*\end{bmatrix}=
\begin{bmatrix} K& \Delta\\ -\Delta^* & -K^*\end{bmatrix}.
\end{align}
Hence, these symmetries of the fermionic BdG Hamiltonian are inherited
from a particle-hole condition satisfied by the single-particle Hamiltonian.
The associated bosonic system after squaring inherits instead a time-reversal symmetry \({\cal T}=U_f^\dagger\tau_1{\cal K}\). This phenomenon is akin to the notion of symmetry transmutation first discussed in the context of dualities\cite{Cobanera}. 

Finally, fermionic symmetries \(U_f\) that neither commute nor anticommute
with \(\tau_3\) are broken by the squaring procedure. These symmetries mix \(K\) and \(\Delta\), and  emerge because of the specific interplay between \(K\) and \(\Delta\). 
An ensemble of BdG Hamiltonians necessarily satisfies a particle-hole constraint and could satisfy other classifying conditions either before or after the symmetry reduction. We will address the interplay of these classifying conditions 
with the squaring map next.

\subsection{Topological classification}
\label{symmetryclasses}

\begin{table*}[]
    \centering
    \caption{Particle-conserving free fermions $H_f(\mathbf{k})$ under the squaring map. The left major column corresponds to free fermions, 
    whereas the right major column  
    corresponds to free bosons, obtained by squaring. 
    The three symmetries $\mathcal{T}$, $\mathcal{C}$ and $\mathcal{S}$ are denoted by $1\,(-1)$ if they square to $\mathds{1}\,(-\mathds{1})$, and by $0$ if they are absent. 
    For $H_f(\mathbf{k})$ 
    with chiral symmetry, a unitary commuting symmetry of $H_f^2(\mathbf{k})$ exists and, 
    after block-diagonalization, 
    $H_f^2(\mathbf{k})$ will fall into classes \{A, AI, AII\}.} 
\begin{tabular}{C{1.45cm} C{1.45cm} C{1.45cm} C{1.45cm} c | C{1.45cm} C{1.45cm} C{1.45cm} C{1.45cm} c}
\hline \hline
 $H_f(\mathbf{k})$ &  $\mathcal{T}$  & $\mathcal{C}$ &  $\mathcal{S}$  & Classifying space & $H^2_f(\mathbf{k})$ &  $\mathcal{T}$  & $\mathcal{C}$ &  $\mathcal{S}$  & Classifying space \\ 
 \hline
 A    & 0 & 0 & 0 & $\mathcal{C}_0$ &  A & 0 & 0 & 0 & $\mathcal{C}_0$ \\ 
 AIII    & 0 & 0 & 1 & $\mathcal{C}_1$ & \multicolumn{5}{c}{ \footnotesize   A unitary commuting symmetry of $H_f^2$ exists}   \\ 
 AI    & 1 & 0 & 0 & $\mathcal{R}_0$ &  AI    & 1 & 0 & 0 & $\mathcal{R}_0$ \\ 
 BDI   & 1 & 1 & 1 & $\mathcal{R}_1$ &   \multicolumn{5}{c}{ \footnotesize   A unitary commuting symmetry of $H_f^2$ exists} \\ 
 D    & 0 & 1 & 0 & $\mathcal{R}_2$ &  AI    & 1 & 0 & 0 & $\mathcal{R}_0$   \\ 
 DIII    & -1 & 1 & 1 & $\mathcal{R}_3$ &   \multicolumn{5}{c}{ \footnotesize   A unitary commuting symmetry of $H_f^2$ exists} \\ 
 AII    & -1 & 0 & 0 & $\mathcal{R}_4$ &  AII    & -1 & 0 & 0 & $\mathcal{R}_4$  \\ 
 CII    & -1 & -1 & 1 & $\mathcal{R}_5$ &   \multicolumn{5}{c}{ \footnotesize   A unitary commuting symmetry of $H_f^2$ exists}  \\ 
 C    & 0 & -1 & 0 & $\mathcal{R}_6$ &  AII    & -1 & 0 & 0 & $\mathcal{R}_4$   \\ 
 CI   & 1 & -1 & 1 & $\mathcal{R}_7$ &   \multicolumn{5}{c}{ \footnotesize   A unitary commuting symmetry of $H_f^2$ exists}  \\ \hline \hline
\end{tabular}
    \label{tab:Particle_conserving_squared}
\end{table*}

Let \(\{H_f(\mathbf{k})\}\) denote an ensemble of Bloch-BdG Hamiltonians. We briefly reviewed the topological classification of these ensembles in Sec.\,\ref{fermary}. Ensembles of systems without translation symmetry are included as the special case \(\mathbf{k}=0\). The squaring map yields an associated ensemble of effective Bloch-BdG Hamiltonians \({\mathscr S}(\{H_f(\mathbf{k})\})=\{\tau_3H_f^2(\mathbf{k})\}\), with the notation \(H_f^2(\mathbf{k})\equiv [H_f(\mathbf{k})]^2\). We call these bosonic -- in general pseudo-Hermitian -- ensembles the \textit{squared ensembles}. 

In this subsection, we will perform a symmetry-class analysis and establish a topological classification of these squared ensembles. As we saw in Sec.\,\ref{pncsec}, unitary transformations (as opposed to pseudo-unitary ones) do not have, in general, a many-body interpretation for particle-non-conserving free-boson systems. However, since symmetries of our squared ensembles are inherited from those of free fermions, although they could be pseudo-unitarily implemented, there exists necessarily a unitary implementation. As shown in Ref.\,[\onlinecite{Kawabata}], a pseudo-Hermitian matrix implemented with unitary symmetries (see also Refs.\,[\onlinecite{BLC}] and [\onlinecite{Zhou}]), together with a real energy gap (i.e., $\text{L}_\text{r}$ of Ref.\,[\onlinecite{Kawabata}]), can be continuously deformed into a Hermitian matrix while keeping its symmetries and gap. Therefore, some of the classification spaces of Hermitian matrices (see Table\,\ref{table:names}) can be used to label our squared ensembles once their symmetry analysis is realized.

For stable particle-non-conserving free-boson systems, we have excluded the existence of SPT phases in Sec.\,\ref{zeromodes}. Therefore, a topological classification of free bosons is only meaningful when we talk about single-particle states (rather than many-body ground states), i.e., SPT boundary states at finite energy, which is the main topic of this section. An earlier suggestion of classifying general effective BdG Hamiltonians using non-Hermitian symmetry classes can be found in Ref.\,[\onlinecite{Lieu}]. 

\subsubsection{Squared ensembles with vanishing pairing}

When the pairing potential in $H_f(\mathbf{k})$ vanishes, $H_\tau(\mathbf{k})$ becomes block-diagonal and we only need to focus on one block of it, e.g. $K_f^2(\mathbf{k})$, which corresponds to squaring a particle-conserving free fermion. The irreducible blocks of Hermitian ensembles of the form \(\{K_f^2(\mathbf{k})\}\) cannot possibly satisfy a particle-hole or chiral classifying symetry and so they must belong to one of the three classes \{A, AI, AII\}.  It is instructive to track in more detail the fate of the classifying conditions. For $H_f(\mathbf{k})$ with chiral symmetry $U_S^\dagger H_f(\mathbf{k})U^{\;}_S=-H_f(\mathbf{k})$, after squaring we get 
\begin{align}\label{chiralsquared}
U_S^\dagger H_f^2(\mathbf{k})U^{\;}_S=H_f^2(\mathbf{k}),
\end{align}
which means that $U^{\;}_S$ is a unitary commuting symmetry of $H_f^2(\mathbf{k})$.
For $H_f(\mathbf{k})$ with a time-reversal or particle-hole symmetry $U_{T/C}^\dagger H_f^*(\mathbf{-k})U^{\;}_{T/C}=\pm H_f(\mathbf{k})$, we have 
\begin{align}
U_{T/C}^\dagger H_f^{*2}(\mathbf{-k})U^{\;}_{T/C}=H_f^2(\mathbf{k}),
\end{align}
which means that $U^{\dagger}_{T/C}\mathcal{K}$ is a time-reversal symmetry of $H_f^2(\mathbf{k})$. These results are listed in Table \ref{tab:Particle_conserving_squared}. 

Even when the pairing potential in $H_f(\mathbf{k})$ does not vanish, 
it is still possible that, upon squaring, the pairing potential in $\tau_3H_f^2(\mathbf{k})$ vanishes, i.e., $K_f(\mathbf{k})\Delta_f(\mathbf{k})-\Delta_f(\mathbf{k})K_f^*(-\mathbf{k})=0$. This outcome is expected, for example, for Dirac BdG Hamiltonians because of the defining relations of the Clifford algebra. More concrete examples are spinless $2\times 2$ BdG Hamiltonians $H_f(\mathbf{k})$ of class BDI and spinless $2\times 2$ BdG Hamiltonians $H_f(\mathbf{k})$ of class D subject to the time-reversal symmetry of $K_f(\mathbf{k})=K_f^*(-\mathbf{k})$. Because of $K_f(\mathbf{k})=K_f^*(-\mathbf{k})$ in these two examples, 
$K_f(\mathbf{k})\Delta_f(\mathbf{k})-\Delta_f(\mathbf{k})K_f^*(-\mathbf{k})=0$ is obviously satisfied because $K_f(\mathbf{k})$ and $\Delta_f(\mathbf{k})$ are 
 scalars and so they commute with each other. In any case, if pairing vanishes in the squared ensemble because of the squaring map, we need to focus on one block of $H_\tau(\mathbf{k})$, e.g., $K_f^2(\mathbf{k})-\Delta_f(\mathbf{k})\Delta_f^*(-\mathbf{k})$.
The analysis of these blocks 
is included in Table \ref{tab:Particle_conserving_squared} as well.

\subsubsection{Squared ensembles with non-vanishing pairing}
\label{Squaredensembleswithnon-vanishingpairing}

Now we 
focus on the other cases where the pairing potential does not vanish in the squared ensemble. Because of the pseudo-Hermiticity of $H_\tau(\mathbf{k})\equiv \tau_3H_f^2(\mathbf{k})$ mentioned in Sec.\,\ref{mapping2}, i.e., $\tau_3 H_\tau^\dagger(\mathbf{k})\tau_3=H_\tau(\mathbf{k})$, the (anti)commutation relations between the metric $\tau_3$ and the three internal classifying
symmetries are crucial to the classification of free bosons.

\begin{center}
\textit{Case 1: Irreducible ensemble} $\{H_f(\mathbf{k})\}$
\end{center}

Let us first focus on irreducible ensembles $\{H_f(\mathbf{k})\}$. Because of the 
built-in particle-hole constraint with $U_C=\tau_1$ and $U_CU_C^*=\mathds{1}$, the symmetry class of $\{H_f(\mathbf{k})\}$ can only be \{BDI, D, DIII\}.

\medskip

\noindent
\underline{Class D}:
For irreducible $H_f(\mathbf{k})$ in class D, the only classifying condition is the build-in particle-hole constraint $U_C^\dagger H_f^*(\mathbf{-k})U_C=- H_f(\mathbf{k})$,  with $U_CU_C^*=\mathds{1}$ and $\{\tau_3, U_C\}=0$. After squaring, we have
\begin{align}\label{PHS_anticommuting}
U_C^\dagger H_\tau^*(\mathbf{-k})U_C=U_C^\dagger \tau_3 H_f^{*2}(\mathbf{-k})U_C=-H_\tau(\mathbf{k}),
\end{align}
with $U_C^\dagger\mathcal{K}$ the usual build-in particle-hole constraint of $H_\tau(\mathbf{k})$, satisfying the skew-pseudo-unitary condition $U_C\tau_3U_C^\dagger=-\tau_3$. If the squared ensemble has no emergent symmetries, we can conclude that $H_\tau(\mathbf{k})$ also belongs to class D. Here, we have adopted the same nomenclature for pseudo-Hermitian symmetry classes as that for Hermitian ones [see Table \ref{table:names} and Eq.\,(\ref{DAZconditions})]\cite{nomenclature}. 
However, we should always keep in mind that $H_\tau(\mathbf{k})$ is subject to the pseudo-Hermitian condition. After continuously deforming $H_\tau(\mathbf{k})$ to a Hermitian matrix\cite{Kawabata}, the classifying space of $H_\tau(\mathbf{k})$ is revealed as $\mathcal{C}_0$. 

\medskip

\begin{table*}[]
	\centering 
	\caption{Particle non-conserving free fermions $H_f(\mathbf{k})$ under the squaring map, where $H_f(\mathbf{k})$ is irreducible. The left major column corresponds to free fermions $H_f(\mathbf{k})$,
	whereas the right major column corresponds to free bosons $H_\tau(\mathbf{k})\equiv\tau_3H_f^2(\mathbf{k})$, 
	obtained by squaring.
	The three minor columns of fermionic symmetries $\mathcal{T}\equiv U^\dagger_T\mathcal{K}$, $\mathcal{C}\equiv U^\dagger_C\mathcal{K}=\tau_1\mathcal{K}$ and $\mathcal{S}=U_C^\dagger U^{\;}_T=\tau_1 U^{\;}_T$ specify their (anti)commutation relations to the metric $\tau_3$, while the three minor columns of bosonic symmetries $\mathcal{T}$, $\mathcal{C}$ and $\mathcal{S}$ are inherited from free fermions, specifying their (anti)commutation relations to the metric $\tau_3$ as well. For $H_f(\mathbf{k})$ with chiral symmetry $\mathcal{S}$ and $[\tau_3, \mathcal{S}]=0$, a unitary commuting symmetry of $H_\tau(\mathbf{k})$ exists.}
	\begin{tabular*}{\textwidth}{cccccccccc}
		\hline		\hline
		\multirow{2}[0]{*}{$H_f(\mathbf{k})$} & \multirow{2}[0]{*}{$\mathcal{T}$}   & \multirow{2}[0]{*}{$\mathcal{C}$}   & \multirow{2}[0]{*}{$\mathcal{S}$} & \multicolumn{1}{c|}{Classifying} & \multirow{2}[0]{*}{$H_\tau(\mathbf{k})$} & \multirow{2}[0]{*}{$\mathcal{T}$}   & \multirow{2}[0]{*}{$\mathcal{C}$}   & \multirow{2}[0]{*}{$\mathcal{S}$} & Classifying \\
		&    &    &  & \multicolumn{1}{c|}{space} &  &    &    &  & space \\
		\hline
		\multirow{2}[1]{*}{BDI} & $\{\tau_3, U^{\;}_T\}=0$ & \multirow{2}[1]{*}{$\{\tau_3, U_C\}=0$} & \multicolumn{1}{c}{$[\tau_3, U_C^\dagger U^{\;}_T]=0$} & \multicolumn{1}{c|}{\multirow{2}[1]{*}{$\mathcal{R}_1$}} & \multicolumn{5}{c}{\footnotesize A unitary commuting symmetry of $H_\tau(\mathbf{k})$ exists.}\\
		& $[\tau_3, U^{\;}_T]=0$ &       & $\{\tau_3, U_C^\dagger U^{\;}_T\}=0$ & \multicolumn{1}{c|}{} & BDI & $[\tau_3, U^{\;}_T]=0$ & $\{\tau_3, U_C\}=0$ & $\{\tau_3, U_C^\dagger U^{\;}_T\}=0$ & $\mathcal{R}_0$ \\
		& & & & \multicolumn{1}{c|}{} & & & & \multicolumn{1}{c}{} & \\
		D & - & $\{\tau_3, U_C\}=0$ & - & \multicolumn{1}{c|}{$\mathcal{R}_2$} & D & - & $\{\tau_3, U_C\}=0$ & - & $\mathcal{C}_0$ \\
		& & & & \multicolumn{1}{c|}{} & & & & \multicolumn{1}{c}{} & \\
		\multirow{2}[1]{*}{DIII} & $\{\tau_3, U^{\;}_T\}=0$ & \multirow{2}[1]{*}{$\{\tau_3, U_C\}=0$} & \multicolumn{1}{c}{$[\tau_3, U_C^\dagger U^{\;}_T]=0$} & \multicolumn{1}{c|}{\multirow{2}[1]{*}{$\mathcal{R}_3$}} & \multicolumn{5}{c}{\footnotesize A unitary commuting symmetry of $H_\tau(\mathbf{k})$ exists.}\\
		& $[\tau_3, U^{\;}_T]=0$ &       & $\{\tau_3, U_C^\dagger U^{\;}_T\}=0$ & \multicolumn{1}{c|}{} & DIII & $[\tau_3, U^{\;}_T]=0$ & $\{\tau_3, U_C\}=0$ & $\{\tau_3, U_C^\dagger U^{\;}_T\}=0$ & $\mathcal{R}_4$ \\
		\hline		\hline
	\end{tabular*}
	\label{tab:Particle_non-conserving_squared_irreducible}
\end{table*}

\noindent
\underline{Class BDI}:
In addition to the particle-hole constraint, we 
have $U_T^\dagger H_f^*(\mathbf{-k})U^{\;}_T= H_f(\mathbf{k})$ and $U^{\;}_TU_T^*=\mathds{1}$. Following the discussion of Sec.\,\ref{SS}, we need to consider two cases, either $\{\tau_3, U^{\;}_T\}=0$ or $[\tau_3, U^{\;}_T]=0$. 

\smallskip

(i) $\{\tau_3, U^{\;}_T\}=0\ $---
In this case,  \(U_C^\dagger U^{\;}_T\) is both unitary and pseudo-unitary. After squaring, we have 
\begin{align}\label{TRS_anticommuting}
U_T^\dagger H_\tau^*(\mathbf{-k})U^{\;}_T=U_T^\dagger\tau_3 H_f^{*2}(\mathbf{-k})U^{\;}_T=-H_\tau(\mathbf{k}),
\end{align}
which means $U^\dagger_T\mathcal{K}$ is a particle-hole symmetry of $H_\tau(\mathbf{k})$. 
Together with Eq.\,(\ref{PHS_anticommuting}), we have $U_T^\dagger U_C H_\tau(\mathbf{k})U_C^\dagger U^{\;}_T=H_\tau(\mathbf{k})$, which means $U_C^\dagger U^{\;}_T\equiv U^{\;}_S$ is a unitary transformation that commutes with $H_\tau(\mathbf{k})$,
\begin{align}\label{Chiral_commuting}
U_S^\dagger H_\tau(\mathbf{k})U^{\;}_S=U_S^\dagger \tau_3 H_f^2(\mathbf{k})U^{\;}_S=H_\tau(\mathbf{k}).
\end{align}
One cannot draw further conclusions about this squared ensemble without reducing away this symmetry first.

\smallskip

(ii) $[\tau_3, U^{\;}_T]=0\ $---
In this case, \(U^{\;}_T\) is both unitary and pseudo-unitary. After squaring, we have
\begin{align}\label{TRS_commuting}
U_T^\dagger H_\tau^*(\mathbf{-k})U^{\;}_T=U_T^\dagger\tau_3 H_f^{*2}(\mathbf{-k})U^{\;}_T=H_\tau(\mathbf{k}),
\end{align}
which means $U^\dagger_T\mathcal{K}$ is a time-reversal symmetry of $H_\tau(\mathbf{k})$. If we start from the
the chiral symmetry of $H_f(\mathbf{k})$, i.e., $U_S^\dagger H_f(\mathbf{k})U^{\;}_S=- H_f(\mathbf{k})$, because
$\{\tau_3, U^{\;}_S\}=0$ we find 
\begin{align}\label{Chiral_anticommuting}
U_S^\dagger H_\tau(\mathbf{k})U^{\;}_S=U_S^\dagger \tau_3 H_f^2(\mathbf{k})U^{\;}_S=-H_\tau(\mathbf{k}),
\end{align}
which means $U^{\;}_S$ is also a chiral symmetry of $H_\tau(\mathbf{k})$. 
Taking together Eq.\,\eqref{PHS_anticommuting} and \eqref{TRS_commuting}, we have $H_\tau(\mathbf{k})$ belonging to class BDI, with the classifying space being $\mathcal{R}_0$. 

\medskip

\noindent
\underline{Class DIII}: The only difference between this class and BDI is that \(U^{\;}_TU_T^*=-\mathds{1}\). 
Hence, the analysis of how the classifying conditions descend to the squared ensemble is the same as above. If $\{\tau_3, U^{\;}_T\}=0$, we again obtain Eq.\,(\ref{TRS_anticommuting}) and Eq.\,(\ref{Chiral_commuting})
and one must block-diagonalize away the unitary symmetry \(U^{\;}_S\) before one can proceed. If $[\tau_3, U^{\;}_T]=0$, we again obtain Eq.\,(\ref{TRS_commuting}) and Eq.\,(\ref{Chiral_anticommuting}) and, because of $U^{\;}_TU_T^*=-\mathds{1}$, we now have $H_\tau(\mathbf{k})$ belonging to class DIII, with the classifying space being $\mathcal{R}_4$. 

These analyses of squared ensembles associated to irreducible ensembles of BdG Hamiltonians are summarized in Table \ref{tab:Particle_non-conserving_squared_irreducible}, with the right major column associated to symmetry classes \{BDI, D, DIII\} of $H_\tau(\mathbf{k})$. Notice that the topological classification of classes \{BDI, D, DIII\} of $H_\tau(\mathbf{k})$ is known in the literature (see for example Sec.\,VIII\,A of Ref.\,[\onlinecite{Kawabata}])\cite{nomenclature}, however, as it turns out, Table \ref{tab:Particle_non-conserving_squared_irreducible} is just a sub-table of a more general Table \ref{tab:Particle_non-conserving_squared_reducible} and more symmetry classes, besides \{BDI, D, DIII\}, of $H_\tau(\mathbf{k})$ could emerge when we analyze reducible ensembles $\{H_f(\mathbf{k})\}$, as we do next. 

\begin{table*}[]
    \centering 
    \caption{Particle non-conserving free fermions $H_f(\mathbf{k})$ under the squaring map, where $H_f(\mathbf{k})$ is reducible but with irreducible blocks $\widetilde{H}_f(\mathbf{k})$. The left major column 
    corresponds to free fermions $\widetilde{H}_f(\mathbf{k})$,
    whereas the right major column 
    corresponds to free bosons $\tau_3\widetilde{H}_f^2(\mathbf{k})\equiv\widetilde{H}_\tau(\mathbf{k})$, obtained by squaring, 
    with the balanced metric $\tau_3$ considered. The three minor columns of fermionic symmetries $\mathcal{T}=U^\dagger_T\mathcal{K}$, $\mathcal{C}=U^\dagger_C\mathcal{K}$ and $\mathcal{S}=U^\dagger_S$ specify their (anti)commutation relations to the metric $\tau_3$, while the three minor columns of bosonic symmetries $\mathcal{T}$, $\mathcal{C}$ and $\mathcal{S}$ are inherited from free fermions, specifying their (anti)commutation relations to the metric $\tau_3$ as well. For example, for class AI of $\widetilde{H}_f(\mathbf{k})$ with $\mathcal{T}=U^\dagger_T\mathcal{K}$ and $\{\tau_3, U^{\;}_T\}=0$, after squaring $U^\dagger_T\mathcal{K}$ is a particle-hole symmetry of $\widetilde{H}_\tau(\mathbf{k})$ rather than a time-reversal symmetry of $\widetilde{H}_\tau(\mathbf{k})$, so $\{\tau_3, U^{\;}_T\}=0$ is specified under the minor column $\mathcal{C}$ of the right major column. 
    For $\widetilde{H}_f(\mathbf{k})$ with chiral symmetry $\mathcal{S} (=U^\dagger_S$ or $U_C^\dagger U^{\;}_T)$ and $[\tau_3, \mathcal{S}]=0$, a unitary commuting symmetry of $\widetilde{H}_\tau(\mathbf{k})$ exists. Note that the special case corresponding to Eq.\,(\ref{Unitary_anticommuting}) and Eq.\,(\ref{PHS_commuting}) is not listed in this table.}
	\begin{tabular*}{\textwidth}{cccccccccc}
		\hline		\hline
		\multirow{2}[0]{*}{$\widetilde{H}_f(\mathbf{k})$} & \multirow{2}[0]{*}{$\mathcal{T}$}   & \multirow{2}[0]{*}{$\mathcal{C}$}   & \multirow{2}[0]{*}{$\mathcal{S}$} & \multicolumn{1}{c|}{Classifying} & \multirow{2}[0]{*}{$\widetilde{H}_\tau(\mathbf{k})$} & \multirow{2}[0]{*}{$\mathcal{T}$}   & \multirow{2}[0]{*}{$\mathcal{C}$}   & \multirow{2}[0]{*}{$\mathcal{S}$} & Classifying \\
		 &    &    &  & \multicolumn{1}{c|}{space} &  &    &    &  & space \\
		\hline
& & & & \multicolumn{1}{c|}{} & & & & \multicolumn{1}{c}{} & \\
		A     & -     & -     & - & \multicolumn{1}{c|}{$\mathcal{C}_0$} & A     & -     & -     & \multicolumn{1}{c}{-} & $\mathcal{C}_0$ \\
& & & & \multicolumn{1}{c|}{} & & & & \multicolumn{1}{c}{} & \\
& & & & \multicolumn{1}{c|}{} & & & & \multicolumn{1}{c}{} & \\
		\multirow{2}[2]{*}{AIII} & \multirow{2}[2]{*}{-} & \multirow{2}[2]{*}{-} & \multicolumn{1}{c}{$\{\tau_3, U^{\;}_S\}=0$} &\multicolumn{1}{c|}{\multirow{2}[2]{*}{$\mathcal{C}_1$}} & AIII & - & - & \multicolumn{1}{c}{$\{\tau_3, U^{\;}_S\}=0$} & $\mathcal{C}_0$ \\
		&       &       & \multicolumn{1}{c}{$[\tau_3, U^{\;}_S]=0$} &\multicolumn{1}{c|}{} & \multicolumn{5}{c}{\footnotesize A unitary commuting symmetry of $\widetilde{H}_\tau(\mathbf{k})$ exists.}\\
& & & & \multicolumn{1}{c|}{} & & & & \multicolumn{1}{c}{} & \\
& & & & \multicolumn{1}{c|}{} & & & & \multicolumn{1}{c}{} & \\
		\multirow{2}[2]{*}{AI} & $\{\tau_3, U^{\;}_T\}=0$ & \multirow{2}[2]{*}{-} & \multirow{2}[2]{*}{-} & \multicolumn{1}{c|}{\multirow{2}[2]{*}{$\mathcal{R}_0$}} & D & - & $\{\tau_3, U^{\;}_T\}=0$ & \multicolumn{1}{c}{-} & $\mathcal{C}_0$\\
		& $[\tau_3, U^{\;}_T]=0$ &       & & \multicolumn{1}{c|}{} & AI    & $[\tau_3, U^{\;}_T]=0$ & -     & \multicolumn{1}{c}{-} & $\mathcal{R}_0$ \\
& & & & \multicolumn{1}{c|}{} & & & & \multicolumn{1}{c}{} & \\
& & & & \multicolumn{1}{c|}{} & & & & \multicolumn{1}{c}{} & \\
		\multirow{4}[2]{*}{BDI} & $\{\tau_3, U^{\;}_T\}=0$ & \multirow{2}[1]{*}{$\{\tau_3, U_C\}=0$} & \multicolumn{1}{c}{$[\tau_3, U_C^\dagger U^{\;}_T]=0$} & \multicolumn{1}{c|}{} & \multicolumn{5}{c}{\footnotesize A unitary commuting symmetry of $\widetilde{H}_\tau(\mathbf{k})$ exists.}\\
		& $[\tau_3, U^{\;}_T]=0$ &       & \multicolumn{1}{c}{\multirow{2}[0]{*}{$\{\tau_3, U_C^\dagger U^{\;}_T\}=0$}} & \multicolumn{1}{c|}{\multirow{2}[0]{*}{$\mathcal{R}_1$}} & \multicolumn{1}{c}{\multirow{2}[0]{*}{BDI}} & $[\tau_3, U^{\;}_T]=0$ & $\{\tau_3, U_C\}=0$ & \multicolumn{1}{c}{\multirow{2}[0]{*}{$\{\tau_3, U_C^\dagger U^{\;}_T\}=0$}} & \multirow{2}[0]{*}{$\mathcal{R}_0$} \\
		& $\{\tau_3, U^{\;}_T\}=0$ & \multirow{2}[1]{*}{$[\tau_3, U_C]=0$} & & \multicolumn{1}{c|}{} &       & $[\tau_3, U_C]=0$ & $\{\tau_3, U^{\;}_T\}=0$ & \multicolumn{1}{c}{} &\\
		& $[\tau_3, U^{\;}_T]=0$ &     & $[\tau_3, U_C^\dagger U^{\;}_T]=0$  & \multicolumn{1}{c|}{} & \multicolumn{5}{c}{\footnotesize A unitary commuting symmetry of $\widetilde{H}_\tau(\mathbf{k})$ exists.} \\
& & & & \multicolumn{1}{c|}{} & & & & \multicolumn{1}{c}{} & \\
& & & & \multicolumn{1}{c|}{} & & & & \multicolumn{1}{c}{} & \\
		\multirow{2}[2]{*}{D} & \multirow{2}[2]{*}{-} & $\{\tau_3, U_C\}=0$ & \multicolumn{1}{c}{\multirow{2}[2]{*}{-}} & \multicolumn{1}{c|}{\multirow{2}[2]{*}{$\mathcal{R}_2$}} & D     & -     & $\{\tau_3, U_C\}=0$ & \multicolumn{1}{c}{-} & $\mathcal{C}_0$ \\
		&       & $[\tau_3, U_C]=0$ & & \multicolumn{1}{c|}{} & AI    & $[\tau_3, U_C]=0$ & -     & \multicolumn{1}{c}{-} & $\mathcal{R}_0$ \\
& & & & \multicolumn{1}{c|}{} & & & & \multicolumn{1}{c}{} & \\
& & & & \multicolumn{1}{c|}{} & & & & \multicolumn{1}{c}{} & \\
		\multirow{4}[2]{*}{DIII} & $\{\tau_3, U^{\;}_T\}=0$ & \multirow{2}[1]{*}{$\{\tau_3, U_C\}=0$} & \multicolumn{1}{c}{$[\tau_3, U_C^\dagger U^{\;}_T]=0$} & \multicolumn{1}{c|}{} & \multicolumn{5}{c}{\footnotesize A unitary commuting symmetry of $\widetilde{H}_\tau(\mathbf{k})$ exists.} \\
		& $[\tau_3, U^{\;}_T]=0$ &       & \multicolumn{1}{c}{\multirow{2}[0]{*}{$\{\tau_3, U_C^\dagger U^{\;}_T\}=0$}} & \multicolumn{1}{c|}{\multirow{2}[0]{*}{$\mathcal{R}_3$}} & DIII & $[\tau_3, U^{\;}_T]=0$ & $\{\tau_3, U_C\}=0$ & \multicolumn{1}{c}{\multirow{2}[0]{*}{$\{\tau_3, U_C^\dagger U^{\;}_T\}=0$}} & $\mathcal{R}_4$ \\
		& $\{\tau_3, U^{\;}_T\}=0$ & \multirow{2}[1]{*}{$[\tau_3, U_C]=0$} & & \multicolumn{1}{c|}{} & CI    & $[\tau_3, U_C]=0$ & $\{\tau_3, U^{\;}_T\}=0$ & \multicolumn{1}{c}{} & $\mathcal{R}_0$ \\
		& $[\tau_3, U^{\;}_T]=0$ &       & \multicolumn{1}{c}{$[\tau_3, U_C^\dagger U^{\;}_T]=0$} & \multicolumn{1}{c|}{} & \multicolumn{5}{c}{\footnotesize A unitary commuting symmetry of $\widetilde{H}_\tau(\mathbf{k})$ exists.} \\
& & & & \multicolumn{1}{c|}{} & & & & \multicolumn{1}{c}{} & \\
& & & & \multicolumn{1}{c|}{} & & & & \multicolumn{1}{c}{} & \\
		\multirow{2}[2]{*}{AII} & $\{\tau_3, U^{\;}_T\}=0$ & \multirow{2}[2]{*}{-} & \multicolumn{1}{c}{\multirow{2}[2]{*}{-}} & \multicolumn{1}{c|}{\multirow{2}[2]{*}{$\mathcal{R}_4$}} & C & -     & $\{\tau_3, U^{\;}_T\}=0$ & \multicolumn{1}{c}{-} & $\mathcal{C}_0$ \\
		& $[\tau_3, U^{\;}_T]=0$ &       & & \multicolumn{1}{c|}{} & AII   & $[\tau_3, U^{\;}_T]=0$ & -     & \multicolumn{1}{c}{-} & $\mathcal{R}_4$ \\
& & & & \multicolumn{1}{c|}{} & & & & \multicolumn{1}{c}{} & \\
& & & & \multicolumn{1}{c|}{} & & & & \multicolumn{1}{c}{} & \\
		\multirow{4}[2]{*}{CII} & $\{\tau_3, U^{\;}_T\}=0$ & \multirow{2}[1]{*}{$\{\tau_3, U_C\}=0$} & \multicolumn{1}{c}{$[\tau_3, U_C^\dagger U^{\;}_T]=0$} & \multicolumn{1}{c|}{} & \multicolumn{5}{c}{\footnotesize A unitary commuting symmetry of $\widetilde{H}_\tau(\mathbf{k})$ exists.} \\
		& $[\tau_3, U^{\;}_T]=0$ &       & \multicolumn{1}{c}{\multirow{2}[0]{*}{$\{\tau_3, U_C^\dagger U^{\;}_T\}=0$}} & \multicolumn{1}{c|}{\multirow{2}[0]{*}{$\mathcal{R}_5$}} & \multirow{2}[0]{*}{CII} & $[\tau_3, U^{\;}_T]=0$ & $\{\tau_3, U_C\}=0$ & \multicolumn{1}{c}{\multirow{2}[0]{*}{$\{\tau_3, U_C^\dagger U^{\;}_T\}=0$}} & \multirow{2}[0]{*}{$\mathcal{R}_4$} \\
		& $\{\tau_3, U^{\;}_T\}=0$ & \multirow{2}[1]{*}{$[\tau_3, U_C]=0$} & & \multicolumn{1}{c|}{} &       & $[\tau_3, U_C]=0$ & $\{\tau_3, U^{\;}_T\}=0$ & \multicolumn{1}{c}{} & \\
		& $[\tau_3, U^{\;}_T]=0$ &       & \multicolumn{1}{c}{$[\tau_3, U_C^\dagger U^{\;}_T]=0$} & \multicolumn{1}{c|}{} & \multicolumn{5}{c}{\footnotesize A unitary commuting symmetry of $\widetilde{H}_\tau(\mathbf{k})$ exists.} \\
& & & & \multicolumn{1}{c|}{} & & & & \multicolumn{1}{c}{} & \\
& & & & \multicolumn{1}{c|}{} & & & & \multicolumn{1}{c}{} & \\
		\multirow{2}[2]{*}{C} & \multirow{2}[2]{*}{-} & $\{\tau_3, U_C\}=0$ & \multicolumn{1}{c}{\multirow{2}[2]{*}{-}} & \multicolumn{1}{c|}{\multirow{2}[2]{*}{$\mathcal{R}_6$}} & C     & -     & $\{\tau_3, U_C\}=0$ & \multicolumn{1}{c}{-} & $\mathcal{C}_0$ \\
		&       & $[\tau_3, U_C]=0$ & & \multicolumn{1}{c|}{} & AII   & $[\tau_3, U_C]=0$ & -     & \multicolumn{1}{c}{-} & $\mathcal{R}_4$ \\
& & & & \multicolumn{1}{c|}{} & & & & \multicolumn{1}{c}{} & \\
& & & & \multicolumn{1}{c|}{} & & & & \multicolumn{1}{c}{} & \\
		\multirow{4}[2]{*}{CI} & $\{\tau_3, U^{\;}_T\}=0$ & \multirow{2}[1]{*}{$\{\tau_3, U_C\}=0$} & \multicolumn{1}{c}{$[\tau_3, U_C^\dagger U^{\;}_T]=0$} & \multicolumn{1}{c|}{} & \multicolumn{5}{c}{\footnotesize A unitary commuting symmetry of $\widetilde{H}_\tau(\mathbf{k})$ exists.} \\
		& $[\tau_3, U^{\;}_T]=0$ &       & \multicolumn{1}{c}{\multirow{2}[0]{*}{$\{\tau_3, U_C^\dagger U^{\;}_T\}=0$}} & \multicolumn{1}{c|}{\multirow{2}[0]{*}{$\mathcal{R}_7$}} & CI    & $[\tau_3, U^{\;}_T]=0$ & $\{\tau_3, U_C\}=0$ & \multicolumn{1}{c}{\multirow{2}[0]{*}{$\{\tau_3, U_C^\dagger U^{\;}_T\}=0$}} & $\mathcal{R}_0$ \\
		& $\{\tau_3, U^{\;}_T\}=0$ & \multirow{2}[1]{*}{$[\tau_3, U_C]=0$} & & \multicolumn{1}{c|}{} & DIII  & $[\tau_3, U_C]=0$ & $\{\tau_3, U^{\;}_T\}=0$ & \multicolumn{1}{c}{} & $\mathcal{R}_4$\\
		& $[\tau_3, U^{\;}_T]=0$ &       & \multicolumn{1}{c}{$[\tau_3, U_C^\dagger U^{\;}_T]=0$} & \multicolumn{1}{c|}{} & \multicolumn{5}{c}{\footnotesize A unitary commuting symmetry of $\widetilde{H}_\tau(\mathbf{k})$ exists.} \\
& & & & \multicolumn{1}{c|}{} & & & & \multicolumn{1}{c}{} & \\
		\hline		\hline
	\end{tabular*}
    \label{tab:Particle_non-conserving_squared_reducible}
\end{table*}

\begin{center}
\textit{Case 2: Reducible ensemble} $\{H_f(\mathbf{k})\}$
\end{center}

Next we tackle the reducible ensembles $\{H_f(\mathbf{k})\}$.
That is, $H_f(\mathbf{k})$ with unitary commuting symmetry $U^\dagger H_f(\mathbf{k})U=H_f(\mathbf{k})$. From the discussion of Sec.\,\ref{SS}, we have either $\{\tau_3, U\}=0$ or $[\tau_3, U]=0$.

\smallskip

(i) \(\{\tau_3,U\}=0\)\ --- In this case, \(U\) is both unitary and skew-pseudo-unitary. After squaring, we have 
\begin{align}\label{Unitary_anticommuting}
U^\dagger H_\tau(\mathbf{k})U=U^\dagger \tau_3 H_f^2(\mathbf{k})U=-H_\tau(\mathbf{k}),
\end{align}
which means $U$ is a chiral symmetry of $H_\tau(\mathbf{k})$. 
Together with Eq.\,(\ref{PHS_anticommuting}), we have $U^\dagger\tau_1 H_\tau^*(\mathbf{-k})\tau_1 U=H_\tau(\mathbf{k})$ or, equivalently, we have $U_C^\dagger H_\tau^*(\mathbf{-k})U_C=H_\tau(\mathbf{k})$ with $U_C\equiv \tau_1 U$. As a self-consistency check, we can start from the non-built-in particle-hole symmetry of $H_f(\mathbf{k})$, i.e., $U_C^\dagger H_f^*(\mathbf{-k})U_C=- H_f(\mathbf{k})$, with $[\tau_3, U_C]=0$; we have
\begin{align}
\label{PHS_commuting}
U_C^\dagger H_\tau^*(\mathbf{-k})U_C=U_C^\dagger \tau_3 H_f^{*2}(\mathbf{-k})U_C=H_\tau(\mathbf{k}),
\end{align}
which means $U^\dagger_C\mathcal{K}$ is a time-reversal symmetry of $H_\tau(\mathbf{k})$, satisfying the pseudo-unitary condition $U_C\tau_3U_C^\dagger=\tau_3$. Together with Eq.\,(\ref{PHS_anticommuting}), we have $H_\tau(\mathbf{k})$ belonging to either class BDI (if $U_CU_C^*=\mathds{1}$), as obtained during the analysis of \underline{Class BDI} in \textit{Case 1}, or DIII (if $U_CU_C^*=-\mathds{1}$), as obtained during the analysis of \underline{Class DIII} in \textit{Case 1}.

\smallskip
(ii) \([\tau_3,U]=0\)\ --- In this case, \(U\) is both unitary and pseudo-unitary. After squaring, we have 
\begin{align}\label{Unitary_commuting}
U^\dagger H_\tau(\mathbf{k})U=U^\dagger \tau_3 H_f^2(\mathbf{k})U=H_\tau(\mathbf{k}),
\end{align}
with $U$ a unitary commuting symmetry of $H_\tau(\mathbf{k})$. 
Then, the block-diagonalization of $H_f(\mathbf{k})$ implies that of $H_\tau(\mathbf{k})$, and we only need to focus on irreducible blocks.
When dealing with such irreducible blocks,
there will also be a set of symmetry constraints like Eqs.\,(\ref{PHS_anticommuting})-(\ref{Chiral_anticommuting}) and Eq.\,(\ref{PHS_commuting}) imposed on them. 
In the analysis of irreducible blocks below, we refer to such equations  
as {\em symmetry constraints imposed directly on the blocks}. Furthermore, since the topological classification of non-Hermitian Bloch Hamiltonians includes pseudo-Hermiticity with respect to an indefinite metric as a classifying condition, as we saw in Sec.\,\ref{SS}, the metric appropriate for defining the block effective BdG Hamiltonian need not be \(\tau_3=\sigma_3\otimes\mathds{1}_m\), 
but can have instead the more complicated structure $\tau_{m,n}$ in Table\,\ref{squared_subensembles}. In addition, blocks could emerge with
vanishing pairing. Whether any of these possibilities actually occur is controlled by spectral features of the symmetry that is being block-diagonalized together with the ensemble. The symmetries of spin rotations and lattice translations, in particular, do not induce these exotic blocks: all blocks consist of effective BdG Hamiltonians that are pseudo-Hermitian with respect to a ``balanced" metric \(\tau_3=\sigma_3\otimes\mathds{1}_m\). While in the following we assume that the metric is always of this ``balanced'' form,  it is interesting to notice that more exotic scenarios can also be realized in free-boson systems.

\smallskip
\noindent
\underline{Class A}: 
For an irreducible block of $H_f(\mathbf{k})$ in class A, there are no symmetry constraints and therefore, except for pseudo-Hermiticity, no symmetry constraints imposed on the corresponding block of $H_\tau(\mathbf{k})$ either. Hence, this block of $H_\tau(\mathbf{k})$ belongs to class A, with the classifying space mathematically\cite{Kawabata} being $\mathcal{C}_0\times\mathcal{C}_0$, where each $\mathcal{C}_0$ independently describes either the positive or the negative energy bands. However, since as seen in Sec.\,\ref{mapping2} the negative energy bands cannot be occupied by bosons, one only needs to consider the positive-energy bands. Thus, physically, the classifying space for this block of $H_\tau(\mathbf{k})$ in class A should be $\mathcal{C}_0$. Here we adopt this physical point of view. Accordingly, hereinafter, the classifying space for class AI is $\mathcal{R}_0$ (rather than $\mathcal{R}_0\times\mathcal{R}_0$), and that for class AII is $\mathcal{R}_4$ (rather than $\mathcal{R}_4\times\mathcal{R}_4$).

\smallskip
\noindent
\underline{Class AIII}:
Now there is only a chiral symmetry imposed on the block of $H_f(\mathbf{k})$. Therefore, besides pseudo-Hermiticity, analysis like Eq.\,(\ref{Chiral_commuting}) will lead to a unitary commuting symmetry imposed on the block of $H_\tau(\mathbf{k})$, and, similarly, analysis like Eq.\,(\ref{Chiral_anticommuting}) will lead to a chiral symmetry imposed on the block of $H_\tau(\mathbf{k})$, with the block belonging to class AIII and the classifying space being $\mathcal{C}_0$.

\smallskip
\noindent
\underline{Class AI}: There is only a time-reversal symmetry imposed on the block of $H_f(\mathbf{k})$. So, apart from pseudo-Hermiticity, analysis like Eq.\,(\ref{TRS_anticommuting}) will lead to a particle-hole symmetry imposed on the block of $H_\tau(\mathbf{k})$, and the block belongs to class D, with the classifying space being $\mathcal{C}_0$. Similarly, besides pseudo-Hermiticity, analysis like Eq.\,(\ref{TRS_commuting}) will lead to a time-reversal symmetry imposed on the block of $H_\tau(\mathbf{k})$, and the block belongs to class AI, with the classifying space being $\mathcal{R}_0$. 

\smallskip
\noindent
\underline{Class BDI}:
Now in addition to the two possibilities with $\{\tau_3, U_C\}=0$ analyzed in \textit{Case 1} of \underline{Class BDI}, two more possibilities with $[\tau_3, U_C]=0$ also arise. If $[\tau_3, U_C]=0$ and $\{\tau_3, U^{\;}_T\}=0$, analysis based on Eq.\,(\ref{PHS_commuting}) and Eq.\,(\ref{TRS_anticommuting}) will lead to a time-reversal symmetry and a particle-hole symmetry imposed on the block of $H_\tau(\mathbf{k})$, and the block belongs to class BDI, with the classifying space being $\mathcal{R}_0$. If $[\tau_3, U_C]=0$ and $[\tau_3, U^{\;}_T]=0$, analysis based on Eq.\,(\ref{PHS_commuting}) and Eq.\,(\ref{TRS_commuting}) will lead to two time-reversal symmetries imposed on the block of $H_\tau(\mathbf{k})$, and therefore a unitary commuting symmetry of the block.

\smallskip
\noindent
\underline{Class D}:
Besides the possibility with $\{\tau_3, U_C\}=0$ analyzed in \textit{Case 1} of \underline{Class D}, one more possibility with $[\tau_3, U_C]=0$ also arises. If $[\tau_3, U_C]=0$, analysis like Eq.\,(\ref{PHS_commuting}) will lead to a time-reversal symmetry imposed on the block of $H_\tau(\mathbf{k})$, and the block belongs to class AI, with the classifying space being $\mathcal{R}_0$. 

\smallskip
\noindent
\underline{Class DIII}:
 Besides the two possibilities with $\{\tau_3, U_C\}=0$ analyzed in \textit{Case 1} of \underline{Class DIII}, two more possibilities with $[\tau_3, U_C]=0$ also arise. If $[\tau_3, U_C]=0$ and $\{\tau_3, U^{\;}_T\}=0$, analysis like Eq.\,(\ref{PHS_commuting}) and Eq.\,(\ref{TRS_anticommuting}) will lead to a time-reversal symmetry and a particle-hole symmetry imposed on the block of $H_\tau(\mathbf{k})$, and the block belongs to class CI, with the classifying space being $\mathcal{R}_0$. If $[\tau_3, U_C]=0$ and $[\tau_3, U^{\;}_T]=0$, analysis like Eq.\,(\ref{PHS_commuting}) and Eq.\,(\ref{TRS_commuting}) will lead to two time-reversal symmetries imposed on the block of $H_\tau(\mathbf{k})$, and therefore a unitary commuting symmetry of the block.

\smallskip
\noindent
\underline{Class AII}:
The only difference between this class and AI is that \(U^{\;}_TU_T^*=-\mathds{1}\). If $\{\tau_3, U^{\;}_T\}=0$, we again obtain Eq.\,(\ref{TRS_anticommuting}) and, since $U^{\;}_TU_T^*=-\mathds{1}$, we now have the block of $H_\tau(\mathbf{k})$ belonging to class C, with the classifying space being $\mathcal{C}_0$. If $[\tau_3, U^{\;}_T]=0$, we again obtain Eq.\,(\ref{TRS_commuting}) and, since $U^{\;}_TU_T^*=-\mathds{1}$, we now have the block belonging to class AII, with the classifying space being $\mathcal{R}_4$. 

\smallskip
\noindent
\underline{Class CII}:
There are four possibilities. If $\{\tau_3, U_C\}=0$ and $\{\tau_3, U^{\;}_T\}=0$, because of Eqs.\,(\ref{PHS_anticommuting})-(\ref{TRS_anticommuting}), 
the block of $H_\tau(\mathbf{k})$ has a unitary commuting symmetry. If $\{\tau_3, U_C\}=0$ and $[\tau_3, U^{\;}_T]=0$, because of Eq.\,(\ref{PHS_anticommuting}) and Eq.\,(\ref{TRS_commuting}), the block of $H_\tau(\mathbf{k})$ belongs to CII, with the classifying space being $\mathcal{R}_4$. If $[\tau_3, U_C]=0$ and $\{\tau_3, U^{\;}_T\}=0$, because of Eq.\,(\ref{PHS_commuting}) and Eq.\,(\ref{TRS_anticommuting}), 
the block of $H_\tau(\mathbf{k})$ belongs to CII, with the classifying space being $\mathcal{R}_4$. If $[\tau_3, U_C]=0$ and $[\tau_3, U^{\;}_T]=0$, Eq.\,(\ref{PHS_commuting}) and Eq.\,(\ref{TRS_commuting}) lead to a unitary commuting symmetry of the block.

\smallskip
\noindent
\underline{Class C}:
There are two possibilities. If $\{\tau_3, U_C\}=0$, because of Eq.\,(\ref{PHS_anticommuting}), the block of $H_\tau(\mathbf{k})$ belongs to class C, with the classifying space being $\mathcal{C}_0$. If $[\tau_3, U_C]=0$, because of Eq.\,(\ref{PHS_commuting}), the block of $H_\tau(\mathbf{k})$ belongs to class AII, with the classifying space being $\mathcal{R}_4$. 

\smallskip
\noindent
\underline{Class CI}:
The only difference between this class and CII is that \(U^{\;}_TU_T^*=\mathds{1}\). If $\{\tau_3, U_C\}=0$ and $\{\tau_3, U^{\;}_T\}=0$, because of Eqs.\,(\ref{PHS_anticommuting})-\,(\ref{TRS_anticommuting}), the block of $H_\tau(\mathbf{k})$ has a unitary commuting symmetry. If $\{\tau_3, U_C\}=0$ and $[\tau_3, U^{\;}_T]=0$, because of Eq.\,(\ref{PHS_anticommuting}) and Eq.\,(\ref{TRS_commuting}), the block of $H_\tau(\mathbf{k})$ belongs to CI, with the classifying space being $\mathcal{R}_0$. If $[\tau_3, U_C]=0$ and $\{\tau_3, U^{\;}_T\}=0$, because of Eq.\,(\ref{PHS_commuting}) and Eq.\,(\ref{TRS_anticommuting}), the block of $H_\tau(\mathbf{k})$ belongs to DIII, with the classifying space being $\mathcal{R}_4$. If $[\tau_3, U_C]=0$ and $[\tau_3, U^{\;}_T]=0$, Eq.\,(\ref{PHS_commuting}) and Eq.\,(\ref{TRS_commuting}) lead to a unitary commuting symmetry of the block.

\smallskip

The results of the above analysis are summarized in Table\,\ref{tab:Particle_non-conserving_squared_reducible}, based on the  (anti)commutation relations between $\tau_3$ and three internal symmetries of the irreducible blocks of $H_f(\mathbf{k})$.
As we can see, the table reproduces the results of Table \ref{tab:Particle_non-conserving_squared_irreducible}, but with additional possibilities: 
besides the three symmetry classes \{A, AI, AII\} arising when squaring a particle-conserving free fermion (Table\,\ref{tab:Particle_conserving_squared}), all the other seven symmetry classes also appear.

As an example, if we return to the squared BCS model examined in Sec.\,\ref{SS} and start with Eq.\,\eqref{BCSsquared}, 
we see that $\widetilde{H}_f(\mathbf{k})$ has only the particle-hole symmetry $\sigma_2\mathcal{K}$ and belongs to class C. According to Table\,\ref{tab:Particle_non-conserving_squared_reducible}, since $\{\sigma_3, U_C\}=0$, $\widetilde{H}_\tau(\mathbf{k})$ will have the same particle-hole symmetry and belongs to class C as well.

\subsubsection{Assessment: The threefold way}
\label{Assessment}

\begin{table*}[t]
\centering
\caption{Periodic table for stable free-boson systems. 
For $d=0$ and $d=4$ all the classes are topologically non-trivial, while for $d=1$ and $d=5$ all of them are topologically trivial. }
\begin{tabular}{C{1cm} C{1cm} C{1.5cm} C{1.5cm} c  C{1.1cm} C{1.1cm} C{1.1cm} C{1.1cm} C{1.1cm} C{1.1cm} C{1.1cm} C{1cm}} 
\hline \hline
 &  $\mathcal{T}$  & $\mathcal{C}$ &  $\mathcal{S}$  & Classifying space & $d=0$ &  $d=1$  &$d=2$ &  $d=3$  & $d=4$ &$d=5$ &  $d=6$  & $d=7$\\ 
 \hline
 A    & & 0 & 0 &  &    & &   &  &   &  &   &  \\ 
 AIII    &  & 0 & 1 &  \multirow{1}{*}{\vspace*{-0.5cm} $\mathcal{C}_0$} &    &  &   &  &   &  &   &     \\ 
 D    & \vspace*{-0.4cm} 0 & 1 & 0  &  & \vspace*{-0.4cm}  $\mathbb{Z}$  &\vspace*{-0.4cm}  - & \vspace*{-0.4cm} $\mathbb{Z}$  & \vspace*{-0.4cm} - &\vspace*{-0.4cm} $\mathbb{Z}$  & \vspace*{-0.4cm} - & \vspace*{-0.4cm} $\mathbb{Z}$  & \vspace*{-0.4cm} -   \\ 
 C    &  & -1 & 0 &  &    &  & &  &   & &  &    \\ 
 \hline
 AI    &  & 0 & 0 &  &    &  &  &  &   &   &  &  \\ 
 BDI   & 1 & 1 & 1 & $\mathcal{R}_0$ &  $\mathbb{Z}$  & - & -  & - & $2\mathbb{Z}$  & - & $\mathbb{Z}_2$  & $\mathbb{Z}_2$     \\
 CI   &  & -1 & 1 &  &   &   &   &   &   &   &   &     \\ 
 \hline
 AII    &  & 0 & 0 &  &    &   &   &   &   &   &   &    \\ 
 DIII    & -1 & 1 & 1 & $\mathcal{R}_4$ &   $2\mathbb{Z}$  & - & $\mathbb{Z}_2$  & $\mathbb{Z}_2$ & $\mathbb{Z}$  & - & -  & -  \\ 
 CII    &  & -1 & 1 &  &   &  &   &  &   &  &   &    \\ 
\hline \hline
\end{tabular}
    \label{tab:Bosonic_periodic_table}
\end{table*} 

Although for particle-non-conserving free bosons under squaring 
we obtained all ten AZ symmetry classes, due to the pseudo-Hermiticity of (the irreducible blocks of) $H_\tau(\mathbf{k})$, \emph{only three classifying spaces \{$\mathcal{C}_0, \mathcal{R}_0, \mathcal{R}_4$\} appear}. More specifically, we find that AZ symmetry classes with no time-reversal symmetry correspond to the classifying space $\mathcal{C}_0$, classes with $\mathcal{T}^2=\mathds{1}$ correspond to $\mathcal{R}_0$, and classes with $\mathcal{T}^2=-\mathds{1}$ correspond to $\mathcal{R}_4$. Therefore, based on Table\,\ref{tab:Particle_conserving_squared} and Table\,\ref{tab:Particle_non-conserving_squared_reducible}, what emerges is a unified periodic table, Table\,\ref{tab:Bosonic_periodic_table},  for either particle-conserving or non-conserving free-boson systems obtained by application of the squaring map ${\mathscr S}$. Furthermore, with simple 
modifications, Table\,\ref{tab:Bosonic_periodic_table} holds for the case of ``unbalanced" metric $\tau_{m,n}$ (see Table\,\ref{squared_subensembles}) as well, with the associated symmetry classes being \{A, AI, AII\}. Thus, we conclude that the topological classification of free bosons under squaring  {\em depends only on the existence of time-reversal symmetry} and reduces to the ``threefold way" of Dyson\cite{Dyson}. 
 As a byproduct, we claim that Table\,\ref{tab:Bosonic_periodic_table} also holds for stable free-boson systems not arising from the squaring procedure (see Appendix\,\ref{simpleproof} for a simple proof). 

A careful examination of Table\,\ref{tab:Bosonic_periodic_table} suggests that the main difference between free fermions before squaring and free bosons after squaring is that all the symmetry classes of free bosons are topologically non-trivial for $d=0$ and $d=4$, while all of them are topologically trivial for $d=1$ and $d=5$. However, just like the fermionic tenfold classification, our threefold classification may fail when additional symmetries (other than $\mathcal{T}$, $\mathcal{C}$ or $\mathcal{S}$) are present. For example, the spinless Su-Schrieffer-Heeger model\cite{SSH} of a dimerized chain with non-zero chemical potential or with next-nearest-neighbor hopping has only time-reversal symmetry with $\mathcal{T}^2=\mathds{1}$, which is topologically trivial in one dimension according to the fermionic tenfold classification table. However, this is not correct because the Berry phase of the lowest band is precisely quantized to $0/\pi$ (mod $2\pi$). The same is true if we square this model. The reason for the non-triviality of this model either before squaring or after squaring is the presence of an additional (inversion) symmetry.

\subsection{Bosonic topological bulk invariants} 
\label{invariants}

We have established in Sec.\,\ref{zeromodes} the absence of SPT phases in gapped free-boson systems. However, this does not imply that these systems cannot display topologically non-trivial excitations in their spectrum. Indeed, 
Table \ref{tab:Bosonic_periodic_table} predicts that topologically non-trivial bosonic excitations may exist. Here, we show how to characterize those excitations in terms of topological bulk invariants and provide a numerically gauge-invariant way of computing them. In Sec.\,\ref{2DHofstadter} we will apply these ideas and illustrate a bulk-boundary correspondence between a non-zero value of the bulk invariant and the existence of localized SPT boundary (non-zero) modes. 

Let us first provide a simple derivation of the bosonic Berry phase for $d\geq 1$, following similar steps to those used to derive the traditional Berry phase \cite{Berry}. For an effective BdG Hamiltonian $H_\tau$ (with $H_b>0$), we have effective Schr\"{o}dinger equations (either time-dependent or time-independent) written as 
\begin{eqnarray}\label{Schrodinger}
\left\{
\begin{aligned}
    &H_\tau\big(\mathbf{k}(t)\big)|\Psi_n^\pm(t)\rangle=i\hbar|\dot{\Psi}_n^\pm(t)\rangle,\\
    &H_\tau(\mathbf{k})|\psi_n^\pm(\mathbf{k})\rangle=\pm E_n(\mathbf{k})|\psi_n^\pm(\mathbf{k})\rangle,
\end{aligned}
\right.
\end{eqnarray}
where $E_n(\mathbf{k})>0$ and $\langle\psi_n^\pm(\mathbf{k})|\tau_3|\psi_n^\pm(\mathbf{k})\rangle=\pm 1$. 
Suppose
\begin{align}
|\Psi_n^\pm(t)\rangle=e^{\pm\frac{1}{i\hbar}{\int}_0^tdt'E_n\big(\mathbf{k}(t')\big)}
e^{i\gamma_n^\pm(t)}|\psi_n^\pm\big(\mathbf{k}(t)\big)\rangle,
\end{align}
after plugging into Eq.\,(\ref{Schrodinger}), one obtains
\begin{eqnarray}
\dot{\gamma}_n^\pm(t)=\pm i\langle\psi_n^\pm\big(\mathbf{k}(t)\big)|\tau_3
\nabla_\mathbf{k}|\psi_n^\pm\big(\mathbf{k}(t)\big)\rangle\cdot\dot{\mathbf{k}}(t).
\end{eqnarray}
Defining $\gamma_n^\pm\equiv \gamma_n^\pm(T)-\gamma_n^\pm(0)$ with 
$\mathbf{k}(T)=\mathbf{k}(0)$, we arrive at the bosonic Berry phase
\begin{eqnarray}
\gamma_n^\pm=\oint d\mathbf{k}\cdot\mathcal{A}_n^\pm(\mathbf{k}),
\end{eqnarray}
where $\mathcal{A}_n^\pm(\mathbf{k})\equiv \pm i\langle\psi_n^\pm(\mathbf{k})|\tau_3
\nabla_\mathbf{k}|\psi_n^\pm(\mathbf{k})\rangle$ is the bosonic Berry connection, which is purely real. 

As explained in Sec.\,\ref{mapping2}, while $\gamma_n^+$ is associated with bosonic states with positive energies, $\gamma_n^-$ is associated with non-bosonic states with negative energies. 
Furthermore, since $|\psi_n^-(\mathbf{k})\rangle=\tau_1\mathcal{K}|\psi_n^+(\mathbf{k})\rangle$, 
it leads to
$\gamma_n^-=-\gamma_n^+.$
Therefore, $\gamma_n^+$ and $\gamma_n^-$ are not independent.
Following a procedure similar to that in Ref.\,[\onlinecite{Deng}] Appendix\,D, for a one-dimensional 
lattice system, the bosonic Berry phase $\gamma_n^+$ can be evaluated in a numerically gauge-invariant way as
\begin{eqnarray}
\gamma_n^+= 
\lim_{N\rightarrow \infty}
\text{Im} \ln\prod_{j=1}^N \langle\psi_n^+(k_{j+1})|\tau_3|\psi_n^+(k_{j})\rangle,
\end{eqnarray}
where $|\psi_n^+(k_{N+1})\rangle\equiv|\psi_n^+(k_{1})\rangle$ and $k_{j+1}\equiv2\pi j/N-\pi$.

Another topological invariant associated specifically to free bosons (when $d=2$) is the bosonic Chern number\cite{Shindou}
\begin{align}\label{bosonicChern}
\mathcal{C}_n^\pm=\frac{1}{2\pi} \oiint d\mathbf{k}\cdot\mathbf{\Omega}_n^\pm(\mathbf{k}),
\end{align}
with $\mathbf{\Omega}_n^\pm(\mathbf{k})\equiv \nabla_\mathbf{k}\times\mathcal{A}_n^\pm(\mathbf{k})$ 
the bosonic Berry curvature. Using an argument similar 
to the one used above for relating $\gamma_n^+, \gamma_n^-$,
one finds 
$\mathcal{C}_n^-=-\mathcal{C}_n^+.$
So, $\mathcal{C}_n^+$ and $\mathcal{C}_n^-$ are also not independent. 
The bosonic Chern number $\mathcal{C}_n^+$ can be evaluated in a numerically gauge-invariant way as
\begin{widetext}
\begin{align}
\mathcal{C}_n^+=\lim_{\substack{N_x\rightarrow\infty \\ N_y\rightarrow\infty}}
\frac{\text{Im}}{\pi}\sum_{i=1}^{N_x}\sum_{j=1}^{N_y}\ln
\resizebox{0.71\hsize}{!}{$\langle\psi_n^+(k_i^x, k_j^y)|\tau_3|\psi_n^+(k_i^x, k_{j+1}^y)\rangle 
\langle\psi_n^+(k_i^x, k_{j+1}^y)|\tau_3|\psi_n^+(k_{i+1}^x, k_j^y)\rangle 
\langle\psi_n^+(k_{i+1}^x, k_j^y)|\tau_3|\psi_n^+(k_i^x, k_j^y)\rangle$},
\label{ChernB}
\end{align}
\end{widetext}
where $k_{i+1}^x\equiv 2\pi i/N_x-\pi$ and $k_{j+1}^y\equiv 2\pi j/N_y-\pi$.

\section{The fate of Zero Modes}
\label{examples}

We are finally in a position to investigate localized ZMs of stable, gapped free-boson systems. As we already noted, this does not contradict our no-go Theorem \ref{nogothm} and its Corollary, since these conclusions pertain to free-boson systems subject to open BCs. We will illustrate the main ideas with several examples of our (kernel-preserving) squaring map and address many of the questions we posed, together with consequences and ramifications of our work up to now. Specifically, the following three models will be used for investigation: 

(i) The squared Kitaev chain. This example exemplifies how \emph{bosonic Majorana ZMs do exist}, by virtue of the fact that the squaring map introduces special BCs. These ZMs mimic fermionic Majorana ZMs in localization and Hermiticity but, naturally, the canonical anticommutation relation is replaced by the Heisenberg commutation relation. Since free-boson systems do not host SPT phases (no-go Theorem 2), one does not expect them being protected. We provide numerical results to quantify this expectation and frame our results within the Krein stability theory 
summarized in Sec.\,\ref{statistic}. 

(ii) The square of the Jackiw-Rebbi model, adapted for charge-neutral 
fermions (essentially, the field-theory version of the Kitaev chain). We find  
that half of a bosonic degree of freedom (one ``quadrature"), is trapped at the 
location of the soliton, while the conjugate quadrature is pushed to infinity
and out of the physical spectrum. 

(iii) The square of the Harper-Hofstadter model with flux \(\phi=1/3\) 
per plaquette and nearest-neighbor non-chiral pairing. Here, we see our no-go 
Theorem 3 (no surface bands around zero energy for open BCs), at work in full force.
In addition, we calculate and compare the Chern numbers of the fermionic model and the 
bosonic Chern numbers of its square.

\subsection{The squared Kitaev chain}
\label{1DKitaev}


We consider the following dimensionless Kitaev Hamiltonian\cite{Kitaev2001} at zero chemical potential and with an odd number of lattice sites:
\begin{eqnarray}
\widehat{H}_f = -\sum_{j=1}^{2(N-1)} \big(c_j^\dag c^{\;}_{j+1} + \frac{\Delta}{t} c_j^\dag c_{j+1}^\dag + \text{h.c.}\big), 
\end{eqnarray}
where $c_j^\dagger$ ($c^{\;}_j$) is the fermionic creation (annihilation) operator, $2N-1$ is the number of lattice sites, $t\in \mathbb{R}$ is the hopping amplitude, and $\Delta\in \mathbb{R}$ is the pairing potential. 
This system is known to host two exact ZMs exponentially localized at the two ends\cite{Cobanera}, namely, two unpaired Majorana fermions. Thus, after squaring the fermionic model subject to open BCs, the bosonic model will also host two exact ZMs. However, unexpectedly, besides these two exact ZMs, we find also two asymptotic ZMs, which we can derive analytically using the method for diagonalizing corner-modified banded block-Toeplitz operators developed in Refs.\,[\onlinecite{PRB1}]-[\onlinecite{JPA}].

Let us first square $H_f$ under periodic BCs. As mentioned in Sec.\,\ref{symmetryclasses}, for a spinless $2\times 2$ BdG Hamiltonian $H_f(\mathbf{k})$ of class BDI, the pairing potential vanishes in $\tau_3 H_f^2(\mathbf{k})$. Therefore, self-adjoint ZMs of $\tau_3 H_f^2$ must arise from special BCs that result from the squaring map of $H_f$ subject to open BCs. After some calculations, we find that $\tau_3 H_f^2$ is a Hamiltonian with next-nearest-neighbor hopping and, indeed, with an impurity potential at each end (by 
removing these impurities, i.e., imposing open BCs on $\tau_3 H_f^2$, one can check that ZMs no longer exist, consistently with no-go Theorem 3 in Sec.\,\ref{nogo}). Furthermore, 
due to the absence of nearest-neighbor hopping, lattice sites labeled by odd numbers decouple from those labeled by even numbers. Because we are focused on ZMs, we only need to consider the bosonic many-body Hamiltonian associated with odd lattice sites, which is a two-impurity Hamiltonian of the form $\widehat{H}_b+\widehat{W}$, with $\widehat{H}_b$ being the translation-invariant bulk and $\widehat{W}$ the boundary impurities. In units of $(\Delta^2+t^2)/t^2$, we have
\begin{eqnarray}
\label{two-impurity}
\left\{\begin{aligned}
&\widehat{H}_b=2\sum_{j=1}^N a_j^\dagger a^{\;}_j-\cos\theta\sum_{j=1}^{N-1}(a_j^\dagger a^{\;}_{j+1}+\text{h.c.}),\\
&\widehat{W}=-a_1^\dagger a^{\;}_1-a_N^\dagger a^{\;}_N-\frac{\sin\theta}{2}
(a_1^\dagger a_1^\dagger-a_N^\dagger a_N^\dagger+\text{h.c.}),
\end{aligned}
\right.
\end{eqnarray}
where 
$a_j^\dagger$ ($a^{\;}_j$) is the bosonic creation (annihilation) operator, 
$\sin\theta=2\Delta t/(\Delta^2+t^2)$ and $\cos\theta=(\Delta^2-t^2)/(\Delta^2+t^2)$. 
Without loss of generality, we take $\theta\in(0,\pi)$. In the rest of this subsection, we will work within the representation $\pi^\dagger H_b\pi$ we mentioned in Sec.\,\ref{mapping2} (recall Table \ref{tab:metrics}).

\subsubsection{Bosonic Majorana ZMs}

As mentioned, the effective BdG Hamiltonian $H_\tau=\tau_3 (H_b+W)$ has two exact ZMs. From Ref.\,[\onlinecite{Cobanera}] and assuming $\theta\neq \pi/2$, we write these two exact ZMs as quadratures, in the form  
\begin{eqnarray}
\label{quadraturesv2}
\left\{
\begin{aligned}
&\ket{\hat{x}_1}=
\frac{1}{\mathcal{N}}\sum_{j=1}^N(\sec\theta-\tan\theta)^{N+1-j}\ket{j}\otimes
\begin{bmatrix}
1\\
-1
\end{bmatrix},\\
&\ket{\hat{p}_1}=
\frac{i}{\mathcal{N}}\sum_{j=1}^N(\sec\theta-\tan\theta)^j\ket{j}\otimes
\begin{bmatrix}
1\\
1
\end{bmatrix},
\end{aligned}
\right.
\end{eqnarray}
where the normalization constant $\mathcal{N}$ is chosen so that $\braket{\hat{x}_1|\tau_3|\hat{p}_1}=i$ and, therefore, $[\hat{x}_1,\hat{p}_1]=i$, with $\hat{x}_1=\bra{\hat{x}_1}\tau_3\hat{\Phi}$ and $\hat{p}_1=\bra{\hat{p}_1}\tau_3\hat{\Phi}$ being self-adjoint (``Majorana bosons"). That is, we have 
\begin{eqnarray}
\left\{
\begin{aligned}
&\hat{x}_1=
\frac{1}{\mathcal{N}}\sum_{j=1}^N(\sec\theta-\tan\theta)^{N+1-j}(a^{\;}_j+a_j^\dagger),\\
&\hat{p}_1=
\frac{i}{\mathcal{N}}\sum_{j=1}^N(\sec\theta-\tan\theta)^j(a^{\;}_j-a_j^\dagger).
\end{aligned}
\right.
\end{eqnarray}
Obviously, $\hat{x}_1$ and $\hat{p}_1$ are exponentially localized at sites $j=N$ and $j=1$, respectively, which means that any bosonic ZM constructed by their linear combination has weights at both ends and therefore is non-local.

We next consider computation of the two asymptotic ZMs  mentioned previously. The 
strategy is to split the eigenvalue equation $H_\tau \ket{\epsilon}=\epsilon \ket{\epsilon}$ into two equations, the bulk equation and the boundary equation,
\begin{eqnarray}
\label{bulkboundaryeq}
P_B H_\tau \ket{\epsilon} = \epsilon P_B\ket{\epsilon},\quad
P_\partial H_\tau \ket{\epsilon} = \epsilon P_\partial \ket{\epsilon},
\end{eqnarray}
with $P_B\equiv \sum_{j=2}^{N-1} \ket{j}\bra{j}\otimes \mathds{1}_2$ the bulk projector and $P_\partial\equiv \mathds{1}_{2N}-P_B$ the boundary projector. After that, we need to solve the bulk equation, whose solutions will then be used to parameterize the solutions of the boundary equation (see Ref.\,[\onlinecite{PRB1}] for details). From Eq.\,(\ref{two-impurity}), we have
\begin{eqnarray}
\left\{
\begin{aligned}
    &H_b=\left[2\mathds{1}_N -\cos\theta(T+ T^\dagger)\right]\otimes \mathds{1}_2,\\
	&W=|1\rangle\langle 1|\otimes w_l + |N\rangle\langle N|\otimes w_r,
\end{aligned}
\right.
\end{eqnarray}
where $T\equiv \sum_{j=1}^{N-1} |j\rangle\langle j+1|$ is the left-shift operator acting on the lattice space, $w_l\equiv -\mathds{1}_2-\sin\theta\sigma_1$,
and $w_r\equiv -\mathds{1}_2+\sin\theta\sigma_1$, respectively.
The {\it reduced} bulk Hamiltonian $H(z)$ of $\tau_3 H_b$ and the associated polynomial $P(\epsilon,z)$ then read 
\begin{eqnarray}
\left\{
\begin{aligned}
    &H(z)=\left[2-(z+z^{-1})\cos\theta\right]
	\begin{bmatrix}
		1		&0\\
		0   	&-1
	\end{bmatrix},\\
	&P(\epsilon,z)= z^2\det\left(H(z)-\epsilon \mathds{1}_2\right).
\end{aligned}
\right.
\end{eqnarray}
For asymptotic ZMs, we have four distinct roots $z_\ell$ ($\ell=1,\ldots,4$) of the polynomial equation $P(\epsilon,z)=0$, associated with four independent solutions of the bulk equation in Eq.\,(\ref{bulkboundaryeq}). Thus, a linear combination of these four solutions can parameterize the asymptotic ZMs as
\begin{eqnarray}
\ket{\epsilon_\pm}=
\sum_{\ell=1}^2\alpha_\ell\ket{z_\ell}\otimes
\begin{bmatrix}
1\\
0
\end{bmatrix}+
\sum_{\ell=3}^4\alpha_\ell\ket{z_\ell}\otimes
\begin{bmatrix}
0\\
1
\end{bmatrix},
\end{eqnarray}
with $\ket{z_\ell}=\sum_{j=1}^N z_\ell^j \ket{j}$. This finally leads to a system of linear equations, 
$B(\epsilon)[\alpha_1,\alpha_2,\alpha_3,\alpha_4]^\text{T}=0$, with $B(\epsilon)$ being the \emph{boundary matrix} given explicitly by 
\begin{widetext}
\begin{eqnarray}
	B(\epsilon)=\begin{bmatrix}
\cos\theta-z_1\qquad\quad	        &\cos\theta-z_2\qquad\quad         &-z_3\sin\theta\qquad\quad         &-z_4\sin\theta\\
z_1\sin\theta\qquad\quad	        &z_2\sin\theta\qquad\quad          &z_3-\cos\theta\qquad\quad         &z_4-\cos\theta\\
z_1^N(z_1\cos\theta-1)\qquad\quad	&z_2^N(z_2\cos\theta-1)\qquad\quad &z_3^N\sin\theta\qquad\quad        &z_4^N\sin\theta\\
-z_1^N\sin\theta\qquad\quad	        &-z_2^N\sin\theta\qquad\quad       &z_3^N(1-z_3\cos\theta)\qquad\quad &z_4^N(1-z_4\cos\theta)
	\end{bmatrix}.
\end{eqnarray}
The eigenvalues of $H_\tau$ are precisely those that ensure $\det B(\epsilon)=0$. After some calculations, we obtain
\begin{align}
\label{detB}
\text{det}B=\frac{\sin^2(2\theta)}{8}(z_1-z_2)(z_3-z_4)\big[(z_1^N+z_2^N)(z_3^N+z_4^N)-4\big]-(z_1^N-z_2^N)(z_3^N-z_4^N)
\Big(2\sin^4\theta-\frac{1+\cos^2\theta}{2}\epsilon^2 \Big),
\end{align}
which is an even function of $\epsilon$, with the zero-th order term absent.
\end{widetext}

In the large-$N$ limit, we expand Eq.\,(\ref{detB}) to order $\mathcal{O}(\epsilon^6)$
\begin{align}
\text{det}B(\epsilon)=
4N^2\sin^2\theta\Big[\epsilon^2-\frac{(\sec\theta+\tan\theta)^{2N}}{16N^2\sin^4\theta\tan^2\theta}\epsilon^4\Big].
\end{align}
Indeed, we see two asymptotic ZMs with eigenvalues
\begin{align}
\epsilon_\pm=\pm\frac{4N\sin^2\theta\tan\theta}{(\sec\theta+\tan\theta)^N},
\end{align}
which can be plugged into $P(\epsilon,z)=0$ to solve for $z_\ell$, $\ell=1,\ldots,4$. 
In the large-$N$ limit, a nontrivial kernel vector of $B(\epsilon_\pm)$ is 
$\big[\alpha_1,\alpha_2,-\alpha_1,\alpha_2\big]^\text{T}$, which leads to 
$\ket{\epsilon_\pm}$ being written as the linear combination of $\ket{\hat{x}_1}$ and $\ket{\hat{p}_1}$ of Eq.\,(\ref{quadraturesv2}), indicating the loss of diagonalizability of $H_\tau$.

\begin{table}[t]
    \centering 
    \caption{ZMs counting for the Kitaev chain with odd number of lattice sites and the squared Kitaev chain with odd number of lattice sites. 
    The number of ZMs refers to the number of linearly 
    independent quasiparticle creation operators that commute with the many-body Hamiltonian. The number of mid-gap modes refers to the sum of the number of exact ZMs and the number of asymptotic ZMs.}
\begin{tabular}{ccccc}
		\hline		\hline
		\multirow{2}{*}{System} & \multirow{2}{*}{Size} & \multirow{2}{*}{$\quad$Diagonalizable$\quad$} & \multirow{2}{0.4in}{\,\,ZMs} & \multirow{2}{0.5in}{Mid-gap $\text{\,\,\,modes}$} \\
		&       &       &       &  \\
		\hline
		\multirow{2}{*}{$H_f$}		& \multirow{1}{*}{Finite} & \multirow{1}{*}{Yes} & \multirow{1}{*}{2} & \multirow{1}{*}{2} \\          
		& \multirow{1}{*}{Semi-infinite} & \multirow{1}{*}{Yes} & \multirow{1}{*}{1} & \multirow{1}{*}{1} \\
		&       &       &       &  \\
		\multirow{3}{*}{$\tau_3 H_f^2$} & \multirow{2}{0.35in}{Finite} & Yes, if $|\Delta/t|\neq 1$ & \multirow{2}{*}{2} & 4 \\
		&       & No, if $|\Delta/t|= 1$ &       & 2 \\
		& \multirow{1}{*}{Semi-infinite} & \multirow{1}{*}{No} & \multirow{1}{*}{1} & \multirow{1}{*}{1} \\
		\hline		\hline
	\end{tabular}
    \label{tab:counting}
\end{table} 

The effective BdG Hamiltonian $H_\tau$ also fails to be diagonalizable when $\theta=\pi/2$, the so-called ``sweet-spot" ($t=\Delta$) of the Kitaev chain, leading to
\begin{eqnarray}
    \widehat{H}_b +\widehat{W}= \sum_{j=2}^{N-1} 2a_j^\dagger a^{\;}_j + (p_1^2+x_N^2),
\end{eqnarray}
where $\hat p_1=i(a_1^\dagger-a^{\;}_1)/\sqrt{2}$ and $\hat x_N=(a_N^\dagger+a^{\;}_N)/\sqrt{2}$ are two independent self-adjoint ZMs, each associated with a Jordan block of size $2$. 
According to Theorem (ZMs, [\onlinecite{ColpaZMs}]) of Sec.\,\ref{statistic}, it is clear that no canonical bosonic ZMs can be built from $\hat p_1$ and $\hat x_N$. 

The above discussion on exact and asymptotic ZMs, together with the diagonalizability of the systems, is summarized in Table\,\ref{tab:counting}.

\subsubsection{Sensitivity of ZMs to perturbations}

In this section we investigate the sensitivity of ZMs to perturbations that preserve the stability of the free-boson system and contrast to perturbations that do not.

\paragraph{Stability-preserving perturbations.}

We first consider a boundary perturbation of the form 
\begin{align}
\label{kits}
\widehat{W}_s = (s-1) \widehat{W},\quad s\in[0,1] .
\end{align}
By adding \(\widehat{W}_s\) to Eq.\,\eqref{two-impurity}, since $H_b+sW \geq 0$ for all $s$, one obtains a stable family of free-boson systems that interpolates between open BCs ($s=0$), which forbid ZMs, and impurity BCs ($s=1$), which elicit exact bosonic Majorana ZMs. It is instructive to investigate how bosonic Majorana ZMs split as a function of the system size $N$ for different $s$'s. Analytically, at the sweet spot $\Delta/t=1$, ZMs split into $\pm 2\sqrt{1-s}$, 
with no $N$-dependence because of the decoupling between the boundary and the bulk. We plot the numerically determined minimal-modulus eigenvalue of $\tau_3 (H_b+sW)$ for $\Delta/t=0.5$ 
in Fig.\,\ref{bdrysplit}(a) and find opposite behaviors for small and large $s$'s. 
For small $s$'s, we see that the splitting of bosonic Majorana ZMs away from zero energy anomalously increases (rather than decreasing) as $N$ grows, which is not the case for protected fermionic Majorana ZMs.

\begin{figure}[t]
\includegraphics[width=\columnwidth]{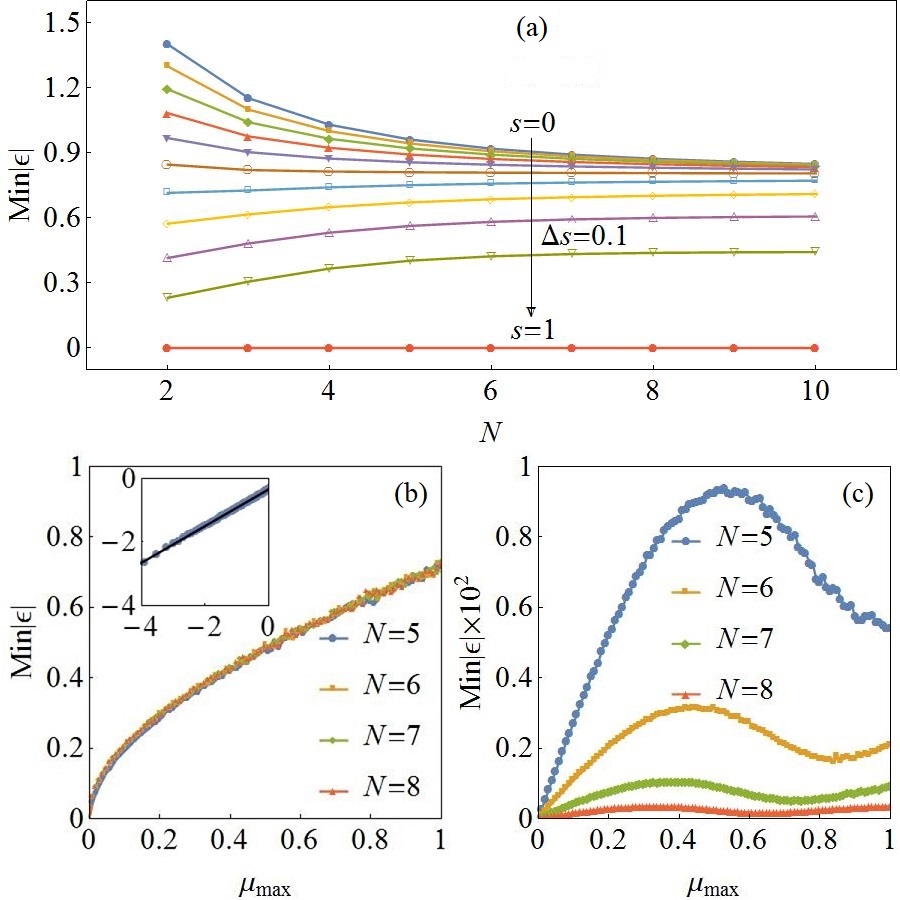}
\caption{The minimal-modulus eigenvalue of (a) $\tau_3(H_b+sW)$ as a function of the system size $N$ for various impurity strength $s$'s; (b) $\tau_3 (H_b+W+D_b)$ as a function of $\mu_\text{max}$ for various $N$'s (the inset is a log-log plot for $N=5$, with a linear fitting $\text{ln}(\text{Min}|\epsilon|)=0.58\, \text{ln}\,\mu_\text{max}-0.34$); (c) $H_f+D_f$ as a function of $\mu_\text{max}$ for various $N$'s. Both (b) and (c) employ randomly distributed disorder $\mu_j\in[0,\mu_\text{max}]$, averaged over 1,000 samples. $\Delta/t=0.5$ for (a), (b) and (c).}
\label{bdrysplit}
\end{figure}

As a second example, we keep the 
impurity BCs intact and perturb instead the bulk with the on-site disorder 
\begin{align}
\label{bosedis}
\widehat{D}_b = \sum_{j=1}^N \mu_j a_j^\dag a^{\;}_j,
\end{align}
where $\mu_j$'s are uniformly sampled in the interval $[0,\mu_\text{max}]$. We 
plot the numerically obtained minimal-modulus eigenvalue of $\tau_3 (H_b+W+D_b)$ as a function of $\mu_\text{max}$ and $N$ in Fig.\,\ref{bdrysplit}(b). 
We see that the splitting of ZMs increases monotonically as a function of $\mu_\text{max}$ and is independent of the system size. 
For comparison, we also plot in Fig.\,\ref{bdrysplit}(c) 
the minimal-modulus eigenvalue of $H_f+D_f$, i.e., the fermionic Kitaev chain subject to the on-site disorder 
\begin{align}
\label{fermdis}
\widehat{D}_f = \sum_{j=1}^{2N-1}\mu_j c_j^\dag c^{\;}_j, 
\end{align}
with $\mu_j$'s also uniformly sampled in the interval $[0,\mu_\text{max}]$. Unlike the bosonic case, the splitting of the Majorana ZMs is not monotonic as a function of $\mu_\text{max}$ and is sensitive to the system size. Moreover, the splitting is a couple of orders of magnitude smaller than the bosonic one. 

\begin{figure}[t]
\includegraphics[width=\columnwidth]{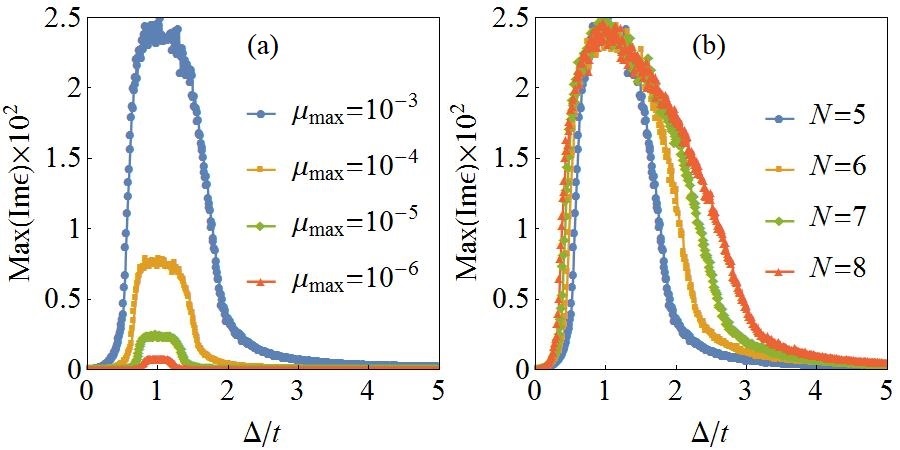}
\caption{(a) The maximal imaginary part of eigenvalues of $\tau_3 (H_b+W+D_b)$ as a function of $\Delta/t$ for various disorder strength $\mu_\text{max}$'s but with fixed system size $N=5$. (b) The maximal imaginary part of eigenvalues of $\tau_3 (H_b+W+D_b)$ as a function of $\Delta/t$ for various $N$'s, but with fixed $\mu_\text{max}=10^{-3}$. Both (a) and (b) are with randomly distributed disorder $\mu_j\in[-\mu_\text{max},\mu_\text{max}]$ averaged over 1,000 samples. When $\Delta/t=1$, we find analytically that Max(Im$\,\epsilon$)=${(8\sqrt{2}}/{15})\,\mu_{\text{max}}^{1/2}+\mathcal{O}(\mu_{\text{max}}^{3/2})$.}
\label{DisKitsqFig1ab}
\end{figure}

\paragraph{Stability-non-preserving perturbations.}
We again consider the bulk disorder in Eq.\,\eqref{bosedis} with the ideal impurity BCs intact, but now with $\mu_j$'s uniformly sampled in the interval $[-\mu_\text{max},\mu_\text{max}]$. As a consequence, the disorder may render the system unstable because of the violation of positive semi-definiteness of $H_b+W+D_b$. Furthermore, because ZMs of the unperturbed effective BdG Hamiltonian have different Krein signatures, as explained in Sec.\,\ref{statistic} there exists arbitrarily small perturbations that split ZMs into the complex plane. This is numerically confirmed in Fig.\,\ref{DisKitsqFig1ab}(a). 
We further plot the maximal imaginary part of eigenvalues of 
the perturbed effective BdG Hamiltonian as a function of \(\Delta/t\) for various system sizes $N$'s in Fig.\,\ref{DisKitsqFig1ab}(b). 
Both Fig.\,\ref{DisKitsqFig1ab}(a) and (b) suggest that ZMs are especially fragile around the sweet spot $\Delta/t=1$.

\subsection{Localized ZM in a bosonic field theory}

We consider next a field-theoretic example of the squaring procedure closely
related to the Kitaev chain: we square the celebrated Jackiw-Rebbi model of 
charge fractionalization\cite{JR} on the infinite real line. The model can be succinctly described in terms of the Dirac equation in one spatial dimension $x$,
\begin{align}
\label{jrham}
i \gamma^\nu \partial_\nu \psi - g V(\phi_c) \psi = 0 ,
\end{align}
where $V(\phi_c) = \phi_c$, $\nu=0,1$, $\partial_0\equiv \partial_t$, $g,\lambda>0$, and 
\begin{eqnarray}
\label{soliton}
\phi_c(x)=\text{tanh}(\lambda x).
\end{eqnarray} 
The stationary 
solutions are of the form \(\psi_\epsilon(x,t) = e^{-i \epsilon t} \psi_\epsilon (x)\),
with   
$-i \gamma^0 \gamma^1 \partial_x \psi_\epsilon(x) + \gamma^0 g V(\phi_c) \psi_\epsilon(x)=\epsilon \psi_\epsilon(x).$
Hence, the Dirac Hamiltonian is 
\begin{align}
\label{HD}
H_D \equiv \gamma^0 \gamma^1 p + \gamma^0 g V(\phi_c) 
\end{align}
with $p = -i\partial_x$. The choice of gamma matrices 
$\gamma^0 = \sigma_3$ and $\gamma^1= i\sigma_2$ puts this Dirac Hamiltonian
in Nambu form, 
\begin{align}
\label{HDsc}
H_D = \begin{bmatrix}
g V(\phi_c) & p \\ p & -gV(\phi_c)
\end{bmatrix},
\end{align}
satisfying $\mathcal{C} H_D \mathcal{C} = -H_D$ in terms of $\mathcal{C}\equiv \sigma_1\mathcal{K}$.

There are two ZMs, $\psi_0^+(x)$ and $\psi_0^-(x)$. They are related by charge conjugation, $\psi_0^-(x)= \mathcal{C} \psi_0^+(x)$, with 
\begin{align}\label{FZMs}
\psi_0^+(x) &= \cosh\left[\frac{g}{\lambda}\ln\big(\cosh(\lambda x)\big)\right]\begin{bmatrix}
1 \\ 0
\end{bmatrix}\nonumber\\ 
&- i\sinh\left[\frac{g}{\lambda}\ln\big(\cosh(\lambda x)\big)\right]\begin{bmatrix}
0 \\ 1
\end{bmatrix}.
\end{align}
They can be combined into a spatially-localized ZM 
\begin{align}
\psi_0^+(x) + i \psi_0^-(x) = \big[\cosh(\lambda x)\big]^{-g/\lambda}\begin{bmatrix}
1 \\ i
\end{bmatrix},
\end{align}
whereas the linearly independent combination $\psi_0^+ - i \psi_0^-$ diverges as $x\to\pm\infty$.

At this point our work separates from Ref.\,[\onlinecite{JR}]. Let us 
second-quantize the model as 
\begin{align}
\widehat{H}_f \equiv \frac{1}{2}\int \hat{\Psi}(x)^\dagger H_D(x) \hat{\Psi}(x),
\end{align}
in terms of the Nambu array $\hat{\Psi}(x)=[c(x)\,\,c^\dag(x)]^\text{T}$, with 
$\{c^{\;}(x),c^\dag(y)\}=\delta(x-y)$ and \(\{c(x),c(y)\}=0\). This Hamiltonian
describes spinless, electrically neutral fermions in a static background potential, 
the soliton of Eq.\,\eqref{soliton}. The charge fractionalization of the original model \cite{JR} is replaced by fermion-number fractionalization in our model. One can think of this model as a field-theory version of the Kitaev chain. 

A normal mode of \(\widehat{H}_f\) is an operator of the form  
\begin{align}
\hat{\psi_\epsilon}(t) \equiv \int \big[u^*(x,t) c(x) + v^*(x,t) c^\dag(x)\big]\,dx,
\end{align} 
with  $\psi_\epsilon(x,t)\equiv[u(x,t)\,\,v(x,t)]^\text{T}$ a 
stationary solution of Eq.\,\eqref{jrham}. It is immediate to check that 
\begin{align}
\{\hat{\psi^{\;}_\epsilon}(t),\hat{\psi}_\epsilon^\dag(t)\} = \int \big[|u(x,t)|^2 +|v(x,t)|^2\big]\,dx. 
\end{align}
Therefore, if one takes 
\begin{align}
\psi_0(x,t) 
& = 
e^{-i \pi/4} \,
\frac{\psi_0^+(x)+i \psi_0^-(x)}{2 {\cal N}^{1/2} } \nonumber\\
& =  \frac{[\cosh(\lambda x)]^{-g/\lambda}}{2{\cal N}^{1/2}}
e^{-i \pi/4}
\begin{bmatrix}
1 \\ i
\end{bmatrix}, 
\end{align}
with \({\cal N}^{-1}= \frac{2 \pi^{1/2}\Gamma[g/2\lambda]}{\lambda\Gamma[(g+\lambda)/2\lambda]}\) and $\Gamma$ being the Euler's Gamma function, one finds that $\hat{\psi}_0(t)=\hat{\psi}_0$ is independent
of time because \(\epsilon=0\) and  
$\{\hat{\psi}^{\;}_0,\hat{\psi}_0^\dagger\}=1,$ 
$\hat{\psi_0}^\dagger=\hat{\psi}^{\;}_0 .$
Since this is the only ZM and \(\hat{\psi}^{\;}_0\) and \(\hat{\psi}_0^\dagger\) are
linearly dependent, there is \emph{only half of a fermionic degree of freedom trapped at $x=0$}: a single Majorana fermion in the sense of Kitaev\cite{Kitaev2001}. The other
half is carried by the unnormalizable ZM $\psi_0^+ - i \psi_0^-$ and
so it has been pushed to infinity. This point of view is nicely
bolstered by solving the model subject to open 
BCs\cite{Roy84}.

We now proceed to investigate the associated free-boson theory. The square of $H_D$ is 
\begin{align}
 H_D^2 = \mathds{1}_2\big[ p^2 + \big(g \tanh(\lambda x)\big)^2\big] -\sigma_2 g\lambda \, \text{sech}^2(\lambda x),
\end{align}
with $H_D^2$ satisfying $\mathcal{C} H_D^2 \mathcal{C}=H_D^2$. 
Unlike the Dirac Hamiltonian of Sec.\,\ref{diquare}, \(\widehat{H}_b\) describes an explicitly particle non-conserving free-boson system. The second-quantized Hamiltonian is \(\widehat{H}_b=\frac{1}{2}\int \hat{\Phi}^\dag (x)H_D^2\hat{\Phi}(x)\,dx\), in terms of the bosonic Nambu array $\hat{\Phi}(x) = [a(x)\,\,a^\dag(x)]^\text{T}$, with $[a(x),a^\dag(y)]=\delta(x-y)$ and \([a(x),a(y)]=0\). 
The Hamiltonian density \( \frac{1}{2}\hat{\Phi}^\dag (x)H_D^2\hat{\Phi}(x)\)
is the sum of three contributions, namely, 
\begin{align*}
\widehat{T}(x) &= \frac{1}{2}\big[a^\dag(x) p^2 a(x) + a(x)p^2 a^\dag(x)\big] ,
\\
\widehat{U}(x) &= \frac{1}{2}\big[g\, \text{tanh}(\lambda x)\big]^2 \big[a^\dag(x)a(x) + a(x)a^\dag(x)\big],
\\
\widehat{\Delta}(x) &= \frac{ig\lambda}{2}\text{sech}^2(\lambda x)\big[ a^\dag(x)a^\dag(x)-a(x)a(x)\big].
\end{align*}
Notice that both the potential energy and the pairing potential are exponentially localized around \(x=0\). 

A normal mode of \(\widehat{H}_b\) is an operator 
\begin{align*}
\hat{\phi_\epsilon}(t) \equiv \int \big[u^*(x,t) a(x) - v^*(x,t) a^\dag (x)\big]\,dx ,
\end{align*}
that satisfies certain conditions. In terms of $\phi(x,t) = [u(x,t)\,\,v(x,t)]^\text{T}$, one can then check that
\begin{eqnarray}
\label{bocomm}
[\hat{\phi}^{\;}_\epsilon(t),\hat{\phi}_\epsilon^\dag(t)]=\int \phi^\dag(x,t)\sigma_3\phi(x,t)\,dx.
\end{eqnarray}
Moreover, the following properties hold: 
\begin{itemize}
\item[(i)] If $\partial_t \phi(x,t)  = i \sigma_3 H_D^2(x) \phi(x,t)$,  
then $\hat{\phi}_\epsilon(t)$ satisfies Heisenberg's equation of motion. 
\item[(ii)] If $\phi(x)$ is an eigenfunction of $\sigma_3 H_D^2(x)$ with 
eigenvalue $\epsilon$, then $\hat{\phi}_\epsilon(t) = e^{-i\epsilon t}\hat{\phi}_\epsilon(0)$. 
\end{itemize}
By construction, both $\psi_0^\pm(x)$ are formally eigenfunctions of $\sigma_3 H_D^2(x)$
with eigenvalue \(\epsilon=0\). However, the combination $\psi_0^+ - i \psi_0^-$ that diverges as $x\to\pm\infty$ is badly behaved to be considered an eigenvector even in a generalized sense. Hence, one can take the view that there is only one 
eigenvector with zero eigenvalue, the localized one. The associated self-adjoint bosonic ZM is 
\begin{align}
\hat{\phi}_0=
\frac{e^{-i\pi/4} }{2 {\cal N}^{1/2} }
\int \big[\cosh(\lambda x)\big]^{-g/\lambda}
\big[a(x)+ia^\dagger(x)\big]\,dx.
\end{align}
Since \(\hat{\phi}_0^\dagger=\hat{\phi}^{\;}_0\), it follows that \([\hat{\phi}^{\;}_0,\hat{\phi}_0^\dagger]=0\)
and one can again take the view that the bosonic theory traps half of a bosonic
degree of freedom (a ``quadrature'') at the origin, while the other quadrature is pushed to infinity. Our analysis of the squared Kitaev chain suggests that a soliton-antisoliton pair will split a single bosonic degree of freedom into two quadratures, one localized at the center of the soliton and the other localized at the center of the antisoliton. We are not aware of a previous description in the literature of this phenomenon for bosons. 

\subsection{The squared Harper-Hofstadter-pairing model}
\label{2DHofstadter}

As a final example, we now apply our squaring map to the spinless $d=2$ Harper-Hofstadter Hamiltonian with additional pairing terms. Time-reversal symmetry is broken in the Harper-Hofstadter model\cite{Harper, Hofstadter}. After introducing pairing terms, the model belongs to class D of the Hermitian classification. Upon squaring, the effective BdG Hamiltonian belongs to class D of pseudo-Hermitian symmetry classes (see Table\,\ref{tab:Particle_non-conserving_squared_irreducible}) 
and, according to Table\,\ref{tab:Bosonic_periodic_table}, is classified by an integer $\mathbb{Z}$, the bosonic Chern number [see Eq.\,(\ref{bosonicChern}) in Sec.\,\ref{invariants}].

We start with the following fermionic tight-binding Hamiltonian subject to open BCs
\begin{widetext}
\begin{align}
\widehat{H}_f=
-\sum_{m=1}^{L_x}\sum_{n=1}^{L_y}\mu c_{m,n}^\dagger c^{\;}_{m,n} +
\sum_{m=1}^{L_x-1}\sum_{n=1}^{L_y}&\big(t_x c_{m+1,n}^\dagger c^{\;}_{m,n} + 
\Delta_x c_{m+1,n}^\dagger c_{m,n}^\dagger + \text{h.c.}\big)\nonumber\\ +
\sum_{m=1}^{L_x}\sum_{n=1}^{L_y-1}\big(t_ye^{i2\pi\phi m} c_{m,n+1}^\dagger &c^{\;}_{m,n} + 
\Delta_y c_{m,n+1}^\dagger c_{m,n}^\dagger + \text{h.c.}\big),
\end{align}
where $L_x\,(L_y)$ is the number of lattice sites along the $x\,(y)$ direction, $\mu$ is the on-site energy, $c_{m,n}^\dagger\,(c^{\;}_{m,n})$ is the fermionic creation (annihilation) operator at site $(m,n)$, $t_x\,(t_y)$ is the nearest-neighbor hopping amplitude along the $x\,(y)$ direction, $\Delta_x\,(\Delta_y)$ is the pairing 
potential along the $x\,(y)$ direction, and $\phi$ is the magnetic flux per plaquette in units of flux quanta $h/e$. Imposing periodic BCs in the $y$ direction and assuming $t_x=t_y=-t$, $\Delta_x=\Delta_y=\Delta$, $\phi=1/3$, 
we have $\widehat{H}_f=\sum_{k_y}\widehat{H}_f(k_y)$, with $\widehat{H}_f(k_y)$ in units of $t$ as
\begin{align}
\widehat{H}_f(k_y)=- \sum_{m=1}^{L_x}\bigg[\Big[2\cos{\Big(k_y-\frac{2\pi m}{3}\Big)}+\frac{\mu}{t}\Big]
c_{m,k_y}^\dagger& c^{\,}_{m,k_y} +\Big[i\frac{\Delta}{t}\sin{k_y} c_{m,k_y}^\dagger c_{m,-k_y}^\dagger 
+ \text{h.c.}\Big]\bigg]\nonumber\\ - \sum_{m=1}^{L_x-1}\bigg[c_{m+1,k_y}^\dagger c^{\;}_{m,k_y} -&\frac{\Delta}{t} 
c_{m+1,k_y}^\dagger c_{m,-k_y}^\dagger +\text{h.c.}\bigg].
\end{align}
\end{widetext}
Thus, we have an effective one-dimensional fermionic many-body Hamiltonian $\widehat{H}_f(k_y)$. By numerical diagonalization of the BdG Hamiltonian $H_f(k_y)$, we obtain the single-particle energy spectra plotted in Fig.\,\ref{fig:p_q=1_3_E(ky)}(a). Apart from mid-gap edge states at finite energies, there are localized Majorana ZMs as well. Specifically, there are two chiral propagating ZMs at each edge of the cylinder, of opposite directions. This is consistent with numerical evaluations of the fermionic Chern numbers of the lowest three negative energy bands under periodic BCs: that is, we find  $(\mathcal{C}_3^-,\,\mathcal{C}_2^-,\,\mathcal{C}_1^-)=(-1,\,2,\,1)$, where $\mathcal{C}_3^-$ is the Chern number of the lowest negative energy band. Due to the bulk-boundary correspondence, a non-vanishing sum of these three Chern numbers 
$\mathcal{C}_3^- + \mathcal{C}_2^- + \mathcal{C}_1^-=2$ corresponds to topologically nontrivial ZMs, namely, topologically protected Majorana fermions localized at the boundaries. However, for free-boson systems, the sum of the bosonic Chern numbers of negative energy bands must vanish\cite{Shindou} and ZMs cannot exist subject to open BCs according to no-go Theorem\,\ref{nogothm} in Sec.\,\ref{nogo}. 

Let us square the BdG Hamiltonian $H_f(k_y)$, which leads to a quadratic bosonic Hamiltonian of the form $\widehat{H}_b(k_y)+\widehat{W}$, with
\begin{widetext}
\begin{align}
\widehat{H}_b(k_y)= 
&\sum_{m=1}^{L_x}\mu(k_y)a_{m,k_y}^\dagger a^{\,}_{m,k_y}
+ \sum_{m=1}^{L_x-1}\Big[t_1(k_y)a_{m+1,k_y}^\dagger a^{\,}_{m,k_y} +\text{h.c.}\Big]+
\sum_{m=1}^{L_x-2}\Big[t_2 a_{m+2,k_y}^\dagger a^{\,}_{m,k_y}+\text{h.c.}\Big]\nonumber\\
&+\sum_{m=1}^{L_x}\Big[\Delta_0(k_y)a_{m,k_y}^\dagger a_{m,-k_y}^\dagger + \text{h.c.}\Big]
+ \sum_{m=1}^{L_x-1}\Big[\Delta_1(k_y)a_{m+1,k_y}^\dagger a_{m,-k_y}^\dagger +\text{h.c.}\Big],\\
\text{where}\quad&\left\{
\begin{aligned}
\mu(k_y)&\equiv\Big[2\cos{\Big(k_y-\frac{2\pi m}{3}\Big)}+\frac{\mu}{t}\Big]^2+4\frac{\Delta^2}{t^2}\sin^2{k_y}+2(1+\frac{\Delta^2}{t^2}),\nonumber\\
t_1(k_y)&\equiv -2\Big[\cos{\Big(k_y-\frac{2\pi(m+2)}{3}\Big)}-2i\frac{\Delta^2}{t^2}\sin{k_y}-\frac{\mu}{t}\Big],\quad\quad t_2\equiv 1-\frac{\Delta^2}{t^2},\nonumber\\
\Delta_0(k_y)&\equiv 4i\frac{\Delta}{t}\sin{\frac{2\pi m}{3}}\sin^2{k_y},\quad\quad \Delta_1(k_y)\equiv4\frac{\Delta}{t}\sin{\frac{2\pi(m+2)}{3}}\sin{\Big(k_y-\frac{\pi}{3}\Big)},\nonumber
\end{aligned}
\right.
\end{align}
\begin{align}
\widehat{W}= -\Big(1+\frac{\Delta^2}{t^2}\Big)\Big[a_{1,k_y}^\dagger a^{\,}_{1,k_y}+
a_{L_x,k_y}^\dagger a^{\,}_{L_x,k_y}\Big] -
\bigg[\frac{\Delta}{t}\Big[a_{1,k_y}^\dagger a_{1,-k_y}^\dagger-a_{L_x,k_y}^\dagger a_{L_x,-k_y}^\dagger\Big] + \text{h.c.}\bigg].
\end{align}
\end{widetext}
Then, the effective BdG Hamiltonian is $H_\tau(k_y)=\tau_3 (H_b(k_y)+W)$, with the single-particle spectra around zero energy plotted in Fig.\,\ref{fig:p_q=1_3_E(ky)}(b). Unlike in Fig.\,\ref{fig:p_q=1_3_E(ky)}(a) for fermions, the bosonic positive bands are disconnected from the negative bands. This feature is a peculiarity of this model. For example, if one consider the free-boson model associated to the square of the chiral \(p+ip\) superconductor\cite{Read00}, one would find bosonic surface bands that cross zero energy for appropriate BCs. What these two free-boson models have in common is that the total Chern number of the negative bands vanishes, as must be the case in general for the bosonic Chern number\cite{Shindou}. In Table \ref{tab:p_q=1_3_chern_number} we compare the Chern numbers of the fermionic and its associated bosonic models. For the bosonic descendant of the Harper-Hofstadter-pairing model, the bosonic Chern numbers of the lowest three negative bands are $(\mathcal{C}_3^-,\,\mathcal{C}_2^-,\,\mathcal{C}_1^-)=(-1,\,2,\,-1)$, with the original fermionic Chern number $\mathcal{C}_1^-=1$ changed to $\mathcal{C}_1^-=-1$ for the bosonic model. This dramatically alters the topological properties of ZMs as the sum of the three bosonic Chern numbers vanishes, meaning that ZMs in the band gap are \emph{not} topologically mandated. 

\begin{figure*}[tb]
    \centering
    \includegraphics[width=0.8\textwidth]{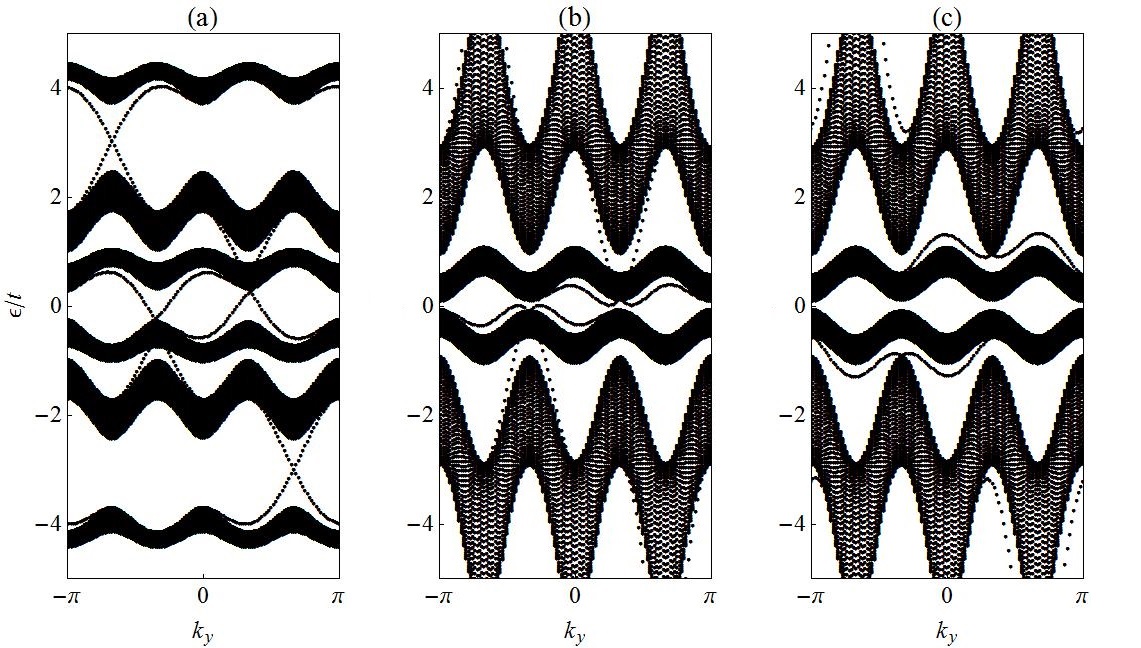}
    \vspace*{-2mm}
    \caption{Single-particle energy spectra of (a) the fermionic Harper-Hofstadter-pairing model subject to open BCs along the $x$ direction; (b) the squared Harper-Hofstadter-pairing model subject to impurity BCs along the $x$ direction; (c) the squared Harper-Hofstadter-pairing model subject to open BCs along the $x$ direction. The parameters are $\mu/t=-1.7$, $\Delta/t=-0.1$, $\phi=1/3$ and $L_x=L_y=120$. Note that for visual clarity only spectra around 
    zero energy are plotted in figures (b) and (c).}
    \label{fig:p_q=1_3_E(ky)}
\end{figure*}

\begin{table}[t]
    \centering 
    \caption{Chern numbers of the Harper-Hofstadter-pairing model $H_f(\mathbf{k})$ and the squared free-boson model $\tau_3H_f^2(\mathbf{k})$. The bosonic Chern number is 
    numerically evaluated based on Eq. \eqref{ChernB}. 
    The parameters are $\mu/t=-1.7$, $\Delta/t=-0.1$, $\phi=1/3$.}
    \smallskip
    \begin{tabular}{x{22mm} x{11mm} x{11mm} x{11mm} x{22mm}}
    \hline\hline \\ [-2ex]
             \      & $\mathcal{C}_3^-$ & $\mathcal{C}_2^-$ & $\mathcal{C}_1^-$ & $\mathcal{C}_3^- + \mathcal{C}_2^- + \mathcal{C}_1^-$ 
    \tabularnewline [1ex] \hline \\[-1ex]
    \(H_f(\mathbf{k})\)         & -1      & 2       & 1       & 2
    \tabularnewline [1ex] 
    \(\tau_3H_f^2(\mathbf{k})\) &   -1    & 2       & -1      & 0
    \tabularnewline [1ex] \hline\hline
    \end{tabular}
    \label{tab:p_q=1_3_chern_number}
\end{table}

The reason why ZMs and edge states around the zero energy appear in our model is due to special BCs caused by the squaring procedure, i.e., the boundary impurity term $\widehat{W}$. To see that this is the case, we remove $\widehat{W}$ and diagonalize $H_\tau(k_y)=\tau_3 H_b(k_y)$ subject to open BCs. The single-particle energy spectra around zero energy is plotted in Fig.\,\ref{fig:p_q=1_3_E(ky)}(c): this shows no ZMs, while surface bands in other gaps survive. Thus, topologically non-trivial stable free-boson systems are systems with topologically non-trivial single-particle states at finite energies but topologically trivial many-body ground states, in agreement with the general no-go theorems we discussed in Sec.\,\ref{zeromodes}.

\section{Conclusions and outlook}
\label{summary}

Many-boson systems display amazing coherent behavior and correlations, such as Bose-Einstein condensation and fragmentation. At the mean-field level description, however, topology seems to provide a limited-scope principle for the organization of low-energy, stable bosonic matter. In this paper, following steps analogous to those in the fermionic tenfold way, with identical classifying symmetry constraints together with the pseudo-Hermiticity, we have presented a topological classification of stable free-boson systems by means of a kernel-preserving squaring map, leading to an elegant threefold way. This topological classification bears great resemblance with standard Dyson symmetry classes because, as shown, out of three classifying symmetries only time-reversal symmetry is of fundamental importance for canonical bosons. 

Moreover, we proved three no-go theorems applicable to arbitrary stable 
gapped free-boson systems, even those that are not derived from our squaring map. Our first theorem establishes the even parity of bosonic ground states, with the immediate consequence of lack of parity switches. Consistently, our second theorem dictates the absence of non-trivial SPT phases of stable free-boson systems. 
By the Gell-Mann-Low theorem and our no-go Theorem 2, one can conclude that there also exist no non-trivial SPT phases of weakly-interacting bosons. Therefore, SPT phases of bosons can only be strongly-correlated phases of matter, beyond the reach of perturbative approaches. Our third theorem puts the last nail in the coffin, by asserting that not only bosonic ZMs, but also midgap states around zero energy are forbidden when the system is subject to open BCs. These results can be traced back to a condition that bosonic Hamiltonians need to satisfy in order to be stable (positive semi-definiteness). There is no counterpart of this condition for fermions.

In spite of these no-go results, we utilized our squaring-the-fermion map for generating a wealth of examples of localized bosonic ZMs and surface bands in the zero gap coexisting with a fully-gapped bulk. The key is to notice that the square of a fermionic BdG Hamiltonian satisfies the particle-hole constraint associated to bosonic effective BdG Hamiltonians. The localized ZMs and surface bands obtained by this method share, with due allowance for the change in statistics, every exotic property of their fermionic counterpart with two exceptions: the resulting BCs \emph{cannot be open}, and, consistently with our no-go results, there generically is \emph{no protection} 
mechanism at play. We investigated this last point numerically in considerable detail for the squared Kitaev chain. Besides two exact bosonic Majorana ZMs inherited from Majorana fermions, we found a pair of unexpected asymptotic ZMs localized at the two ends, which coalesce with those two exact bosonic Majorana ZMs in the thermodynamic limit. 
We have also shown how to generate new bosonic field theories out of our squaring map, in particular, we derived a squared Jackiw-Rebbi field theory with bosonic solitons. Finally, we presented the squared Harper-Hofstadter model with pairing to illustrate the interplay between bosonic topological invariants and midgap states. 

Our no-go Theorem 1, which establishes that gapped stable free bosonic ground states always display even parity, is at odds with the equivalent free fermionic situation. The change in fermion parity between equilibrium phases is an
indicator of the transition between gapped topologically distinct vacua, 
and this fermion parity switch is the invariant that
finds extension in particle-conserving interacting fermionic 
systems\cite{Ortiz14,Ortiz16}. In hindsight, what a topologically 
non-trivial interacting bosonic vacuum represents, constitutes a fundamental 
question. But it is perhaps equally fundamental to establish whether interactions may induce a ground-state topological transition between parity-distinct gapped bosonic phases. In Ref.\,[\onlinecite{Lerma19}], 
an exactly solvable $p$-wave pairing model for two bosonic species was introduced, that shares some commonalities with the $p+ip$ fermionic model.
Contrary to the latter, in the bosonic case the transition separates a 
gapless, fragmented singlet pair Bose-Einstein condensate from a pair Bose gapped superfluid. This raises the concern that boson parity switches may be fundamentally non-existent in interacting bosonic systems, since a \emph{gapless} Bose-Einstein condensate may intervene. One would like to find interacting bosonic models with topologically inequivalent \emph{gapped} phases. This is an open question for future studies.

\acknowledgments

We gratefully acknowledge correspondence with H. Schulz-Baldes and P. T. Nam on the problem of many-body stability. A. A. acknowledges insightful discussions with Barry C. Sanders and David Feder on zero modes in topological bosons. Work at Dartmouth was partially supported by the NSF through Grants No. PHY-1620541 and OIA-1921199, and the Constance and Walter Burke Special Projects Fund in Quantum Information Science. 
A. A. acknowledges support from the Quantum Alberta Initiative by the Government of Alberta.
E. C. gratefully acknowledges support from the Office of Research Advancement, SUNY 
Polytechnic Institute, in the form of a Seed Grant for the year 2019. G. O. acknowledges partial support by the DOE grant DE-SC0020343. The IU Quantum Science and Engineering Center is 
supported by the Office of the Vice Provost for Research through its Emerging Areas of Research program.

\appendix

\section{Proof of the theorem in Sec.\,\ref{statistic}}
\label{proofs}

As in the main text, $H_\tau = \tau_3 H_b$ denotes 
the effective BdG Hamiltonian of a many-body free-boson system with $N$ single-particle states and $H_b\geq 0$. 
We write $N_0\equiv\dim\ker H_\tau$ and let $m\leq N_0$ be the number of size-two Jordan blocks (necessarily at zero frequency as a consequence of positive semi-definiteness\cite{Gohberg}) in the Jordan normal form of $H_\tau$. Similarly, we write $2n\equiv N_0-m$ for the (necessarily even) number of size-one Jordan blocks. Note that there are then $2N'\equiv 2N-2n-2m$ non-zero eigenvalues $\epsilon_\nu$ of $H_\tau$. We first establish the following:

\begin{lemma}
\label{zerolemma}
 The $2n$ kernel vectors of $H_\tau$ corresponding to Jordan blocks of size one denoted by  $\ket{\phi_{0,j}^\pm}$ with $j=1,\ldots, n$, can be taken to satisfy $\braket{\phi_{0,j}^\pm|\tau_3|\phi_{0,\ell}^\pm}
 =\pm \delta_{j\ell}$ and $\braket{\phi_{0,j}^\pm|\tau_3|\phi_{0,\ell}^\mp}=0$. The remaining $m$ kernel 
 vectors, which we will denote by $\ket{\rho_{0,j}}$, can be chosen to be $\tau_3$-orthogonal to the vectors 
 $\ket{\phi^\pm_{0,\ell}}$ and satisfy $\braket{\rho_{0,j}|\tau_3|\rho_{0,\ell}}=0$ for all $j,\ell\in\{1,\ldots,m\}$. 
 Furthermore, we can find $m$ generalized eigenvectors $\ket{\chi_{0,j}}$ satisfying $H_\tau \ket{\chi_{0,j}}=
 \ket{\rho_{0,j}}$ and $\braket{\chi_{0,j}|\tau_3|\chi_{0,\ell}}=0$ and  $\braket{\chi_{0,j}|\tau_3|\rho_{0,\ell}}=
 t_j\delta_{j,\ell}$, with $t_j$ either $1$ or $-1$, for all $j,\ell\in\{1,\ldots,m\}$ that are $\tau_3$-orthogonal to 
 the vectors $\ket{\phi^\pm_{0,j}}$.
\end{lemma}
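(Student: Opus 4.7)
The plan is to invoke the canonical form theory for $\tau_3$-selfadjoint nilpotent operators, specialized to the generalized kernel of $H_\tau$, and then normalize by elementary manipulations within each Jordan chain. First I would record the ambient structure: because $H_\tau$ is $\tau_3$-pseudo-Hermitian, and because generalized eigenspaces for distinct eigenvalues of a $\tau_3$-selfadjoint matrix are $\tau_3$-orthogonal, the form $\braket{\cdot|\tau_3|\cdot}$ restricts to a nondegenerate Hermitian form on the generalized kernel $\mathcal{G}\equiv \ker(H_\tau^2)$, which has dimension $2n+2m$. On $\mathcal{G}$, $H_\tau$ is a nilpotent $\tau_3$-selfadjoint operator of nilpotency index at most two.

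Second, I would extract the orthogonalities that are automatic from pseudo-Hermiticity and from $H_\tau\ket{\rho_{0,j}}=0$. For any $\ket{v}\in \ker H_\tau$,
\begin{align}
\braket{\rho_{0,j}|\tau_3|v} &= \braket{H_\tau\chi_{0,j}|\tau_3|v} \nonumber\\
&= \braket{\chi_{0,j}|\tau_3|H_\tau v} = 0,
\end{align}
so each $\ket{\rho_{0,j}}$ is $\tau_3$-isotropic, $\tau_3$-orthogonal to every other $\ket{\rho_{0,\ell}}$, and $\tau_3$-orthogonal to every size-one kernel vector. A parallel computation shows that the Gram matrix $M_{j\ell}\equiv\braket{\chi_{0,j}|\tau_3|\rho_{0,\ell}}$ is Hermitian.

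Third, I would invoke the canonical form theorem\cite{Gohberg} to decompose $\mathcal{G}$ as a $\tau_3$-orthogonal direct sum of $H_\tau$-cyclic subspaces that are each $\tau_3$-nondegenerate. Since Jordan blocks have size at most two, every summand is either one-dimensional -- on which nondegeneracy forces $\braket{\phi|\tau_3|\phi}\neq 0$ and allows the normalization $\braket{\phi_{0,j}^\pm|\tau_3|\phi_{0,\ell}^\pm}=\pm\delta_{j\ell}$ -- or a two-dimensional Jordan plane, on which the automatic isotropy of $\ket{\rho_{0,j}}$ forces $\braket{\chi_{0,j}|\tau_3|\rho_{0,j}}\neq 0$, so that after rescaling it equals $t_j=\pm 1$. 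The residual Jordan-chain freedom $\ket{\chi_{0,j}}\to \ket{\chi_{0,j}}+c_j\ket{\rho_{0,j}}$ (which preserves $H_\tau\ket{\chi_{0,j}}=\ket{\rho_{0,j}}$) can then be used to set $\braket{\chi_{0,j}|\tau_3|\chi_{0,j}}=0$, and the analogous shift by a linear combination of the $\ket{\rho_{0,\ell}}$ and $\ket{\phi_{0,\ell}^\pm}$ kills the off-diagonal $\braket{\chi_{0,j}|\tau_3|\chi_{0,\ell}}$ and $\braket{\chi_{0,j}|\tau_3|\phi_{0,\ell}^\pm}$, completing the asserted normalization.

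The hard part will be the canonical-form step itself: ruling out the stealthy possibility that two genuine size-one kernel vectors of vanishing Krein signature pair up into a hyperbolic $\tau_3$-plane that mimics a size-two Jordan chain without actually being one. The standard remedy is a two-stage Gram--Schmidt-like induction -- first diagonalize the Hermitian Gram matrix $M_{j\ell}$ by mixing Jordan chains of equal length (which preserves the Jordan structure), then rotate any residual hyperbolic pair of nominally isotropic size-one kernel vectors into a basis of definite Krein signature $\pm 1$ -- after which the block counts $2n$ and $m$ become manifest. Verifying that each rearrangement commutes with the $H_\tau$-action, so that no exotic block types emerge and the given Jordan-chain data survive intact, is where the indefinite-linear-algebra machinery of Ref.\,[\onlinecite{Gohberg}] is indispensable.
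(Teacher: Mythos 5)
Your overall route is essentially the paper's: both arguments rest on the canonical form for the pair $(H_\tau,\tau_3)$ from Ref.\,[\onlinecite{Gohberg}] (the paper cites Theorem 5.1.1 and reads all the orthogonality relations off the columns of $S^{-1}$ via $\tau_3=S^\dagger PS$), and your preliminary observations -- the $\tau_3$-isotropy of the $\ket{\rho_{0,j}}$ and their orthogonality to the whole kernel, which follow from $\braket{H_\tau\chi|\tau_3|v}=\braket{\chi|\tau_3 H_\tau|v}$, the Hermiticity of the Gram matrix $M$, and the Jordan-chain shifts $\ket{\chi_{0,j}}\to\ket{\chi_{0,j}}+\sum_\ell c_{j\ell}\ket{\rho_{0,\ell}}+(\cdots)$ used to kill $\braket{\chi|\tau_3|\chi}$ and $\braket{\chi|\tau_3|\phi^\pm}$ -- are all correct and make the reduction more self-contained than the paper's. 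Your worry about a ``hyperbolic pair mimicking a size-two Jordan chain'' is a non-issue: the Jordan block sizes are invariants of $H_\tau$ alone and are untouched by any change of basis within the kernel; only the sign characteristic is metric-dependent.

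The one genuine gap is that you never establish the \emph{balanced} split of Krein signatures among the size-one kernel vectors. Nondegeneracy of the form on each one-dimensional summand only tells you that each such vector can be normalized to signature $+1$ or $-1$; the Lemma asserts that exactly $n$ of the $2n$ are $+1$ and exactly $n$ are $-1$, and this is precisely what is needed downstream to assemble $n$ canonical boson pairs $(b^{\;}_{0,j},b_{0,j}^\dagger)$. The count is not automatic and is where the paper does real work: by Sylvester's law (Theorem A.1.1 of Ref.\,[\onlinecite{Gohberg}]) the canonical metric $P$ has the same inertia $(N,N)$ as $\tau_3$; the root subspaces of the nonzero eigenvalues contribute $(N',N')$ because positivity of $H_b$ forces eigenvectors at $+\epsilon_\nu$ ($-\epsilon_\nu$) to have signature $+1$ ($-1$); each of the $m$ hyperbolic planes $t_\ell\sigma_1$ contributes $(1,1)$; subtracting leaves $(n,n)$ for the $s_\ell$'s. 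In your framework the same bookkeeping amounts to showing that $\tau_3$ restricted to $\ker(H_\tau^2)$ has signature $(n+m,n+m)$; without that inertia argument (which uses $H_b\geq0$, not just pseudo-Hermiticity), your proof only delivers $\braket{\phi_{0,j}|\tau_3|\phi_{0,\ell}}=\varepsilon_j\delta_{j\ell}$ with an undetermined distribution of signs $\varepsilon_j=\pm1$.
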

\begin{proof}
Theorem 5.1.1 in Ref.\,[\onlinecite{Gohberg}] says there exists a matrix $S$ that induces the 
transformation $H_\tau =S^{-1} J S$ with
\begin{align*}
J = \textup{diag}\left(\mathcal{E},-\mathcal{E},0,\ldots,0,J_0,\ldots,J_0\right),
\end{align*}
where $\mathcal{E}\equiv \textup{diag}\left(\epsilon_1,\ldots,\epsilon_{N'}\right)$, $0$ appears $2n$ 
times and $J_0 \equiv \begin{bmatrix}0&1\\0&0\end{bmatrix}$
appears $m$ times, and  $\tau_3=S^\dag P S$ with
\begin{align*}
P&=\textup{diag}\left(\id_{N'},-\id_{N'},P_0\right),
\\
P_0 &= \textup{diag}\left(s_1,\ldots,s_{2n},t_1\sigma_1,\ldots,t_{m} \sigma_1\right) ,
\end{align*}
where the $s_\ell$'s and $t_\ell$'s are either $1$ or $-1$. Furthermore, Theorem A.1.1 in 
Ref.\,[\onlinecite{Gohberg}] tells us that $P$ and $\tau_3$ have the same number of positive and 
negative eigenvalues. Clearly the eigenvalues of $P$ are $\pm 1$ and so they must each have multiplicity $N$. 
We see that $N'$ of the $+1$ ($-1$) eigenvalues are accounted for in the first (second) $N'$ diagonal elements 
of $P$. We also have $m$ of them originating from the $t_\ell \sigma_1$ factors in $P_0$. The remaining 
number of $+1$ ($-1$) eigenvalues is $N-N'-m$. Recalling that $N=N'+m+n$, we conclude that  $n$ of the 
$s_\ell$'s are $+1$ and the remaining are $-1$. Without loss of generality, (by rearranging the Jordan normal 
form as necessary) we can take $s_j=1$ for $j=1,\ldots, n$ and $s_j=-1$ for $j=n+1,\ldots, 2n$.  Now define
\begin{align*}
\ket{\phi_{0,j}^+} \equiv S^{-1}\ket{2N'+j}, \quad \ket{\phi_{0,j}^-} \equiv S^{-1}\ket{2N'+n+j}  ,
\end{align*}
for $j=1,\ldots, n$. Clearly $H_\tau\ket{\phi_{0,j}^\pm}=0$ and by virtue of the relation $\tau_3=S^\dag P S$ 
we have $\braket{\phi_{0,j}^\pm|\tau_3|\phi_{0,\ell}^\pm}=\pm\delta_{j\ell}$ and 
$\braket{\phi_{0,j}^\pm|\tau_3|\phi_{0,\ell}^\mp}=0$.

In the same vein, we can take 
\begin{align*}
\ket{\rho_{0,j}} &\equiv S^{-1}\ket{2N'+2n+2j-1} ,
\\
 \ket{\chi_{0,j}} &\equiv S^{-1}\ket{2N'+2n+2j} ,
\end{align*}
which can be checked to satisfy claimed properties. 
\end{proof}

Now we restate Theorem of Sec.\,\ref{statistic} and give a proof. 

\smallskip
\noindent\textbf{Theorem (ZMs, [\onlinecite{ColpaZMs}]).}
\textit{For the effective BdG Hamiltonian $H_\tau=\tau_3H_b$, let $2n$ and $m$ be the number of linearly independent zero eigenvectors associated to Jordan chains of length one and two, respectively. Then there are $n$ pairs of canonical boson $b^{\;}_{0,j},b_{0,j}^\dag$ that commute with the many-body Hamiltonian \(\widehat{H}_b\) and all other normal modes of the system. 
In addition, there exist $m$ pairs of Hermitian
operators $P_{0,j},Q_{0,j}$ that also commute with all other normal modes of the system and obey $[Q_{0,j},P_{0,\ell}]=i\delta_{j\ell}$, $[\widehat{H}_b,P_{0,j}]=0$, 
and $[\widehat{H}_b,Q_{0,j}]= (i/\mu_j)P_{0,j}$, with \(\mu_j>0\).}

\begin{proof}
Let $\mathcal{D}_0$ be the span of the eigenvectors $\ket{\phi^\pm_{0,j}}$ specified in Lemma \ref{zerolemma}.
In order to construct these bosonic ZMs, we need a basis $\{\ket{\psi_{0,j}^\pm}\}_{j=1}^n$ for $\mathcal{D}_0$ satisfying
\begin{equation}
\begin{aligned}\label{threeconds}
&\braket{\psi_{0,j}^\pm|\tau_3|\psi_{0,\ell}^\pm} = \pm \delta_{j\ell},
\\
&\braket{\psi_{0,j}^\pm|\tau_3|\psi_{0,\ell}^\mp} = 0,
\\
&\ket{\psi_{0,j}^-} =\mathcal{C}\ket{\psi_{0,\ell}^+},
\end{aligned}
\end{equation}
where $\mathcal{C}\equiv \tau_1\mathcal{K}$. With such a basis, we can construct the $n$ pairs 
$(b^{\;}_{0,j},b_{0,j}^\dag)$ in a way identical to Eq.\,(\ref{quasi-particle}).

It is not \textit{a priori} true that the basis $\{\ket{\phi_{0,j}^\pm}\}$ in Lemma \ref{zerolemma} 
satisfies the third condition.  Thus, we define
\begin{align*}
    \ket{\psi_{0,j}^+} = \sum_{\ell=1}^n \alpha_{j\ell} \ket{\phi_{\ell,0}^+}.
\end{align*}
where $\alpha_{j\ell}\in\mathbf{C}$.
In order to ensure that $\braket{\psi_{0,j}^+|\tau_3|\psi_{0,\ell}^+}=\delta_{j\ell}$ the matrix $\alpha$ with 
elements given by $\alpha_{j\ell}$ must be unitary. Thus, out of the original $2n^2$  real parameters 
(two for each $\alpha_{j\ell}$), we are left with $n^2$ free. Now, we wish to impose the condition
\begin{align}\label{cc}
\mathcal{O}_{j\ell}\equiv \braket{\psi_{0,j}^+|\tau_3\mathcal{C}|\psi_{0,\ell}^+} = 0,
\end{align}
for all $j$ and $\ell$. Because of $\mathcal{O}_{j\ell}=-\mathcal{O}_{\ell j}$, Eq.\,(\ref{cc}) 
imposes $n(n-1)/2$ independent conditions. Taking the real and imaginary parts of $\mathcal{O}_{j\ell}=0$ 
yields $n(n-1)$ equations that must be satisfied. Given we have $n^2$ free parameters, we have enough 
freedom to ensure $\mathcal{O}_{j\ell}=0$ for all $j$ and $\ell$. The remaining $n$  free real parameters 
can be associated with the arbitrary phases of each $\ket{\psi_{0,j}^+}$. After choosing $\alpha_{j\ell}$ appropriately, we define $\ket{\psi^-_{0,j}}\equiv\mathcal{C}\ket{\psi^+_{0,\ell}}$. Then, the three conditions in Eq.\,(\ref{threeconds}) are satisfied. 

Moving to the higher order Jordan blocks, we wish to construct a set of $m$ eigenvectors $\ket{P_{0,j}}$ and $m$ generalized eigenvectors $\ket{Q_{0,j}}$ of $H_\tau$ satisfying
\begin{align}
H_\tau\ket{P_{0,j}} &= 0 , \label{pqcondsi}
\\
H_\tau\ket{Q_{0,j}} &= -\frac{i}{\mu_j}\ket{P_{0,j}} , \label{pqcondsii}
\\
\braket{Q_{0,j}|\tau_3|\psi^\pm_{0,\ell}} &= \braket{P_{0,j}|\tau_3|\psi^\pm_{0,\ell}}=0 , \label{pqcondsiii}
\\
\braket{P_{0,j}|\tau_3|P_{0,\ell}} &= 0 , \label{pqcondsiv}
\\
\braket{Q_{0,j}|\tau_3|Q_{0,\ell}} &=0 , \label{pqcondsv}
\\
\braket{Q_{0,j}|\tau_3|P_{0,\ell}} &= i\delta_{j\ell} , \label{pqcondsvi}
\\
\mathcal{C}\ket{P_{0,j}} &= -\ket{P_{0,j}} , \label{pqcondsvii}
\\
\mathcal{C}\ket{Q_{0,j}} &= -\ket{Q_{0,j}} . \label{pqcondsviii}
\end{align}
We claim that $P_{0,j} = \bra{P_{0,j}}\tau_3 \hat{\Phi}$ and $Q_{0,j} = \bra{Q_{0,j}}\tau_3 \hat{\Phi}$ 
provide the desired Hermitian operators. 

We begin by letting $\ket{\rho_{0,j}}$ and $\ket{\chi_{0,j}}$ denote 
the vectors defined in Lemma \ref{zerolemma} and define
\begin{align*}
\ket{P_{0,j}} &= \sum_{\ell=1}^m \beta_{j\ell}\ket{\rho_{0,\ell}},
\\
\ket{Q_{0,j}} &= \frac{-i}{\mu_j}\sum_{\ell=1}^m \beta_{j\ell}\ket{\chi_{0,\ell}}+\sum_{\ell=1}^m \gamma_{j\ell}\ket{\rho_{0,\ell}},
\end{align*}
with $\mu_j^{-1}\equiv \braket{Q_{0,j}|H_\tau|Q_{0,j}}>0$, $\beta_{j\ell},\gamma_{j\ell}\in\mathbf{C}$. By construction, 
conditions \eqref{pqcondsi}, \eqref{pqcondsii}, and \eqref{pqcondsiv} are satisfied. Furthermore, condition \eqref{pqcondsiii} 
is satisfied when the plus sign is chosen. This fact paired with conditions \eqref{pqcondsvii} and \eqref{pqcondsviii} 
will ensure that the minus sign portion of condition \eqref{pqcondsiii} will be satisfied. Thus, we will show that 
conditions \eqref{pqcondsvii} and \eqref{pqcondsviii} can be satisfied first. Moving forward, condition \eqref{pqcondsvi} 
imposes $m(m+1)/2$ constraints on the $2m^2$ free parameters $\beta_{j\ell}$, $\gamma_{j\ell}$. Condition 
\eqref{pqcondsvii} imposes $m$ more constraints. This leaves us with $3m(m-1)/2\geq 0$ free parameters. 
Now, if the vectors $\ket{Q_{0,j}}$ satisfy \eqref{pqcondsviii} then they will satisfy condition \eqref{pqcondsvi} 
for $j=\ell$. Noting that, by virtue of the charge conjugation properties of the vector $\ket{\psi^\pm_{0,j}}$,  the 
span of the vectors $\{\ket{Q_{0,j}},\ket{P_{0,j}}\}_{j=1}^m$ is invariant under the action of $\mathcal{C}$. Thus, we can write
\begin{align*}
\mathcal{C}\ket{Q_{0,j}} = \sum_{\ell=1}^m \left(z_{j\ell}\ket{Q_{0,\ell}} + w_{j\ell}\ket{P_{0,\ell}}\right),
\end{align*}
with $z_{j\ell},w_{j\ell}\in\mathbf{C}$. Projecting with $\bra{P_{0,k}}\tau_3$, and noting that for any vectors 
$\ket{v}$ and $\ket{w}$ we have $\braket{v|\mathcal{C}|w} = \braket{w|\mathcal{C}|v}$ and $\tau_3 
\mathcal{C}=-\mathcal{C} \tau_3$, we obtain $i\delta_{jk} = -iz_{jk}$. Thus
\begin{align*}
\mathcal{C}\ket{Q_{0,j}} = -\ket{Q_{0,j}}+ \sum_{\ell=1}^m w_{j\ell}\ket{P_{0,\ell}}.
\end{align*}
To ensure $\mathcal{C}\ket{Q_{0,j}}=-\ket{Q_{0,j}}$ we can shift $\ket{Q_{0,j}}$ by the appropriate linear 
combination of the vectors $\ket{P_{0,\ell}}$ to make the second term vanish. This imposes no more constraints 
and preserves the already satisfied conditions.  By imposing condition \eqref{pqcondsv} for $j\neq \ell$ we 
obtain $m(m-1)/2$ more constraints. Altogether we still retain $3m(m-1)/2-m(m-1)/2=m(m-1)\geq 0$ free parameters. 
Thus, all 8 conditions can be satisfied by the appropriate choice of constants $\beta_{j\ell}$ and $\gamma_{j\ell}$. 
\end{proof}

\section{Symmetry reduction}
\label{afterandbefore}

Let \(\{H_f\}\) denote an ensemble of BdG Hamiltonians that 
commute with \(Q_f\), where 
\begin{align*}
H_f=|\resizebox{0.88\hsize}{!}{$\uparrow\rangle\langle \uparrow\!|\otimes K +|\!\uparrow\rangle\langle \downarrow \! |\otimes \Delta- |\!\downarrow\rangle\langle \uparrow \!| \otimes \Delta^* -|\!\downarrow\rangle\langle \downarrow \!|\otimes K^*$}
\end{align*}
is written in terms of eigenvectors $|\!\!\uparrow\rangle\equiv \begin{bmatrix}1\\ 0 \end{bmatrix}$, $|\!\!\downarrow\rangle\equiv \begin{bmatrix}0\\ 1 \end{bmatrix}$ of \(\sigma_3\). 
We are interested in the symmetry reduction of the squared ensemble \(\{\tau_3H_f^2\}\), induced by a symmetry \(Q_f\) that commutes
with \(\tau_3\). The eigenvalues of \(Q_f\) are determined by the eigenvalues of the Hermitian operator \(q\) defined in Eq.\,\eqref{diagq}, the spectrum of which 
we denote as \(\sigma(q)\). The blocks of the squared ensemble consists of reduced effective BdG Hamiltonians with respect to a reduced metric. It is important for classification purposes to understand the general structure of both
the reduced metric and the reduced effective BdG Hamiltonian. 
There are three distinct cases to analyze. 

\smallskip

{\textit{Case 1: \(\kappa\in\sigma(q)\) and \(-\kappa \notin \sigma(q)\)}}. Let $|\kappa,\nu\rangle$, $\nu=1,\ldots, m$,  
denote a complete set of orthonormal eigenvectors of \(q\) for
the eigenvalue \(\kappa\), written in the mode basis of the original many-boson Hamiltonian. 
Since, by assumption, \(-\kappa\) is not an eigenvalue of \(q\), the eigenvectors
of \(Q_f\) associated to \(\kappa\) are \(|\! \! \uparrow\rangle|\kappa,\nu\rangle\) and the eigenvectors associated to 
\(-\kappa\) are \(|\! \! \downarrow\rangle|\kappa,\nu\rangle^*\), with \(|\kappa,\nu\rangle^*\equiv{\cal K}|\kappa,\nu\rangle\). The associated canonical fermions, partially labeled by the conserved quantum number \(\kappa\), are
\begin{align}
\label{alfa_fermions}
c_{\kappa,\nu}^\dagger&\equiv \hat{\Psi}^\dagger |\!\uparrow\rangle|\kappa,\nu\rangle= \hat{\psi}^\dagger|\kappa,\nu\rangle,\\
c^{\,}_{\kappa,\nu}&\equiv \hat{\Psi}^\dagger |\!\downarrow\rangle|\kappa,\nu\rangle^*= 
\hat{\psi}^\text{T}|\kappa,\nu\rangle^*=\langle \kappa,\nu|\hat{\psi}.
\end{align}

There is one and only one block of \(\widehat{H}_f\) featuring these degrees of freedom. To compute it, let 
\begin{align*}
P_{\kappa}=|\!\uparrow\rangle\langle \uparrow \! |\otimes \sum_{\nu=1}^m|\kappa,\nu\rangle\langle \kappa,\nu|
+|\!\downarrow\rangle\langle\downarrow\!|\otimes\sum_{\nu=1}^m|\kappa,\nu\rangle^*\langle\kappa,\nu|^*
\end{align*}
denote the projector onto the eigenstates of \(Q_f\) associated to \(\pm\kappa\). Then, the many-body block is  
\begin{align*}
\widehat{H}_{f,\kappa}=\frac{1}{2}\hat{\Psi}^\dagger P_{\kappa}H_fP_{\kappa}\hat{\Psi}=
\hat{\Psi}_\kappa^\dagger H_{f,\kappa}\hat{\Psi}_\kappa ,
\end{align*}
in terms of the Nambu array
$$\hat{\Psi}_\kappa^\dagger=\begin{bmatrix} c^\dagger_{\kappa,1}&\cdots&c^\dagger_{\kappa,m}&c^{\;}_{\kappa,1}&\cdots&c^{\;}_{\kappa,m}\end{bmatrix}.$$
By construction, 
\(
H_{f,\kappa}=\begin{bmatrix}K_\kappa &\Delta_\kappa\\ -\Delta_\kappa^* & -K_\kappa^*\end{bmatrix}.
\)
However,
\begin{align*}
[\Delta_\kappa]_{\nu\nu'}=\langle \kappa,\nu|\Delta|\kappa,\nu'\rangle^*=0
\end{align*}
because  \([H_f,Q_f]=0\) and \(\kappa\neq -\kappa\) by assumption (notice that complex conjugation affects only the nearest vector to the left). Hence, the many-body block is
\begin{align*}
\widehat{H}_{f,\kappa}=\hat{\psi}_\kappa^\dagger K_\kappa \hat{\psi}_\kappa^{\,}, 
\quad \hat{\psi}^\dagger_\kappa=\begin{bmatrix}c^\dagger_{\kappa,1}&\cdots&c_{\kappa,m}^\dagger
\end{bmatrix}, 
\end{align*}
and we see that the number of \(\kappa\) fermions is conserved. Finally, the squaring map ${\mathscr S}$ yields 
the free-boson system 
\begin{align*}
\widehat{H}_{b,\kappa}\equiv \frac{1}{2}\hat{\Phi}_\kappa^\dagger H_{f,\kappa}^2\hat{\Phi}_\kappa=
\hat{\phi}_\kappa^\dagger K_\kappa^2\hat{\phi}_\kappa^{\,} ,
\end{align*}
in terms of canonical bosons $a_{\kappa,\nu}^\dagger\equiv \hat{\phi}^\dagger|\kappa,\nu\rangle$. 

\smallskip

{\em Case 2: {\(0\in\sigma(q)\)}}.  Let \(|0,\nu\rangle\), $\nu=1,\ldots,m$,  denote a complete orthonormal set of eigenvectors of \(q\) belonging to zero eigenvalue. 
The associated fermionic degrees of freedom are just as in Eq.\,\eqref{alfa_fermions}, simply set 
\(\kappa =0 \). Proceeding as before we obtain the many-body block
\(\widehat{H}_{f,0}=\frac{1}{2}\hat{\Psi}^\dagger_0H_{f,0}\hat{\Psi}^{\;}_0\), where the 
single-particle and pairing blocks of \(H_{f,0}\) are  
\begin{align*}
[K_0]_{\nu\nu'}&=\langle 0,\nu|K|0,\nu'\rangle=[K_0]_{\nu'\nu}^*,\\
[\Delta_0]_{\nu\nu'}&=\langle 0,\nu|\Delta|0,\nu'\rangle^*=-[\Delta_0]_{\nu'\nu}.
\end{align*}
Unlike the previous case, the pairing term need not vanish because \(0=-0\). By comparison with Case 3 below, zero eigenvalue  is also special because no further reduction of the single-particle block \(\widehat{H}_{f,0}\) is possible. The squaring map induces the block transformation \(\tau_3H_{f,0}^2\)
in terms of \(\tau_3=\sigma_3\otimes \mathds{1}_m\). 

\smallskip

{\textit{Case 3: \(\kappa, -\kappa \in\sigma(q)\) and \(\kappa\neq0\)}}.   
This is the most elaborate case and, together with Case 2, it comprises 
familiar symmetries like spin rotations and lattice translations.
Let \(|\kappa, \nu\rangle\), $\nu=1,\ldots, m$, and \(|\!-\!\kappa,\bar \nu\rangle\), $\bar \nu=1,\ldots, n$, denote complete orthonormal sets of eigenvectors of \(q\) associated to the indicated eigenvalues. Notice that we do not assume identical degeneracy for \(\kappa\) and \(-\kappa\). They do coincide for spin and crystal momenta but that need not be the case in general. As we will see, the case \(m\neq n\) introduces exotic features into the bosonic problem. Now, \(\kappa, -\kappa\)
are also eigenvalues of \(Q_f\). The corresponding complete sets of orthonormal eigenvectors are
\(|\! \! \uparrow\rangle|\kappa,\nu\rangle\), \(|\!\downarrow\rangle|-\kappa,\bar\nu\rangle^*\),  and 
\(|\!\uparrow\rangle|-\kappa,\bar\nu\rangle\), \( |\!\downarrow\rangle|\kappa,\nu\rangle^*\), respectively. 
The fermionic degrees of freedom are
\begin{align*}
c^\dagger_{\kappa,\nu}\equiv \hat{\psi}^\dagger|\kappa,\nu\rangle,\quad\quad 
c^\dagger_{-\kappa,\bar\nu}\equiv\hat{\psi}^\dagger |-\kappa,\bar\nu\rangle .
\end{align*}

The many-body block can be calculated as before in terms of a projector \(P_{\pm\kappa}\) onto the subspace associated to the eigenvalues \(\pm\kappa\) of \(Q_f\). The resulting
many-body block can be characterized as \(\widehat{H}_{f,\pm\kappa}=\frac{1}{2}\hat{\Psi}_{\pm\kappa}^\dagger
H_{f,\pm\kappa}\hat{\Psi}_{\pm\kappa}\) in terms of the Nambu array
\begin{align*}
\hat{\Psi}_{\pm\kappa}^\dagger\resizebox{0.93\hsize}{!}{$\equiv\!\begin{bmatrix} c^\dagger_{\kappa,1}\cdots c^\dagger_{\kappa,m}\,c^\dagger_{-\kappa,1}\cdots c^\dagger_{-\kappa,n}\,c^{\;}_{\kappa,1}\cdots c^{\;}_{\kappa,m}\,c^{\;}_{-\kappa,1}\cdots c^{\;}_{-\kappa,n}\end{bmatrix}$}
\end{align*}
and the BdG Hamiltonian
\begin{align*}
\label{apparentred}
H_{f,\pm\kappa}=
\begin{bmatrix} 
K_1&0                                   & 0                & \Delta_1\\
0                &K_2                    & \Delta_2      & 0\\
0                &-\Delta_1^*        & -K_1^*       & 0\\
-\Delta_2^* & 0                    & 0                &-K_2^* 
\end{bmatrix},
\end{align*}
with \(K_1\) and $K_2$ Hermitian \(m\times m\) and \(n\times n\) matrices, respectively, and \(\Delta_1=-\Delta_2^\text{T}\) an \(m\times n\) rectangular matrix. Explicit expressions are (the conjugation operation acts only on the bra or ket directly to
the left of it)
\begin{align*}
[K_1]_{\nu\nu'}&=\langle\kappa,\nu|K|\kappa,\nu'\rangle \ , \      
[K_2]_{\bar\nu \bar\nu'}=\langle -\kappa,\bar\nu|^*K|\!-\!\kappa,\bar\nu'\rangle^*,\\
[\Delta_1]_{\nu\bar\nu}&=\langle \kappa,\nu|\Delta|-\kappa,\bar\nu\rangle^*=-[\Delta_2]_{\bar\nu \nu}.
\end{align*}

Note that the single-particle Hamiltonian \(H_{f,\pm \kappa}\) is ``reducible'' but the many-body block is not. So, we have 
\begin{align*}
\widehat{H}_{f,\pm\kappa}=\widetilde\Psi_{\pm\kappa}^\dagger H_{\pm\kappa}\widetilde\Psi^{\;}_{\pm\kappa}
-(\mbox{tr}\,K_1 -\mbox{tr}\,K_2) ,
\end{align*}
in terms of the \textit{generic} 
Hermitian matrix
\begin{eqnarray}
\label{unconstrainedpm}
H_{\pm\kappa}=\begin{bmatrix} K_1 &\Delta_1\\
-\Delta_{2}^* & -K_2^*
\end{bmatrix}
\end{eqnarray}
and the (not Nambu!) array 
\begin{align*}
\widetilde\Psi_{\pm\kappa}^\dagger=
\begin{bmatrix} 
c_{\kappa,1}^\dagger &\cdots & c_{\kappa,m}^\dagger&c^{\;}_{-\kappa,1}&\cdots&c^{\;}_{-\kappa,n}
\end{bmatrix}.
\end{align*}
This form of the many-body block makes it clear that it commutes with the charge
\begin{align*}
\widehat{N}_{\pm\kappa}=\sum_{\nu=1}^mc^\dagger_{\kappa,\nu}c^{\;}_{\kappa,\nu}-
\sum_{\bar\nu=1}^n c^\dagger_{-\kappa,\bar\nu}c^{\;}_{-\kappa,\bar\nu}.
\end{align*}
The free-boson system induced by the squaring map ${\mathscr S}$,
\begin{align*}
\widehat{H}_{b,\pm\kappa}&=
\frac{1}{2}\hat{\Phi}_{\pm\kappa}^\dagger H_{f,\pm\kappa}^2\hat{\Phi}_{\pm\kappa}\\
&=\widetilde{\Phi}_{\pm\kappa}^\dagger H_{\pm\kappa}^2\widetilde{\Phi}^{\;}_{\pm\kappa}+(\mbox{tr}K_1-\mbox{tr}K_2),
\end{align*}
is then written in terms of the ({\em not} Nambu!) array
\begin{align*}
\widetilde{\Phi}_{\pm\kappa}^\dagger=
\begin{bmatrix} 
a_{\kappa,1}^\dagger &\cdots & a_{\kappa,m}^\dagger&a^{\;}_{-\kappa,1}&\cdots&a^{\;}_{-\kappa,n}
\end{bmatrix},
\end{align*}
of canonical bosons \(a_{\kappa,\nu}^\dagger=\hat{\phi}^\dagger |\kappa,\nu\rangle,\,
a^{\;}_{-\kappa,\bar\nu}=\langle -\kappa,\bar\nu|\hat{\phi}\), and the generic Hermitian 
\(H_{\pm\kappa}\) of Eq.\,\eqref{unconstrainedpm}. Notice that 
\begin{align*}
[\widetilde{\Phi}_{\pm\kappa}, \widetilde{\Phi}_{\pm\kappa}^\dagger]=\tau_{m,n},\quad 
\tau_{m,n}\equiv\begin{bmatrix} \mathds{1}_m &0\\ 0 &-\mathds{1}_n\end{bmatrix}.
\end{align*}

Hence, the irreducible effective BdG Hamiltonian is \(\tau_{m,n}H_{f,\pm\kappa}^2\), 
which is \(\tau_{m,n}\) pseudo-Hermitian. The pseudo-unitary transformations associated to Gaussian isometries of the array \(\widetilde{\Phi}_{\pm \kappa}\) satisfy \(U\tau_{m,n}U^\dagger=\tau_{m,n}\). The case \(m=n\) is certainly well understood, as we have seen. The geometric and algebraic features of the case \(m\neq n\) are treated in the mathematical literature under the heading of ``indefinite linear algebra,'' see Ref.\,[\onlinecite{Gohberg}]. 

\medskip
\medskip
\medskip

\section{On general stable bosonic ensembles}
\label{simpleproof}

We begin with the observation (see  Table\,\ref{tab:Particle_non-conserving_squared_reducible}) that the time-reversal symmetry of $\tau_3\widetilde{H}_f^2(\mathbf{k})$ always commutes with $\tau_3$, while both particle-hole  and chiral symmetries of $\tau_3\widetilde{H}_f^2(\mathbf{k})$ always anticommute with $\tau_3$. 
To see that Table\,\ref{tab:Bosonic_periodic_table} 
also holds for ensembles of stable free-boson systems not arising from the squaring procedure, we need to show that the time-reversal symmetry of $\widetilde{H}_\tau(\mathbf{k})\equiv \tau_3\widetilde{H}_b(\mathbf{k})$ (with $\widetilde{H}_b^\dagger(\mathbf{k})=\widetilde{H}_b(\mathbf{k})$, $\widetilde{H}_b(\mathbf{k})>0$) cannot anticommute with $\tau_3$, and particle-hole or 
chiral symmetries cannot commute with $\tau_3$. 
We show that this is the case by contradiction.

(a) Suppose that the time-reversal symmetry $U_T^\dagger\mathcal{K}$, $U_T^\dagger\widetilde{H}_\tau^*(\mathbf{-k})U_T=\widetilde{H}_\tau(\mathbf{k})$, anticommutes with $\tau_3$, i.e., $\{\tau_3, U_T\}=0$. Then we have $U_T^\dagger\widetilde{H}_b^*(\mathbf{-k})U_T=-\widetilde{H}_b(\mathbf{k})$, which translates into a particle-hole symmetry of $\widetilde{H}_b(\mathbf{k})$, thus violating the stability condition $\widetilde{H}_b(\mathbf{k})>0$.

(b) Suppose that the particle-hole symmetry $U_C^\dagger\mathcal{K}$, $U_C^\dagger\widetilde{H}_\tau^*(\mathbf{-k})U_C=-\widetilde{H}_\tau(\mathbf{k})$, commutes with $\tau_3$, i.e., $[\tau_3, U_C]=0$. 
Then we have $U_C^\dagger\widetilde{H}_b^*(\mathbf{-k})U_C=-\widetilde{H}_b(\mathbf{k})$, which also translates into a particle-hole symmetry of $\widetilde{H}_b(\mathbf{k})$, thus violating the stability condition $\widetilde{H}_b(\mathbf{k})>0$.

(c) Suppose that the chiral symmetry $U_S^\dagger$, 
$U_S^\dagger\widetilde{H}_\tau(\mathbf{k})U_S=-\widetilde{H}_\tau(\mathbf{k})$, commutes with $\tau_3$, i.e., $[\tau_3, U_S]=0$. 
Then we have $U_S^\dagger\widetilde{H}_b(\mathbf{k})U_S=-\widetilde{H}_b(\mathbf{k})$, which translates into a chiral symmetry of $\widetilde{H}_b(\mathbf{k})$, thus violating the stability condition $\widetilde{H}_b(\mathbf{k})>0$. 

Thus, we expect that our Table\,\ref{tab:Bosonic_periodic_table} also holds for generic stable free-boson systems with metric $\tau_3$.  For the ``unbalanced" metric $\tau_{m,n}$ (see Table\,\ref{squared_subensembles}), only time-reversal symmetry is possible, and with simple modifications, the reasoning (a) above is still valid.

A simple argument points to the importance of the stability condition in establishing the threefold way classification. 
Let \(\{H_\tau(\mathbf{k})=\tau_3H_b(\mathbf{k})\}\) denote an ensemble of effective Bloch BdG Hamiltonians with  \(H_b(\mathbf{k})>0\), but arbitrary otherwise. By block-diagonalizing the linear Gaussian many-body symmetries of the ensemble, one obtains a family of sub-ensembles \(\{H_\tau^{(i)}(\mathbf{k})\}\) and associated metrics  
\(\tau^{(i)}_{m,n}\). This characterization of the sub-ensembles 
relies heavily on the assumption \(H_b(\mathbf{k})>0\) which guarantees that the quasiparticles of \(H_b(\mathbf{k})\) are canonical bosons. Moreover, because it is true of the parent ensemble, the Hermitian Bloch Hamiltonians \(\tau_{m,n}^{(i)}H_\tau^{(i)}(\mathbf{k})\) satisfy  $\tau_{m,n}^{(i)}H_\tau^{(i)}(\mathbf{k})>0$ necessarily. 
Hence, if irreducible, the Hermitian ensembles \(\{\tau_{m,n}^{(i)}H_\tau^{(i)}(\mathbf{k})\}\) can only  be of class A, AI, or AII according to the usual tenfold way.


\end{document}